\numberwithin{equation}{section}	
\theoremstyle{plain}
\newtheorem{example}{Example}[section]
\theoremstyle{plain}
\newtheorem{remark}{Remark}[section]
\theoremstyle{plain}
\newtheorem{prop}{Proposition}[section]
\theoremstyle{plain}
\newtheorem{lemma}{Lemma}[section]
\theoremstyle{plain}
\newtheorem{theorem}{Theorem}[section]
\theoremstyle{plain}
\newtheorem{corollary}{Corollary}[section]
\theoremstyle{plain}
\newtheorem{definition}{Definition}[section]
\theoremstyle{assumption}
\newtheorem{assumption}{Assumption}[section]
\theoremstyle{plain}
\DeclarePairedDelimiter{\abs}{\lvert}{\rvert}
\DeclareMathOperator*{\esssup}{ess\,sup}
\DeclareMathOperator*{\essinf}{ess\,inf}
\titleformat{\section}[block]{\large\bfseries\centering}{\thesection.}{1em}{} 
\titleformat{\subsection}[block]{\large\bfseries}{\thesubsection.}{1em}{} 
\title{A BSDE-based approach for the optimal reinsurance problem under partial information} 
\author{%
\textsc{Brachetta M.}\thanks{Department of Economics, University of Chieti-Pescara, Viale Pindaro, 42 - 65127 Pescara, Italy.} \thanks{Corresponding author.} \\[1ex]
\normalsize \href{mailto:matteo.brachetta@unich.it}{matteo.brachetta@unich.it}
\and
\textsc{Ceci, C.}\footnotemark[1] \\[1ex]
\normalsize \href{mailto:c.ceci@unich.it}{c.ceci@unich.it}
}
\date{} 
\providecommand{\keywords}[1]{\textbf{\textit{Keywords:}} #1}
\providecommand{\jelcodes}[1]{\textbf{\textit{JEL Classification codes:}} #1}
\providecommand{\msccodes}[1]{\textbf{\textit{MSC Classification codes:}} #1}
\begin{document}
\maketitle

\begin{abstract}
\noindent 
We investigate the optimal reinsurance problem under the criterion of maximizing the expected utility of terminal wealth when the insurance company has restricted information on the loss process. We propose a risk model with claim arrival intensity and claim sizes distribution affected by an unobservable environmental stochastic factor. By filtering techniques (with marked point process observations), we reduce the original problem to an equivalent stochastic control problem under full information. Since the classical Hamilton-Jacobi-Bellman approach does not apply, due to the infinite dimensionality of the filter, we choose an alternative approach based on Backward Stochastic Differential Equations (BSDEs). Precisely, we characterize the value process and the optimal reinsurance strategy in terms of the unique solution to a BSDE driven by a marked point process.
\end{abstract}

\noindent\keywords{Optimal reinsurance, partial information, stochastic control, backward stochastic differential equations.}\\
\noindent\jelcodes{G220, C610.}\\
\noindent\msccodes{93E20, 91B30, 60G35, 60G57, 60J75.}\\



\section{Introduction}

The aim of this paper is to investigate the optimal reinsurance problem when the insurer has only limited information at disposal. Insurance business requires very effective tools to manage risks and reinsurance arrangements are considered incisive to this end. From the operational viewpoint, a risk-sharing agreement helps the insurer reducing unexpected losses, stabilizing operating results, increasing business capacity and so on. The existing literature mostly concerns classical reinsurance contracts such as the proportional and the excess-of-loss, which were widely investigated under a variety of optimization criteria (see \cite{irgens_paulsen:optcontrol}, \cite{liuma:optreins}, \cite{BC:IME2019} and references therein). All these papers can be gathered in two main groups, depending on the underlying risk model: some authors describe the insurer's loss process as a diffusion model (this approach is motivated by the Cram\'er-Lundberg approximation); others use jump processes, as in our case.

The common ground of the majority of those papers is the complete information setting. However, in the real world the insurer has only a partial information at disposal. In fact, only the claims occurrences (times and sizes) are directly observable. Precisely, the claims intensity is a mathematical object and it is required by all the risk models, but its realizations are not observed by economic agents (as mentioned in \cite[Chapter 2]{grandell:risk}). In practice, the insurer relies on an estimation, which is based on the information at disposal. The same applies to the claim sizes distribution, which is estimated by the accident realizations. In \cite{liangbayraktar:optreins} we recognize a noteworthy attempt to introduce a partial information framework. At first, they introduce a stochastic factor $Y$ which influences the risk process. As discussed in \cite{grandell:risk}, this external driver $Y$ represents any environmental alteration reflecting on risk fluctuations (for a discussion in a complete information context see also \cite{BC:IME2019}). Then, they suppose that $Y$ is not observable. Consequently, the intensity is unobservable itself. Since $Y$ is a finite-state Markov chain in that work, the classical Hamilton-Jacobi-Bellman (HJB) approach works well after the reduction to an equivalent problem with complete information (this result is achieved by means of the filtering techniques).

In our paper we study the optimal reinsurance problem under partial information. The insurer wishes to maximize the expected exponential utility of the terminal wealth, using the information at disposal. We propose a risk model with claim arrival intensity and claim sizes distribution affected by an unobservable environmental stochastic factor $Y$. More specifically, the loss process is a marked point process with dual predictable projection dependent on $Y$, extending the Cram\`er-Lundberg model (where a Poisson process with constant intensity is used). In contrast to \cite{liangbayraktar:optreins}, here $Y$ is a general Markov process (including finite-state Markov chains, diffusions and jump-diffusions as special cases). Using filtering techniques with marked point process observations, the original problem is reduced to an equivalent stochastic control problem under complete information. Since the filter process turns out to be infinite-dimensional, the classical HJB method does not apply and we use a Backward Stochastic Differential Equation (BSDE)-based approach. Precisely, we characterize the value process and the optimal reinsurance strategy in terms of the solution to a BSDE, whose existence and uniqueness are ensured under suitable hypotheses. This is a well established approach in the financial literature, indeed several papers (see e.g. \cite{karoui:1997}, \cite{ceci:2011DEF}, \cite{lim2011}, \cite{ceci:2011} and \cite{ceci:2012IJTAF} and references therein) deal with stochastic optimization problems in finance by means of BSDEs. The recent book \cite{delong2013} applies BSDE techniques also to actuarial problems, extending the classical mathematical tools in this field. 


Moreover, we model the insurance gross risk premium and the reinsurance premium as stochastic processes. Clearly, they are adapted to the filtration which represents the restricted information, since the insurance and the reinsurance companies choose the premium based on the information at disposal. 

Another important peculiarity of our work is that we consider a generic reinsurance contract, which is characterized by the self-insurance function (which represents the insurer's retained losses). Hence the retention level is chosen in the interval $[0,I]$, with $I\in(0,+\infty]$. Evidently, the proportional and the excess-of-loss optimal policies can be derived as special cases.

Finally, we allow the insurer to invest the surplus in a risk-free asset with interest rate $R>0$. The absence of a financial market with a risky asset is not restrictive. In fact, the existing literature (e.g. \cite{BC:IME2019}) have shown that the optimal reinsurance strategy only depends on the risk-free asset, even in presence of a risky asset, under the standard assumption of independence between the financial and the insurance markets. In this case, the investment strategy can be eventually determined using one of the well known results on this topic.

The paper is organized as follows: in Section \ref{section:formulation} the model is formulated and the problem is introduced. In particular, the original problem with partial information is reduced to an equivalent problem with complete information via filtering with marked point process observations. Some details about filtering results can be found in the Appendix. In Section \ref{section:BSDE} we derive a complete characterization of the value process in terms of a solution to a BSDE, whose existence and uniqueness are discussed. In addition to this, we prove the existence of an optimal reinsurance strategy under suitable conditions. Section \ref{section:optimal_reinsurance} is devoted to investigate the structure of the optimal reinsurance strategy. Finally, in Section \ref{section:comparison} we investigate some properties of the optimal reinsurance strategy, such as the Markovianity with respect to the filter process, and we discuss some relevant examples. In particular, the effect of the safety loading is analyzed and a comparison with the optimal strategy under full information is illustrated.



\section{Problem formulation}
\label{section:formulation}

\subsection{Model formulation}

Let $T>0$ be a finite time horizon and assume that $(\Omega,\mathcal{G},\mathbb{P},\mathbb{G})$ is a complete probability space endowed with a filtration $\mathbb{G}\doteq\{\mathcal{G}_t\}_{t\in [0,T]}$ satisfying the usual conditions. This filtration represents all the achievable information, so that the knowledge of $\mathbb{G}$ means full information. We assume that the insurance market is influenced by an external driver $Y=\{Y_t\}_{t\in [0,T]}$, modeled as a c\`adl\`ag Markov process with infinitesimal generator $\mathcal{L}^Y$. Clearly, the sigma-algebra $\mathbb{F}^Y$ generated by $Y$ is included in $\mathbb{G}$, that is $\mathcal{F}^Y_t\subseteq\mathcal{G}_t$ $\forall t\in[0,T]$. For instance, $Y$ could be a finite-state Markov chain, a diffusion process, a jump-diffusion and so on.
This stochastic factor represents any environmental alteration reflecting on risk fluctuations. In practice, as suggested by Grandell, J. (see~\cite{grandell:risk}, Chapter 2), in automobile insurance $Y$ may describe road conditions, weather conditions (foggy days, rainy days, \dots), traffic volume, and so on (see also~\cite{BC:IME2019}).

The insurer's losses are described by the double sequence $\{(T_n,Z_n)\}_{n=1,\dots}$, where
\begin{itemize}
\item $\{T_n\}_{n\ge1}$ is a sequence of $\mathbb{G}$-stopping times such that $T_n<T_{n+1}$ $\mathbb{P}$-a.s. $\forall n\ge1$, representing the claims arrival times;
\item $\{Z_n\}_{n\ge1}$ is a sequence of $\mathcal{G}_{T_n}$-measurable and $(0,+\infty)$-valued random variables, which are the claims amounts.
\end{itemize}

The corresponding random measure $m(dt,dz)$ is given by
\begin{equation}
\label{eqn:random_measure}
m(dt,dz) \doteq \sum_{n\ge1}{\delta_{(T_n,Z_n)}(dt,dz)}\mathbbm{1}_{\{T_n\le T\}},
\end{equation}
where $\delta_{(t,z)}$ denotes the Dirac measure at point $(t,z)$. The marked point process $m(dt,dz)$ is characterized by the next hypotheses.

We propose a risk model with both the claims intensity and the claim sizes distribution affected by the stochastic factor $Y$. For this purpose, we use the following assumption.

\begin{assumption}
\label{ass:randommeasure}
Given a measurable function $\lambda(t,y):[0,T]\times\mathbb{R}\to(0,+\infty)$, let us define the $\mathbb{G}$-predictable process $\{\lambda_t\doteq\lambda(t,Y_{t^-})\}_{t\in [0,T]}$. Suppose that there exists a constant $\Lambda>0$ such that
\begin{equation}
\label{eqn:intensity_bounded}
0< \lambda(t,y) \le \Lambda \qquad \forall(t,y)\in[0,T]\times\mathbb{R}.
\end{equation}
In addition to this, suppose that there exists a probability transition kernel $F_Z(t,y,dz)$ from $([0,T]\times\mathbb{R},\mathcal{B}([0,T])\otimes\mathcal{B}(\mathbb{R}))$ into $([0,+\infty),\mathcal{B}([0,+\infty)))$ such that
\begin{equation}
\label{eqn:z2_finite}
\mathbb{E}\biggl[\int_0^T\int_0^{+\infty} z^2\,F_Z(t,Y_t,dz)\,dt\biggr]< +\infty.
\end{equation}

Then we assume that $m(dt,dz)$ admits the following $\mathbb{G}$-dual predictable projection:
\begin{equation}
\label{eqn:Gdual_projection}
\nu(dt,dz) = \lambda_tF_Z(t,Y_{t^-},dz)\,dt,
\end{equation}
i.e. for every nonnegative, $\mathbb{G}$-predictable and $[0,+\infty)$-indexed process $\{H(t,z)\}_{t\in[0,T]}$ we have that
\[
\mathbb{E}\biggl[\int_0^T\int_0^{+\infty}H(t,z)\,m(dt,dz)\biggr]=\mathbb{E}\biggl[\int_0^T\int_0^{+\infty} H(t,z)\,\lambda_tF_Z(t,Y_t,dz)\,dt\biggr].
\]
\end{assumption}

We will denote by $\mathbb{F}\doteq\{\mathcal{F}_t\}_{t\in[0,T]}$ the filtration generated by $m(dt,dz)$, that is
\begin{equation}
\label{eqn:filtrationF}
\mathcal{F}_t = \sigma\{ m((0,s]\times A), s\le t, A\in\mathcal{B}([0,+\infty)) \}.
\end{equation}
Using the marked point processes theory%
\footnote{For details on this topic see~\cite{bremaud:pointproc}.},
it is possible to obtain a precise interpretation of $\{\lambda_t\}_{t\in [0,T]}$ and $F_Z(t,y,dz)$ separately.

Let us denote by $N_t = m((0,t] \times [0,+\infty)) = \sum_{n\ge1} \mathbbm{1}_{\{T_n\le t\}}$ the claims arrival process, which counts the number of occurred claims. According to the definition of dual predictable projection, choosing $H(t,z) = H_t$ with $\{H_t\}_{t\in[0,T]}$ any nonnegative $\mathbb{G}$-predictable process, we get that
\begin{equation*}
\mathbb{E}\biggl[\int_0^T\int_0^{+ \infty}H_t \,m(dt,dz)\biggr]=\mathbb{E}\biggl[\int_0^T  H_t\,dN_t \biggr]  = \mathbb{E}\biggl[\int_0^TH_t \,\lambda_t\,dt\biggr] ,
\end{equation*}
i.e., $\{N_t\}_{t\in [0,T]}$ is a point process with $\mathbb{G}$-intensity $\{\lambda_t\}_{t\in[0,T]}$. 

Moreover, $F_Z(t, y,dz)$ can be interpreted as the conditional distribution of the claim sizes given the knowledge of the stochastic factor.
\begin{prop}
$\forall n=1, \dots$ and $\forall A\in \mathcal{B}([0,+\infty))$
\[
\mathbb{P}[Z_n \in A\mid \mathcal{G}_{T_n^-} ] = \int_A F_Z(T_n,Y_{T_n^-},dz)
=\mathbb{P}[Z_n \in A\mid \mathcal{F}^Y_{T_n^-} ]  \qquad  \mathbb{P}\text{-a.s.},
\]
where $\mathcal{G}_{T_n^-}$ is the strict past of the $\sigma$-algebra until time $T_n$:
$$\mathcal{G}_{T_n^-} := \sigma\{ A \cap \{ t < T_n\},  A \in \mathcal{G}_t,  t \in [0,T]\},$$
and $\mathcal{F}_{T_n^-}$ is defined similarly.
\end{prop}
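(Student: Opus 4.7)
The plan is to exploit the defining identity of the $\mathbb{G}$-dual predictable projection given in Assumption~\ref{ass:randommeasure} with a family of test functions $H(t,z)$ tailored to isolate the $n$-th jump, obtain an integrated identity characterising the law of $Z_n$ on $\mathcal{G}_{T_n^-}$, and then conclude by a standard generating property of the strict past $\sigma$-algebra.

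Concretely, I fix $A\in\mathcal{B}([0,+\infty))$ and pick any bounded $\mathbb{G}$-predictable process $\{U_t\}_{t\in[0,T]}$. I set $H(t,z)\doteq U_t\,\mathbbm{1}_{\{N_{t^-}=n-1\}}\,\mathbbm{1}_A(z)$, which is $\mathbb{G}$-predictable since $N_{t^-}$ is. Because $N_{T_k^-}=k-1$, only the term $k=n$ contributes on the jump side, giving
$$\mathbb{E}\!\left[\int_0^T\!\!\int_0^\infty H(t,z)\,m(dt,dz)\right]=\mathbb{E}\!\left[U_{T_n}\mathbbm{1}_A(Z_n)\mathbbm{1}_{\{T_n\le T\}}\right].$$
On the compensator side, the integrand factorises as $\tilde U_t\,\lambda_t$ with $\tilde U_t\doteq U_t\,\mathbbm{1}_{\{N_{t^-}=n-1\}}\int_A F_Z(t,Y_{t^-},dz)$ itself $\mathbb{G}$-predictable, and the interpretation of $\lambda_t$ as the intensity of $N$ (established right after Assumption~\ref{ass:randommeasure}) yields
$$\mathbb{E}\!\left[\int_0^T \tilde U_t\,\lambda_t\,dt\right]=\mathbb{E}\!\left[\int_0^T \tilde U_t\,dN_t\right]=\mathbb{E}\!\left[U_{T_n}\int_A F_Z(T_n,Y_{T_n^-},dz)\,\mathbbm{1}_{\{T_n\le T\}}\right],$$
where again only $k=n$ survives. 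Equating the two sides produces
$$\mathbb{E}\!\left[U_{T_n}\mathbbm{1}_A(Z_n)\mathbbm{1}_{\{T_n\le T\}}\right]=\mathbb{E}\!\left[U_{T_n}\int_A F_Z(T_n,Y_{T_n^-},dz)\,\mathbbm{1}_{\{T_n\le T\}}\right]$$
for every bounded $\mathbb{G}$-predictable $U$.

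The first equality in the Proposition then follows from the standard fact that $\mathcal{G}_{T_n^-}$ coincides with the $\sigma$-algebra generated by $\{U_{T_n}:U \text{ bounded } \mathbb{G}\text{-predictable}\}$; the inclusion ``$\subseteq$'' is seen on the generators $B\cap\{t<T_n\}$ (take $U_s\doteq\mathbbm{1}_B\,\mathbbm{1}_{(t,T]}(s)$ with $B\in\mathcal{G}_t$, which is left-continuous adapted, hence predictable), and ``$\supseteq$'' is the usual measurability of predictable processes evaluated at a stopping time, followed by a monotone class argument. For the second equality, the kernel hypothesis on $F_Z$ implies that $(t,y)\mapsto\int_A F_Z(t,y,dz)$ is Borel measurable, so $\int_A F_Z(T_n,Y_{T_n^-},dz)$ is $\mathcal{F}^Y_{T_n^-}$-measurable by càdlàg regularity of $Y$; since $\mathcal{F}^Y_{T_n^-}\subseteq \mathcal{G}_{T_n^-}$, the tower property gives
$$\mathbb{P}[Z_n\in A\mid\mathcal{F}^Y_{T_n^-}]=\mathbb{E}\!\left[\mathbb{P}[Z_n\in A\mid\mathcal{G}_{T_n^-}]\,\big|\,\mathcal{F}^Y_{T_n^-}\right]=\int_A F_Z(T_n,Y_{T_n^-},dz).$$

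The main obstacle I anticipate is the identification step that turns the integrated identity into a statement about the regular conditional probability, i.e.\ the generation property of $\mathcal{G}_{T_n^-}$ by values of $\mathbb{G}$-predictable processes at $T_n$; this is standard in marked point process theory (cf.~\cite{bremaud:pointproc}) but is the one non-routine ingredient. The remaining steps — predictability of $\tilde U$, factorisation of $\nu$, and Borel-measurability of the $F_Z$-integral in $(t,y)$ — are routine consequences of Assumption~\ref{ass:randommeasure}.
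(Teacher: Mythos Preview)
Your argument is correct. The paper itself does not give a proof of this proposition: it simply refers to \cite[Proposition~1]{BCrisks}. Your route---testing the dual predictable projection against $H(t,z)=U_t\,\mathbbm{1}_{\{N_{t^-}=n-1\}}\,\mathbbm{1}_A(z)$ to isolate the $n$-th jump, then invoking the identification of $\mathcal{G}_{T_n^-}$ with the trace of $\mathbb{G}$-predictable processes at $T_n$---is the standard one in the marked point process literature (e.g.\ \cite{bremaud:pointproc}), and is essentially what the cited reference does. One small point worth making explicit: the identity you obtain lives on $\{T_n\le T\}$, which is all that is meaningful here since $Z_n$ is only defined as a $\mathcal{G}_{T_n}$-measurable random variable when $T_n\le T$; and for the second equality you should note that the evaluation of an $\mathbb{F}^Y$-predictable process at the (non-$\mathbb{F}^Y$) random time $T_n$ is still $\mathcal{F}^Y_{T_n^-}$-measurable, which follows by approximation with simple predictable processes and the fact that $T_n$ itself is $\mathcal{F}^Y_{T_n^-}$-measurable by construction of the strict past.
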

\begin{proof}
See~\cite[Proposition 1]{BCrisks}.
\end{proof}

We define the cumulative claims up to time $t\in[0,T]$ as follows:
\begin{equation}
\label{eqn:C_t_def}
C_t = \sum_{n=1}^{N_t}Z_n = \int_0^t\int_0^{+\infty} z\,m(ds,dz).
\end{equation}

Let us observe that our model formulation is able to fit some well known risk models (for which the reader can refer to \cite{rolski:insurancefin} or \cite{schmidli:2018risk}).

\begin{example}[Cram\'er-Lundberg Risk Model]
\label{example:cramer}
If we consider a constant intensity $\lambda(t,y)=\lambda$ and a distribution function $F_Z(t, y,dz)=F_Z(dz)$, then we obtain the classical Cram\'er-Lundberg risk model.
\end{example}

\begin{example}[Markov Modulated Risk Model]
\label{example:MMRM}
Suppose that the stochastic factor $Y$ is a continuous time irreducible Markov process with a finite state space ${\cal S} = \{1, \dots, M\}$, with $M\ge2$. Taking $\lambda(y)$ and $F_Z(y,dz)$ we obtain the so called Markov Modulated Risk Model. Equivalently, we can associated $M$ constants $\{\lambda_i\}_{i=1,\dots,M}$ and distribution functions $\{F_Z^i(dz)\}_{i=1,\dots,M}$ to each state of $Y$. Correspondingly, we can define $M$ independent classical risk models (as in Example \ref{example:cramer}), with loss processes $\{C^i\}_{i=1,\dots,M}$ such that Eq. \eqref{eqn:C_t_def} becomes
\[
C_t = \int_0^t\sum_{i\in\cal{S}}\mathbbm{1}_{Y_t=i}dC^i_s .
\]
Eventually, without loss of generality we could assume that
\[
\lambda_i\int_0^{+\infty}zF_Z^i(dz) \le \lambda_j\int_0^{+\infty}zF_Z^j(dz) \qquad \forall i,j\in\{1, \dots, M\}, i<j.
\]
\end{example}


\subsection{Problem statement}

In the rest of the paper we suppose that the insurer is not able to get access to the complete information $\mathbb{G}$. In contrast, at any time $t\in[0,T]$ she is allowed to observe only these objects:
\begin{itemize}
\item the occurred claims times, i.e. the jump times of $m(dt,dz)$ up to time $t$;
\item the occurred claims size, i.e. the marks of $m(dt,dz)$ up to time $t$.
\end{itemize}
More formally, the information flow at insurer's disposal is described by $\mathbb{F}\subseteq\mathbb{G}$, defined in Eq. \eqref{eqn:filtrationF}. In fact, in risk theory the claims intensity is a mathematical object and its realizations are not directly observed by economic agents (see \cite[Chapter 2]{grandell:risk}). In practice the insurer relies on an estimation of the intensity and this is based on the information at disposal, which is made of the accidents realizations. This is the basic idea behind the filtering techniques. We further extend this concept to the claim sizes distribution, which is included in the filter. That is, the insurer estimates the intensity and the size distribution at the same time.

In this framework we suppose that the gross risk premium rate $\{c_t\}_{t\in [0,T]}$ is an $\mathbb{F}$-predictable nonnegative process (the insurance company chooses the premium based on the information flow)
 such that
\begin{equation}
\label{eqn:cpremium_int}
\mathbb{E}\biggl[\int_0^T c_t\,dt\biggr]< +\infty.
\end{equation}

The insurer can subscribe a generic reinsurance contract with retention level $u\in[0,I]$, where $I>0$ (eventually $I=+\infty$), transferring part of her risks to the reinsurer. More precisely, we model the retained losses using a generic self-insurance function $g(z,u)\colon[0,+\infty)\times[0,I]\to[0,+\infty)$ which characterizes the reinsurance agreement.
\begin{remark}
\label{g_properties}
Here we recall some useful properties of the self-insurance function according to the classical risk theory%
\footnote{See \cite[Chapter 4]{schmidli:control} or \cite{schmidli:2018risk}.}:
\begin{itemize}
\item $g$ is increasing in both the variables $z,u$; moreover, it is continuous in $u\in[0,I]$;
\item $g(z,u)\le z$ $\forall u\in[0,I]$, because the retained loss is always less or equal than the claim amount;
\item $g(z,0)=0$ $\forall z\in[0,+\infty)$, because $u=0$ is the full reinsurance;
\item $g(z,I)=z$ $\forall z\in[0,+\infty)$,  because $u=I$ is the null reinsurance.
\end{itemize}
\end{remark}

Our general formulation includes standard reinsurance agreements as special cases.

\begin{example}
\label{example:contracts}
Under a proportional reinsurance the insurer transfers a percentage $u$ of any future loss, hence $I=1$ and
\[
g(z,u)=uz, \qquad u\in[0,1].
\]
Under an excess-of-loss policy the reinsurer covers all the losses which overshoot a threshold $u$, that is $I=+\infty$ and
\[
g(z,u)=z\land u, \qquad u\in[0,+\infty).
\]
\end{example}

In order to continuously buy a reinsurance agreement, the primary insurer pays a reinsurance premium $\{q^u_t\}_{t\in [0,T]}$, which is an $\mathbb{F}$-predictable 
nonnegative process satisfying the following assumption.

\begin{assumption}(Reinsurance premium)
\label{def:reinsurance_premium}
We assume that the reinsurance premium admits the following representation:
\[
q^u_t(\omega) = q(t,\omega,u) \qquad \forall (t,\omega,u)\in[0,T]\times\Omega\times[0,I],
\]
for a given function $q(t,\omega,u)\colon[0,T]\times\Omega\times[0,I]\to[0,+\infty)$ continuous and decreasing in $u$, with partial derivative $\frac{\partial q(t,\omega,u)}{\partial u}$ continuous in $u$.
In the rest of the paper $\frac{\partial q(t,\omega,0)}{\partial u}$ and $\frac{\partial q(t,\omega,I)}{\partial u}$ are interpreted as right and left derivatives, respectively.\\
In the sequel it is natural to assume that
\[
q(t,\omega,I)=0 \qquad \forall (t,\omega)\in[0,T]\times\Omega,
\]
because a null protection is not expensive. Moreover, we prevent the insurer from gaining a risk-free profit by assuming that
\[
q(t,\omega,0)>c_t \qquad \forall (t,\omega)\in[0,T]\times\Omega.
\]
The reinsurance premium associated with a dynamic reinsurance strategy $\{u_t\}_{t\in [0,T]}$ will be denoted by $\{q^u_t\}_{t\in [0,T]}$ as well, with the obvious meaning depending on context.\\
Finally, we assume the following integrability condition:
\[
\mathbb{E}\biggl[ \int_0^T q^0_t\,dt \biggr] <+\infty.
\]
\end{assumption}
As mentioned above, the premia are $\mathbb{F}$-predictable. This is a natural assumption in our context, because all the economic agents decisions are based on the common available information, which is described by $\mathbb{F}$.

Under these hypotheses, the surplus (or reserve) process associated with a given reinsurance strategy $\{u_t\}_{t\in [0,T]}$ is described by the following SDE:
\begin{equation}
\label{eqn:surplus_process}
dR^u_t = \bigl[c_t-q^u_t\bigl]dt - \int_0^{+\infty}g(z,u_t)\,m(dt,dz), \qquad R^u_0= R_0\in\mathbb{R}^+.
\end{equation}

Furthermore, we allow the insurer to invest her surplus in a risk-free asset (bond or bank account) with constant rate $R>0$. As a consequence, the insurer's wealth $\{X^u_t\}_{t\in [0,T]}$ associated with a given strategy $\{u_t\}_{t\in [0,T]}$ follows this dynamic:
\begin{equation}
\label{eqn:wealth_proc}
dX^u_t = dR^u_t+ RX^u_t\,dt, \qquad X^u_0=R_0\in\mathbb{R}^+.
\end{equation}

\begin{remark}
It can be verified that the solution to the SDE~\eqref{eqn:wealth_proc} is given by
\begin{equation}
\label{eqn:wealth_sol}
X^u_t = R_0e^{Rt} + \int_0^t e^{R(t-r)}\bigl[c_r-q^u_r\bigr]\,dr
-\int_0^t\int_0^{+\infty} e^{R(t-r)}g(z,u_r)\,m(dr,dz).
\end{equation}
\end{remark}

Now we are ready to formulate the optimization problem of an insurance company which subscribes a reinsurance contract with a dynamic retention level $\{u_t\}_{t\in [0,T]}$. The objective is to maximize the expected utility of the terminal wealth:
\[
\sup_{u\in\mathcal{U}}{\mathbb{E}\bigl[U(X_T^u)\bigr]},
\]
where $U:\mathbb{R}\to[0,+\infty)$ is the utility function representing the insurer's preferences and $\mathcal{U}$ the class of admissible strategies (see Definition \ref{def_U} below). Since only a partial information is available to the insurer and it is described by the filtration $\mathbb{F}$, the retention level $u$ turns out to be an $\mathbb{F}$-predictable process and a control problem with partial information arises.\\

We focus on CARA (\textit{Constant Absolute Risk Aversion}) utility functions, whose general expression is given by
\[
U(x) = 1-e^{-\eta x}, \qquad x\in\mathbb{R},
\]
where $\eta>0$ is the risk-aversion parameter. This utility function is highly relevant in economic science and in particular in insurance theory, in fact it is commonly used for reinsurance problems (e.g. see \cite{BC:IME2019} and references therein).\\
In this case our maximization problem reads as
\begin{equation}
\label{eqn:maximization_problem}
\sup_{u\in\mathcal{U}}{\mathbb{E}\bigl[1-e^{-\eta X^u_T}\bigr]}.
\end{equation}

\begin{definition}[Admissible strategies]
\label{def_U}
We denote by $\mathcal{U}$ the set of all the admissible strategies, which are all the $\mathbb{F}$-predictable processes $\{u_t\}_{t\in[0,T]}$ with values in $[0,I]$ such that
\[
\mathbb{E}\bigl[e^{-\eta X^u_T}\bigr]< +\infty.
\]
When we want to restrict the controls to the time interval $[t,T]$, we will use the notation $\mathcal{U}_t$.
\end{definition}


We can show that $\mathcal{U}$ is a nonempty class under suitable hypotheses.

\begin{assumption}
\label{ass:verification}
The following conditions hold good:
\begin{gather}
\label{eqn:p1_1}
\mathbb{E}[e^{2\eta e^{RT}C_T}]<+\infty,\\
\label{eqn:p1_2}
\mathbb{E}[e^{2\eta e^{RT}\int_0^T e^{-Rt}q^0_t\,dt}]<+\infty.
\end{gather}
\end{assumption}

\begin{prop}
\label{prop:admissibleproc}
Under Assumption \ref{ass:verification} every $\mathbb{F}$-predictable process $\{u_t\}_{t\in[0,T]}$ with values in $[0,I]$ is admissible, that is $u\in\mathcal{U}$.
\end{prop}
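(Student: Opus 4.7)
The goal is to show that $\mathbb{E}[e^{-\eta X^u_T}]<+\infty$ for every $\mathbb{F}$-predictable $u$ taking values in $[0,I]$. The plan is to pointwise dominate $e^{-\eta X^u_T}$ by a product of two random variables, each lying in $L^2(\Omega,\mathbb{P})$ thanks to Assumption \ref{ass:verification}, and then invoke the Cauchy--Schwarz inequality.

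Starting from the closed-form expression \eqref{eqn:wealth_sol} for $X^u_T$, I would write
\[
-\eta X^u_T = -\eta R_0 e^{RT} - \eta\int_0^T e^{R(T-r)}\bigl[c_r-q^u_r\bigr]dr + \eta\int_0^T\int_0^{+\infty} e^{R(T-r)} g(z,u_r)\,m(dr,dz).
\]
The properties collected in Remark \ref{g_properties} give $g(z,u_r)\le z$, so the jump integral is bounded above by $e^{RT}C_T$. For the drift part, I use $c_r\ge0$ (so $-c_r\le 0$) together with the monotonicity of $q(t,\omega,\cdot)$ stated in Assumption \ref{def:reinsurance_premium}, which implies $q^u_r\le q^0_r$; combined with $e^{R(T-r)}\le e^{RT}\cdot e^{-Rr}\cdot e^{RT}$... more precisely $e^{R(T-r)}=e^{RT}e^{-Rr}$, so
\[
-\eta\int_0^T e^{R(T-r)}\bigl[c_r-q^u_r\bigr]dr \;\le\; \eta\,e^{RT}\int_0^T e^{-Rr}q^0_r\,dr.
\]

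Putting both bounds together and dropping the negative deterministic constant $-\eta R_0 e^{RT}$ (which only improves the estimate), I get
\[
e^{-\eta X^u_T} \;\le\; \exp\!\bigl\{\eta e^{RT}\textstyle\int_0^T e^{-Rr}q^0_r\,dr\bigr\}\cdot\exp\!\bigl\{\eta e^{RT}C_T\bigr\}.
\]
By Cauchy--Schwarz,
\[
\mathbb{E}[e^{-\eta X^u_T}] \;\le\; \mathbb{E}\bigl[e^{2\eta e^{RT}\int_0^T e^{-Rr}q^0_r\,dr}\bigr]^{1/2}\cdot\mathbb{E}\bigl[e^{2\eta e^{RT}C_T}\bigr]^{1/2},
\]
and both factors are finite by conditions \eqref{eqn:p1_1}--\eqref{eqn:p1_2} of Assumption \ref{ass:verification}. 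Since the bound is uniform in the choice of the $\mathbb{F}$-predictable, $[0,I]$-valued control $u$, the conclusion follows.

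There is no real obstacle here: the argument is essentially a worst-case dominating function coming from the extremal behaviour $u\equiv 0$ (which maximises both the reinsurance cost and the loss absorption). The only points that require attention are (i) using the monotonicity of $q(t,\omega,\cdot)$ in $u$ to replace $q^u_r$ by $q^0_r$, and (ii) using the bound $g(z,u)\le z$ from Remark \ref{g_properties} rather than the weaker bound $g(z,u)\le g(z,I)=z$ at a single point.
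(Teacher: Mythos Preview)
Your proof is correct and follows essentially the same route as the paper: bound $e^{-\eta X^u_T}$ pointwise by $e^{\eta e^{RT}\int_0^T e^{-Rr}q^0_r\,dr}\cdot e^{\eta e^{RT}C_T}$ using $q^u_r\le q^0_r$ and $g(z,u)\le z$, then separate the two factors via an $L^2$ inequality. The only cosmetic difference is that the paper uses $ab\le\tfrac12(a^2+b^2)$ where you use Cauchy--Schwarz, both relying on the same second-moment conditions \eqref{eqn:p1_1}--\eqref{eqn:p1_2}.
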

\begin{proof}
By our hypotheses, taking into account that $q^u_t\le q^0_t$ $\forall t\in[0,T]$ and $\forall u\in\mathcal{U}$ (see Assumption \ref{def:reinsurance_premium}) and using the well-known inequality $ab\le\frac{1}{2}(a^2+b^2)$ $\forall a,b\in\mathbb{R}$, we have that
\[
\begin{split}
\mathbb{E}\bigl[e^{-\eta X^u_T}\bigr]
&=\mathbb{E}\biggl[ e^{-\eta e^{RT}R_0}e^{-\eta\int_0^T 
e^{R(T-t)}(c_t-q^u_t)\,dt}e^{\eta \int_0^T\int_0^{+\infty}e^{R(T-t)}g(z,u_t)\,m(dt,dz)} \biggr]\\
&\le\mathbb{E}\biggl[ e^{\eta \int_0^T e^{R(T-t)}q^0_t\,dt} e^{\eta e^{RT}\int_0^T\int_0^{+\infty}z\,m(dt,dz)} \biggr]\\
&\le \frac{1}{2} \biggl(\mathbb{E}\bigl[ e^{2\eta e^{RT} \int_0^T e^{-Rt}q^0_t\,dt} \bigr] 
+ \mathbb{E}\bigl[ e^{2\eta e^{RT}C_T} \bigr]\biggr)<+\infty,
\end{split}
\]
hence Definition \ref{def_U} is satisfied.
\end{proof}

A sufficient condition for Eq. \eqref{eqn:p1_1} can be obtained by the following lemma with the choice $p=2$.
\begin{lemma}
\label{lemma:exp_C_t_finite}
Let $p>0$ and assume that there exists an integrable function $\Phi_p:[0,T]\to(0,+\infty)$ such that
\begin{equation}
\label{eqn:F_Z_integrable}
\int_0^{+\infty}\bigl(e^{p\eta e^{RT}z}-1\bigr)F_Z(t,y,dz) 
\le \Phi_p(t) \qquad \forall (t,y)\in[0,T]\times\mathbb{R}.
\end{equation}
Then the following property holds good:
\begin{equation}
\label{eqn:exp_C_t_finite}
\mathbb{E}[e^{p\eta e^{RT}C_t}]< +\infty \qquad\forall t\in[0,T].
\end{equation}
\end{lemma}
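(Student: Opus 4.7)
The plan is to apply It\^o's formula for pure jump processes to $e^{\beta C_t}$, where I set $\beta \doteq p\eta e^{RT}$ for brevity. Since $C$ has finite variation with no continuous part, this yields
$$e^{\beta C_t} = 1 + \int_0^t\int_0^{+\infty} e^{\beta C_{s^-}}\bigl(e^{\beta z}-1\bigr)\,m(ds,dz).$$
The idea is to take expectations, swap $m(ds,dz)$ for its $\mathbb{G}$-dual predictable projection $\nu(ds,dz)=\lambda_s F_Z(s,Y_{s^-},dz)\,ds$ from Assumption \ref{ass:randommeasure}, and combine the uniform bound $\lambda_s\le\Lambda$ with hypothesis \eqref{eqn:F_Z_integrable} to obtain an integral inequality of Gronwall type for the function $t\mapsto\mathbb{E}[e^{\beta C_t}]$.

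Since we do not yet know $\mathbb{E}[e^{\beta C_t}]<+\infty$, I would localize by $\tau_n\doteq\inf\{t\ge 0\colon C_t > n\}\wedge T$. For $s\le\tau_n$ one has $C_{s^-}\le n$, so the integrand $e^{\beta C_{s^-}}(e^{\beta z}-1)\mathbf{1}_{[0,t\wedge\tau_n]}(s)$ is nonnegative, $\mathbb{G}$-predictable and bounded in the $s$-variable by $e^{\beta n}$. The dual projection identity then applies legitimately and gives
$$\mathbb{E}\bigl[e^{\beta C_{t\wedge\tau_n}}\bigr]\le 1 + \Lambda\int_0^t \Phi_p(s)\,\mathbb{E}\bigl[e^{\beta C_{s^-}}\mathbf{1}_{\{s\le\tau_n\}}\bigr]\,ds \le 1 + \Lambda\int_0^t \Phi_p(s)\,\mathbb{E}\bigl[e^{\beta C_{s\wedge\tau_n}}\bigr]\,ds,$$
where the last step uses monotonicity of $C$. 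The crude a priori estimate $f_n(t)\doteq\mathbb{E}[e^{\beta C_{t\wedge\tau_n}}]\le e^{\beta n}(1+\Lambda\int_0^T\Phi_p)<+\infty$ justifies an application of Gronwall's lemma, yielding the $n$-independent bound
$$f_n(t)\le \exp\!\left(\Lambda\int_0^T\Phi_p(s)\,ds\right).$$

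Finally, the moment bound \eqref{eqn:z2_finite} implies $C_T<+\infty$ $\mathbb{P}$-a.s., so $\tau_n=T$ for $n$ large enough a.s., and hence $C_{t\wedge\tau_n}\uparrow C_t$ $\mathbb{P}$-a.s. Fatou's lemma then transfers the uniform estimate to $\mathbb{E}[e^{\beta C_t}]$, proving \eqref{eqn:exp_C_t_finite}. The main technical point is the localization: without it, the expectation step would be circular, requiring the very integrability one is trying to prove. The key is choosing a stopping time that keeps $C_{s^-}$ (rather than merely $C_s$) bounded on $[0,\tau_n]$, so that the resulting inequality can be fed into Gronwall with a finite starting value, and to ensure that the resulting bound does not depend on $n$ before passing to the limit.
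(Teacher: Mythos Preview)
Your proof is correct and follows exactly the paper's strategy: apply It\^o's formula for pure-jump processes to $e^{\beta C_t}$, pass to the compensator $\lambda_s F_Z(s,Y_{s^-},dz)\,ds$, bound using \eqref{eqn:intensity_bounded} and \eqref{eqn:F_Z_integrable}, and close with Gronwall's lemma to obtain $\mathbb{E}[e^{\beta C_t}]\le\exp\bigl(\Lambda\int_0^t\Phi_p(s)\,ds\bigr)$. The paper performs these steps directly on $\mathbb{E}[e^{\beta C_t}]$ without localizing, so your version is in fact more careful on precisely the circularity issue you identify; note also that your indicator $\mathbf{1}_{\{s\le\tau_n\}}$ coincides with $\mathbf{1}_{\{C_{s^-}\le n\}}$ (by monotonicity of $C$), which makes its $\mathbb{G}$-predictability transparent.
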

\begin{proof}
Since $\{C_t\}_{t\in[0,T]}$ is a pure-jump process (see Eq. \eqref{eqn:C_t_def}), we have that
\begin{align*}
e^{p\eta e^{RT}C_t} &=e^{p\eta e^{RT}C_0} +\sum_{s\le t}\biggl(e^{p\eta e^{RT}C_s}-e^{p\eta e^{RT}C_{s^-}}\biggr)\\
&=1 +\sum_{s\le t}e^{p\eta e^{RT}C_{s^-}}\biggl(e^{p\eta e^{RT}\Delta C_s}-1\biggr)\\
&=1 +\int_0^te^{p\eta e^{RT}C_{s^-}}\int_0^{+\infty}\biggl(e^{p\eta e^{RT}z}-1\biggr)m(ds,dz).
\end{align*}
Taking the expectation, by \eqref{eqn:Gdual_projection},~\eqref{eqn:intensity_bounded} and \eqref{eqn:F_Z_integrable} we get that
\begin{align*}
\mathbb{E}[e^{p\eta e^{RT}C_t}] &=1 +\mathbb{E}\biggl[ \int_0^te^{p\eta e^{RT}C_{s^-}}\int_0^{+\infty}\biggl(e^{p\eta e^{RT}z}-1\biggr)\lambda_sF_Z(s,Y_s,dz)\,ds \biggr]\\
&\le 1 + \Lambda\int_0^t\mathbb{E}\bigl[e^{p\eta e^{RT}C_s}\bigr]\Phi_p(s)\,ds.
\end{align*}
Applying Gronwall's lemma we finally obtain that
\[
\mathbb{E}[e^{p\eta e^{RT}C_t}]\le e^{\Lambda\int_0^t \Phi_p(s)ds}.
\]
\end{proof}

\begin{remark}
Let us denote by $m_Z(k) \doteq\mathbb{E}[e^{kZ}]$, $k\in\mathbb{R}$, the moment generating function of $Z$. Assuming $F_Z(t, y,dz)=F_Z(dz)$ as in Example \ref{example:cramer}, the condition \eqref{eqn:F_Z_integrable} is equivalent to
\[
m_Z(p\eta e^{RT})<+\infty.
\]
In particular, in view of Lemma \ref{lemma:exp_C_t_finite}, $m_Z(2\eta e^{RT})<+\infty$ implies Eq. \eqref{eqn:p1_1}.\\
As special cases we may consider the following distribution functions:
\begin{itemize}
\item if $Z\sim\Gamma(\alpha,\zeta)$ we have that $m_Z(k)=\frac{\Gamma(\alpha)}{(\zeta-k)^2}$ $\forall k<\zeta$, where $\Gamma$ denotes the gamma function; hence Eq. \eqref{eqn:p1_1} is fulfilled for any $\zeta>2\eta e^{RT}$;
\item if $Z$ is exponentially distributed, then $Z\sim\Gamma(1,\zeta)$ and hence the same condition $\zeta>2\eta e^{RT}$ applies;
\item if $Z$ has a truncated normal distribution on the interval $[0,+\infty)$, then 
\[
m_Z(k)=e^{\mu k + \frac{\sigma^2k^2}{2}}\frac{1-\mathcal{N}(-\frac{\mu}{\sigma}-\sigma k)}{1-\mathcal{N}(-\frac{\mu}{\sigma})} \qquad \forall k>0,
\]
where $\mathcal{N}$ denotes the standard normal distribution function.
\end{itemize}
\end{remark}

\begin{remark}
Let us consider the special case of complete information. We denote by $\{S^u_t\}_{t\in[0,T]}$ the insurer's wealth in a full information framework, that is
\[
S^u_t = R_0e^{Rt} + \int_0^t e^{R(t-r)}\bigl[\bar{c}_r-\bar{q}^u_r\bigr]\,dr
-\int_0^t\int_0^{+\infty} e^{R(t-r)}g(z,u_r)\,m(dr,dz),
\]
where the $\mathbb{G}$-predictable processes $\{\bar{c}_t\}_{t\in[0,T]}$ and $\{\bar{q}_t\}_{t\in[0,T]}$ denote the insurance and the reinsurance premium, respectively. In order to simplify the comparison, the full and the partial information frameworks are defined in a similar way. $\mathcal{U}^G$ denotes the class of admissible strategies and it is defined as in Definition \ref{def_U}, replacing $\mathbb{F}$ with $\mathbb{G}$ and $X^u_t$ with $S^u_t$. Under Assumption \ref{ass:verification}, as in Proposition \ref{prop:admissibleproc}, we can prove the admissibility of every $\mathbb{G}$-predictable process. Hence, since any $\mathbb{F}$-predictable process is also $\mathbb{G}$-predictable, we get $\mathcal{U}\subseteq\mathcal{U}^G$.
We take the same insurance premia $c_t=\bar{c}_t$ and reinsurance premia $q^u_t=\bar{q}^u_t$ $\forall u\in\mathcal{U}$. 
In this simple context, we can readily get that
\[
\mathbb{E}\bigl[e^{-\eta X^u_T}\bigr] = \mathbb{E}\bigl[e^{-\eta S^u_T}\bigr]
\qquad \forall u\in\mathcal{U},
\]
and, as a consequence,
\[
\inf_{u\in\mathcal{U}^G}{\mathbb{E}\bigl[e^{-\eta S^u_T}\bigr]} 
\le \inf_{u\in\mathcal{U}}{\mathbb{E}\bigl[e^{-\eta S^u_T}\bigr]}
= \inf_{u\in\mathcal{U}}{\mathbb{E}\bigl[e^{-\eta X^u_T}\bigr]}.
\]
In words, the complete information allows the insurer to improve her result. However, we point out that such an expected result is no longer easy to prove in general (for example when the premia do not coincide). \\
In Section \ref{section:comparison} we will compare the optimal strategies under partial information with those under complete information in some special cases.
\end{remark}


\subsection{Reduction to a complete information problem}

In the previous subsection we have introduced the partially observable problem. In order to study it, we need to reduce it to an equivalent problem with complete information.
This can be achieved by deriving the compensator $m^\pi(dt,dz)$ of the random measure given in Eq. \eqref{eqn:random_measure}, that is the insurer's loss process, with respect to its internal filtration $\mathbb{F}$, which represents the information at disposal to the insurance and the reinsurance companies. 
This result can be obtained by solving a filtering problem with marked point process observations.  It is well known that the filter,  that is the conditional distribution  of $Y_t$ given the $\sigma$-algebra $\mathcal{F}_t$, for any $t\in[0,T]$,  provides the best mean-squared estimate of the unobservable stochastic factor $Y$  from the available information.  
Precisely, the filter is the $\mathbb{F}$-adapted c\`adl\`ag process $\{\pi_t(f)\}_{t\in[0,T]}$ taking values in the space of probability measures on $\mathbb{R}$ defined by
\[
\pi_t(f) = \mathbb{E}[f(t,Y_t)\mid\mathcal{F}_t],
\]
for any measurable function $f\colon[0,T]\times\mathbb{R}\to\mathbb{R}$ such that $\mathbb{E}[\abs{f(t,Y_t)}]< +\infty$ $\forall t\in[0,T]$.

By applying \cite[Proposition 2.2]{cc:AAP2012}, we can derive $m^\pi(dt,dz)$.

\begin{lemma}
The random measure $m(dt,dz)$ given in~\eqref{eqn:random_measure} has $\mathbb{F}$-dual predictable projection $m^\pi(dt,dz)$ given by 
$\pi_{t^-}(\lambda F_Z(dz))\,dt,$ that is, the following expression holds for any $A\in\mathcal{B}([0,+\infty))$
\begin{equation}
\label{eqn:dual_projection}
m^\pi(dt,A) \doteq \pi_{t^-}(\lambda(t,\cdot) F_Z(t,\cdot,A))\,dt,
\end{equation} 
where $\pi_{t}(\lambda(t,\cdot) F_Z(t,\cdot,A))=  \mathbb{E}[\lambda(t,Y_t) F_Z(t,Y_t,A)\mid\mathcal{F}_t]$ and  $\pi_{t^-}$ denotes the left version of the process $\pi_{t}$.
\end{lemma}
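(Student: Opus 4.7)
The plan is to verify the defining property of the $\mathbb{F}$-dual predictable projection directly, using the $\mathbb{G}$-dual predictable projection from Assumption \ref{ass:randommeasure} together with the tower property of conditional expectation. That is, I will check that for every nonnegative $\mathbb{F}$-predictable process $H(t,z)$ one has
\[
\mathbb{E}\biggl[\int_0^T\!\!\int_0^{+\infty} H(t,z)\,m(dt,dz)\biggr]
=\mathbb{E}\biggl[\int_0^T\!\!\int_0^{+\infty} H(t,z)\,\pi_{t^-}(\lambda(t,\cdot)F_Z(t,\cdot,dz))\,dt\biggr],
\]
which characterizes $m^\pi(dt,dz)$ uniquely.

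First, since $\mathcal{F}_t\subseteq\mathcal{G}_t$ for every $t\in[0,T]$, any $\mathbb{F}$-predictable integrand $H(t,z)$ is in particular $\mathbb{G}$-predictable. Therefore Assumption \ref{ass:randommeasure} yields
\[
\mathbb{E}\biggl[\int_0^T\!\!\int_0^{+\infty}\! H(t,z)\,m(dt,dz)\biggr]
=\mathbb{E}\biggl[\int_0^T\!\!\int_0^{+\infty}\! H(t,z)\,\lambda(t,Y_{t^-})F_Z(t,Y_{t^-},dz)\,dt\biggr].
\]
Next, applying Fubini's theorem and conditioning on $\mathcal{F}_{t^-}$ (which contains the values $H(t,z)$, since $H$ is $\mathbb{F}$-predictable), the inner expectation of the right-hand side can be rewritten as
\[
\mathbb{E}\biggl[\int_0^{+\infty}\! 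H(t,z)\,\mathbb{E}\bigl[\lambda(t,Y_{t^-})F_Z(t,Y_{t^-},dz)\,\bigl|\,\mathcal{F}_{t^-}\bigr]\biggr].
\]
By the definition of the filter applied to the measure-valued kernel $y\mapsto \lambda(t,y)F_Z(t,y,dz)$, the inner conditional expectation equals $\pi_{t^-}(\lambda(t,\cdot)F_Z(t,\cdot,dz))$, and re-assembling the time integral yields the claimed identity.

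The only delicate point is the measure-theoretic handling of the kernel $F_Z$: one must justify that the conditional expectation of the (random, $\sigma$-finite) measure $\lambda(t,Y_{t^-})F_Z(t,Y_{t^-},\cdot)$ given $\mathcal{F}_{t^-}$ genuinely coincides with $\pi_{t^-}$ evaluated on the measurable function $y\mapsto \lambda(t,y)F_Z(t,y,A)$ for every Borel $A$, and that this identification is compatible with the Fubini step. This is the content of \cite[Proposition 2.2]{cc:AAP2012}, where a monotone class argument (starting with $H(t,z)=H_t\mathbbm{1}_A(z)$ and extending to all nonnegative $\mathbb{F}$-predictable integrands) takes care of both issues simultaneously, together with the fact that the left-continuous version $\pi_{t^-}$ provides the $\mathbb{F}$-predictable representative of the filter. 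Invoking that result completes the proof.
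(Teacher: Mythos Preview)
Your proposal is correct and aligns with the paper's approach: the paper does not give an explicit proof but simply invokes \cite[Proposition 2.2]{cc:AAP2012}, which is exactly the result you cite at the end. Your write-up is effectively an expanded sketch of that proposition's proof (tower property plus a monotone class argument starting from product integrands $H_t\mathbbm{1}_A(z)$), so there is no substantive difference in method.
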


\begin{remark}
\label{remark:dualprojection}
By definition of dual predictable projection, for every nonnegative, $\mathbb{F}$-predictable and $[0,+\infty)$-indexed process $\{H(t,z)\}_{t\in[0,T]}$ we have that
\[\mathbb{E}\biggl[\int_0^T\int_0^{+\infty} H(t,z)\,m(dt,dz)\biggr] =\]

\[
\mathbb{E}\biggl[\int_0^T\int_0^{+\infty} H(t,z)\,\lambda_tF_Z(t,Y_t,dz)\,dt\biggr]
=\mathbb{E}\biggl[\int_0^T\int_0^{+\infty} H(t,z)\,\pi_{t^-}(\lambda F_Z(dz))\,dt\biggr].
\]
\end{remark}

By Remark \ref{remark:dualprojection} we can rewrite the classical premium calculation principles adapting them to our dynamic and partially observable context via the filter process%
\footnote{See \cite{young:premium_princ} for the original formulation in a static framework.}.

\begin{example}[Premium calculation principles]
\label{example:premia}
Under the expected value principle, the expected revenue covers the expected losses plus a profit which is proportional to the expected losses:
\begin{align}
c_t &= (1+\theta_i)\int_0^{+\infty} z\,\pi_{t^-}(\lambda F_Z(dz)), \notag\\
\label{eqn:evp}
q^u_t &= (1+\theta)\int_0^{+\infty} (z-g(z,u_t))\,\pi_{t^-}(\lambda F_Z(dz)),
\end{align}
where $\theta>\theta_i>0$ represent the safety loadings.\\
Under the variance premium principle, the expected gain is proportional to the variance of the losses instead:
\begin{align}
c_t &= \int_0^{+\infty} z\,\pi_{t^-}(\lambda F_Z(dz)) 
+ \theta_i \int_0^{+\infty} z^2\,\pi_{t^-}(\lambda F_Z(dz)) , \notag\\
\label{eqn:vp}
q^u_t &= \int_0^{+\infty} (z-g(z,u_t))\,\pi_{t^-}(\lambda F_Z(dz))
+ \theta \int_0^{+\infty}(z-g(z,u_t))^2\,\pi_{t^-}(\lambda F_Z(dz)),
\end{align}
for some safety loadings $\theta>\theta_i>0$. Observe that in these examples the premium at time $t$  depends on the estimate of  the compensator of the loss process given the available information immediately before time $t$, that is 
$\pi_{t^-}(\lambda F_Z(dz)) dt$.

A formal derivation of these premium calculation rules in a dynamic context can be found in \cite{BC:IME2019} and \cite{BCrisks}.\\
\end{example}

Filtering problems with marked point process observations have been widely investigated in the literature, 
see \cite{bremaud:pointproc} and more recently \cite{cg:2006} and \cite{ceci:2006}. See also \cite{cc:AAP2012} and \cite{cc:AMO2014} for jump-diffusion observations.
Here, starting from the existing literature, we derive an explicit formula for the filter under general assumptions on the stochastic factor $Y$. Precisely, we assume $Y$ to be a c\`adl\`ag Markov process, but we do not assign any specific dynamics to $Y$. More details can be found in Appendix.

Let us denote by $\mathcal{L}^Y$ the Markov generator of $Y$ with domain $\mathcal{D}^Y$,
 that is for every function $f \in \mathcal{D}^Y \subseteq C_b([0,T]\times\mathbb{R})$
$$f(t,Y_t) = f(t_0,y_0) + \int_{t_0}^t \mathcal{L}^Y f(s,Y_s) ds + M^Y_t, \quad t \in [0,T],$$
for some $\mathbb{F}^Y$-martingale  $\{M^Y_t\}_{t\in[0,T]}$ and $(t_0, y_0) \in [0,T]\times \mathbb{R}$. 

\begin{assumption}
\label{Y}
We assume the following standard hypotheses:
\begin{itemize}
\item for any initial value $(t_0, y_0) \in [0,T]\times \mathbb{R}$ the martingale problem%
\footnote{See \cite{ek:1986} for details about martingale problems.} 
for the operator $\mathcal{L}^Y$ is well posed on the space of c\`adl\`ag trajectories (this is true, for instance, when $Y$ is the unique strong solution of a SDE for any initial values $(t_0, y_0) \in [0,T]\times\mathbb{R} $);
\item $\mathcal{L}^Yf\in C_b([0,T]\times\mathbb{R})$ for any $f \in \mathcal{D}^Y $;
\item $\mathcal{D}^Y $ is an algebra dense in $C_b([0,T]\times\mathbb{R})$.
\end{itemize}
\end{assumption}

For simplicity, we assume no common jump times between $Y$ and $m(dt,dz)$ (we should specify the dynamic for $Y$ to remove such a simplification). 

\begin{prop} \label{new}
Under Assumption \ref{Y}, letting $y_0 \in \mathbb{R}$ be a fixed initial value for $Y$ at time $t=0$, 
the filter $\pi$ can be obtained by the following recursive procedure
\begin{itemize}
\item $\pi_0(f) = f(0,y_0)$, $\forall t \in (0, T_1)$
$$\pi_{t}(f) =  { E[ f(t, Y_t) e^{- \int_0^t \lambda(r, Y_r) dr} | Y_0=y_0] \over E [ e^{- \int_0^t \lambda(r, Y_r) dr} | Y_0=y_0] };$$

\item at a jump time $T_n$, $n \geq 1$:

\begin{equation} \label{jump} \pi_{T_n}(f) = W(T_n, \pi_{T_n^-}, Z_n) \doteq {d \pi_{T_n^-}(\lambda F_Z f) \over d \pi_{T_n^-}(\lambda F_Z ) }(Z_n), \end{equation}
where ${d \pi_{t^-}(\lambda F_Z f) \over d \pi_{t^-}(\lambda F_Z ) }(z)$  denotes the Radon-Nikodym derivative of the measure $\pi_{t^-}(\lambda F_Z (dz)f)$  with respect to $ \pi_{t^-}(\lambda F_Z(dz))$;

\item between two consecutive jump times, $t \in (T_{n}, T_{n+1})$, $n \geq 1$: 
\[
\pi_{t}(f) = { E_{n}[ f(t, Y_t) e^{- \int_s^t \lambda(r, Y_r) dr} ] |_{s= T_{n}}\over E_{n}[ e^{- \int_s^t \lambda(r, Y_r) dr}] |_{s= T_{n}}},
\]
where $E_{n}$ denotes the conditional  expectation given  the distribution $Y_{T_{n}} $ equal to $\pi_{T_{n}}$.
\end{itemize}

\end{prop}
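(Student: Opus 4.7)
The strategy is to argue by induction on the number of observed claims, using two distinct mechanisms: between two consecutive arrival times only the information ``no jump has occurred yet'' is gained, while at each arrival time the filter is updated via a Bayes rule using the joint ``likelihood'' $\lambda(T_n,y)F_Z(T_n,y,dz)$ carried by the pair (arrival time, mark). Set $T_0\doteq 0$ with the convention $\pi_{T_0}=\delta_{y_0}$; the base case then coincides with the first bullet of the statement.

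First I would treat a generic interval $(T_n,T_{n+1})$. For $t$ in this interval,
\[
\pi_t(f)=E\bigl[f(t,Y_t)\,\big|\,\mathcal{F}_{T_n}\vee\sigma(\{T_{n+1}>t\})\bigr]
=\frac{E\bigl[f(t,Y_t)\mathbbm{1}_{\{T_{n+1}>t\}}\mid\mathcal{F}_{T_n}\bigr]}{E\bigl[\mathbbm{1}_{\{T_{n+1}>t\}}\mid\mathcal{F}_{T_n}\bigr]}.
\]
By Assumption \ref{ass:randommeasure} the process $N_t-\int_0^t\lambda_s\,ds$ is a $\mathbb{G}$-martingale, so conditioning first on the enlarged $\sigma$-algebra $\mathcal{F}_{T_n}\vee\mathcal{F}^Y_T$ and using a standard Doob--Meyer/exponential formula argument yields
\[
P\bigl[T_{n+1}>t\,\big|\,\mathcal{F}_{T_n}\vee\mathcal{F}^Y_T\bigr]
=\exp\Bigl(-\int_{T_n}^t\lambda(r,Y_r)\,dr\Bigr).
\]
Inserting this into numerator and denominator, invoking the Markov property of $Y$ together with the well-posedness of the martingale problem (Assumption \ref{Y}), and recalling that the conditional distribution of $Y_{T_n}$ given $\mathcal{F}_{T_n}$ is by definition $\pi_{T_n}$, produces exactly the Feynman--Kac-type ratio stated in the third bullet; specialising to $n=0$ gives the second bullet.

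Second, for the jump update I would exploit $\mathcal{F}_{T_n}=\mathcal{F}_{T_n^-}\vee\sigma(T_n,Z_n)$. By the hypothesis of no common jumps between $Y$ and $m(dt,dz)$ one has $Y_{T_n}=Y_{T_n^-}$, and by Assumption \ref{ass:randommeasure} the conditional joint law of $(T_n,Z_n)$ given $\mathcal{F}_{T_n^-}$ and the trajectory of $Y$ is characterised by the kernel $\lambda(T_n,Y_{T_n^-})F_Z(T_n,Y_{T_n^-},dz)\,dt$. A standard Bayes argument, averaging against the prior $\pi_{T_n^-}(dy)$, then produces the posterior
\[
\pi_{T_n}(f)=\frac{\int f(T_n,y)\lambda(T_n,y)F_Z(T_n,y,dz)\,\pi_{T_n^-}(dy)}{\int\lambda(T_n,y)F_Z(T_n,y,dz)\,\pi_{T_n^-}(dy)}\bigg|_{z=Z_n},
\]
which is precisely the Radon--Nikodym derivative appearing in \eqref{jump}. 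The two steps combined close the induction via the strong Markov property of $Y$ at $T_n$.

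The main obstacle will be to rigorously justify the pointwise evaluation at $z=Z_n$ of the Radon--Nikodym derivative entering \eqref{jump}: one has to argue that a regular version of the posterior kernel exists and that it coincides $\mathbb{P}$-a.s.\ with the prescribed ratio at the random point $Z_n$. This is the delicate measure-theoretic content of marked point process filtering and I would handle it by appealing to \cite[Proposition 2.2]{cc:AAP2012} together with the general framework of \cite{bremaud:pointproc}, \cite{cg:2006} and \cite{ceci:2006}, which guarantee the existence of such a regular version and provide the jump structure of $\pi$. The remaining integrability issues (finiteness of $\pi_{T_n^-}(\lambda F_Z)$ and of the Feynman--Kac exponentials) are immediate from the boundedness \eqref{eqn:intensity_bounded} of $\lambda$.
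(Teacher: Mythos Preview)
Your argument is correct in spirit but takes a genuinely different route from the paper. The paper's proof (deferred to the Appendix) proceeds via the Kushner--Stratonovich equation: it first characterises the filter as the unique strong solution of the KS-SDE \eqref{eqn:KSeq_appendix}, then reads off the jump update \eqref{jump} directly from the jump part of that SDE, and handles the inter-jump evolution by a linearisation trick --- introducing an unnormalised measure-valued process $\rho^n$ that solves the \emph{linear} equation $d\rho^n_t(f)=\rho^n_t(\mathcal{L}^Yf-\lambda f)\,dt$ with initial datum $\pi_{T_{n-1}}$, showing that the ratio $\rho^n_t(f)/\rho^n_t(1)$ solves the nonlinear inter-jump equation \eqref{eq_filter}, and giving the Feynman--Kac representation for $\rho^n$. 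The conclusion then follows from strong uniqueness of the KS equation. Your approach instead bypasses the KS equation entirely and argues by direct Bayesian conditioning: on $(T_n,T_{n+1})$ you condition on $\mathcal{F}_{T_n}$ and the survival event $\{T_{n+1}>t\}$, and at $T_n$ you apply a Bayes rule with likelihood $\lambda F_Z$.

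Both arguments are valid. The paper's route is cleaner once the KS machinery of \cite{ceci:2006,cc:AAP2012} is granted, since uniqueness does all the identification work and the linearisation is an elegant way of producing the Feynman--Kac ratio. Your route is more elementary and self-contained, but one step is a little loose: you invoke ``the strong Markov property of $Y$ at $T_n$'', yet $T_n$ is an $\mathbb{F}$-stopping time and not an $\mathbb{F}^Y$-stopping time, so the usual strong Markov property of $Y$ does not apply directly. What makes the argument go through is rather that, conditionally on $\mathcal{F}_{T_n}$, the law of $\{Y_s\}_{s\ge T_n}$ depends on the past only through $\pi_{T_n}$ (this is where well-posedness of the martingale problem in Assumption~\ref{Y} enters), and this is exactly what the paper encodes in the notation $E_n[\cdot]$. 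If you tighten that passage --- e.g.\ by first conditioning on $\mathcal{F}_{T_n}\vee\mathcal{F}^Y_{T_n}$ and then integrating against $\pi_{T_n}(dy)$ --- your proof is complete.
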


\begin{proof}
The results are derived in Appendix.
\end{proof}

Similarly to \cite[Section 3.3]{cg:2006}, by Proposition \ref{new} we can write a recursive algorithm to approximate the filter. We conclude the section with some special cases. The following results are discussed in Appendix.

\begin{remark}[Known jump size distribution and unknown intensity]
\label{remark:filterapp}
In the special case where $F_Z(t, y,dz)=F_Z(dz)$, that is,  the insurance company has complete knowledge on the claim size distribution and partial information on the claim arrival intensity.  Eq. \eqref{jump} reduces to
\begin{equation}
\label{eqn:filter1}
\pi_{T_n}(f) = W(T_n,\pi_{T_n^-}) = {\pi_{T_n^-}(\lambda f) \over  \pi_{T_n^-}(\lambda)},
\end{equation}
see Example \ref{example:A1} in Appendix.

If $Y$ takes values in a discrete  set ${\cal S} = \{1,2, \dots\}$, defining the functions $f_i(y):=\mathbbm{1}_{y= i}$, $i \in {\cal S}$, the filter is completely described via the knowledge of 
$\pi_{t}(i):= \pi_{t}(f_i) = P(Y_t = i \mid\mathcal{F}_t)$, $i \in {\cal S}$, because for every function $f$ 
we have that $\pi_i(f) = \sum_{i \in  {\cal S}} f(i) \pi_{t}(i).$ Eq. \eqref{jump} reads as 
\begin{equation}
\label{eqn:filter2}
\pi_{T_n}(i)  = { d(\lambda({T_n},i)  F_Z({T_n}, i, dz)\pi_{{T_n^-}}(i) ) \over 
 d( \sum_{j \in  {\cal S}} \lambda({T_n},j)  F_Z({T_n}, j, dz) \pi_{{T_n^-}}(j)) }(Z_n),
\end{equation}
 which, in the special case $F_Z(t, y,dz)=F_Z(dz)$, simplifies to
\begin{equation}
\label{eqn:filter3}
\pi_{T_n}(i) = W_i(T_n,\pi_{T_n^-}) \doteq { \lambda({T_n},i) \pi_{T_n^-}(i)
\over  \sum_{j \in  {\cal S}} \lambda({T_n},j) \pi_{T_n^-}(j) },
\end{equation}
see Example \ref{example:A3} in Appendix.
\end{remark}

\begin{remark}[Markov Modulated Risk Model with infinitely many states]
\label{remark:MMRM}
If $Y$ takes values in a discrete  set ${\cal S} = \{1,2, \dots\}$, the random measure $m(dt,dz)$ in~\eqref{eqn:random_measure} 
has $\mathbb{F}$-dual predictable projection given by
\[
m^\pi(dt, dz) = \sum_{i \in  {\cal S}}  \pi_{t^-}(i) \lambda(t,i) F_Z(t,i,dz) \,dt.
\]
In particular, under the Markov Modulated Risk Model (see Example \ref{example:MMRM}) we get 
\[
m^\pi(dt, dz) = \sum_{i =1} ^{M}  \pi_{t^-}(i) \lambda(i) F^i_Z(dz) \,dt.
\]
and  by Eq. \eqref{eqn:filter3}
\begin{equation}
\label{eqn:filterWlast}
\pi_{T_n}(i) = W_i(\pi_{T_n^-}) 
= { \lambda_i \pi_{T_n^-}(i)
\over  \sum_{j =1}^{M} \lambda_j \pi_{T_n^-}(j) },
\end{equation}
see Example \ref{example:A2} in Appendix.
 \cite{liangbayraktar:optreins} consider this case under the assumption that the claim distribution for any state $i=1, \dots M$ admits density, that is $F^i_Z(dz)  = f_i (z)dz$.
\end{remark}


\section{The BSDE approach}
\label{section:BSDE}

As usual in stochastic control problems, we introduce the dynamic problem associated to \eqref{eqn:maximization_problem}. For the sake of notational simplicity, we study the corresponding minimization problem for the function $e^{-\eta x}$. Precisely, for any admissible control $u\in\mathcal{U}$ let us define the Snell envelope:
\begin{equation}
\label{eqn:snell_envelope}
J^u_t \doteq \essinf_{\bar{u}\in\mathcal{U}(t,u)}
{\mathbb{E}\biggl[e^{-\eta X^{\bar{u}}_T}\mid \mathcal{F}_t\biggr]} ,
\end{equation}
where $\mathcal{U}(t,u)$ denotes the class $\mathcal{U}$ restricted to the controls $\bar{u}$ such that $\bar{u}_s=u_s$ $\forall s\le t$, for a given arbitrary control $u\in\mathcal{U}$.\\
Let us introduce the discounted wealth $\{\bar{X}^u_t\doteq e^{-Rt}X^u_t\}_{t\in[0,T]}$, that is
\begin{equation}
\label{eqn:Xbar}
\bar{X}^u_t = R_0 + \int_0^t e^{-Rs}\bigl[c_s-q^u_s\bigr]\,ds
-\int_0^t\int_0^{+\infty} e^{-Rs}g(z,u_s)\,m(ds,dz), \qquad t\in[0,T].
\end{equation}
Then, by Eq. \eqref{eqn:wealth_sol} we get
\begin{equation}
\label{eqn:J^u}
J^u_t=e^{-\eta \bar{X}^u_t e^{RT}}V_t,
\end{equation}
where we define the value process
\begin{equation}
\label{eqn:V_t}
V_t \doteq \essinf_{\bar{u}\in\mathcal{U}_t}{\mathbb{E}\biggl[e^{-\eta e^{RT}(\bar{X}^{\bar{u}}_T-\bar{X}^{\bar{u}}_t)}\mid \mathcal{F}_t\biggr]},
\end{equation}
with $\mathcal{U}_t$ denoting the class of admissible controls restricted to the time interval $[t,T]$ (see Definition \ref{def_U}).

By Eqs.~\eqref{eqn:Xbar} and \eqref{eqn:J^u} it is easy to show that
\begin{align}
J^u_t &= e^{-\eta (\bar{X}^u_t-\bar{X}^I_t) e^{RT}} e^{-\eta \bar{X}^I_t e^{RT}}V_t \notag\\
\label{eqn:diff_equality}
&=e^{\eta (\bar{X}^I_t-\bar{X}^u_t) e^{RT}} J^I_t,
\end{align}
and
\begin{equation}
\label{eqn:Vt_JI}
V_t=e^{\eta \bar{X}^I_t e^{RT}}J^I_t,
\end{equation}
where $J^I_t$ denotes the Snell envelope associated to $u=I$ (null reinsurance).\\

The goal of this section is to dynamically characterize the value process by using a BSDE-based approach. The BSDE method works well in non-Markovian settings, where the classical stochastic control approach based on the Hamilton-Jacobi-Bellman equation does not apply. Several papers (see e.g. \cite{karoui:1997}, \cite{ceci:2011DEF}, \cite{lim2011} and references therein) deal with stochastic optimization problems in finance by means of BSDEs.  For insurance applications the reader can refer to the recent textbook \cite{delong2013}.  Moreover, this approach is also well suited to solve stochastic control problems under partial information in presence of an infinite-dimensional filter process (see e.g. \cite{ceci:2011} and \cite{ceci:2012IJTAF}, where partially observed power utility maximization problems in financial markets are solved by applying this approach).

\begin{prop}
\label{prop:VJfinite}
Under Assumption \ref{ass:verification} we have that
\begin{equation}
\label{eqn:4bis}
\mathbb{E}[(\sup_{t\in[0,T]}{J^I_t})^2] <+\infty .
\end{equation}
\end{prop}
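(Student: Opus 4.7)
The plan is to dominate $J^I_t$ pathwise by a square-integrable martingale and then apply Doob's $L^2$ maximal inequality.

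First, I would exploit the explicit form of the wealth under null reinsurance. Setting $u\equiv I$, Remark \ref{g_properties} gives $g(z,I)=z$ and Assumption \ref{def:reinsurance_premium} gives $q^I_t\equiv 0$, so by \eqref{eqn:wealth_sol}
\[
X^I_T = R_0e^{RT} + \int_0^T e^{R(T-r)}c_r\,dr - \int_0^T\int_0^{+\infty} e^{R(T-r)}z\,m(dr,dz).
\]
Since $R_0\ge 0$, $c_r\ge 0$ and $e^{R(T-r)}\le e^{RT}$ for $r\in[0,T]$, we obtain the pointwise bound
\[
-\eta X^I_T \le \eta e^{RT}\int_0^T\int_0^{+\infty} z\,m(dr,dz) = \eta e^{RT}C_T,
\]
which yields $e^{-\eta X^I_T}\le e^{\eta e^{RT}C_T}$.

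Next I would take conditional expectations with respect to $\mathcal{F}_t$. By monotonicity,
\[
J^I_t=\mathbb{E}\bigl[e^{-\eta X^I_T}\mid\mathcal{F}_t\bigr]\le \mathbb{E}\bigl[e^{\eta e^{RT}C_T}\mid\mathcal{F}_t\bigr]\doteq M_t.
\]
The process $\{M_t\}_{t\in[0,T]}$ is a nonnegative $\mathbb{F}$-martingale, and by Assumption \ref{ass:verification}, Eq. \eqref{eqn:p1_1}, we have $\mathbb{E}[M_T^2]=\mathbb{E}[e^{2\eta e^{RT}C_T}]<+\infty$, so $M$ is square-integrable. Since $J^I_t\ge 0$ and $J^I_t\le M_t$ for every $t\in[0,T]$, we get $\sup_{t\in[0,T]} J^I_t \le \sup_{t\in[0,T]} M_t$.

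Finally I would apply Doob's $L^2$ maximal inequality to $M$ to conclude:
\[
\mathbb{E}\Bigl[\bigl(\sup_{t\in[0,T]} J^I_t\bigr)^2\Bigr]\le \mathbb{E}\Bigl[\bigl(\sup_{t\in[0,T]} M_t\bigr)^2\Bigr]\le 4\,\mathbb{E}[M_T^2] = 4\,\mathbb{E}\bigl[e^{2\eta e^{RT}C_T}\bigr]<+\infty.
\]
There is no real obstacle here: all the work is packed into the pathwise domination of $e^{-\eta X^I_T}$ by $e^{\eta e^{RT}C_T}$, which rests on the nonnegativity of $R_0$, $c$ and the identity $g(z,I)=z$, $q^I_t=0$; once this is in hand the result follows immediately from Doob's inequality and the exponential moment hypothesis \eqref{eqn:p1_1}.
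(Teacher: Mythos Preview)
Your approach is essentially the same as the paper's: dominate $J^I_t$ by the martingale $\mathbb{E}[e^{\eta e^{RT}C_T}\mid\mathcal{F}_t]$ and apply Doob's $L^2$ inequality. The paper routes the bound through the factorization $J^I_t=e^{-\eta \bar{X}^I_t e^{RT}}V_t$ and bounds $V_t$ first, but the endpoint is identical.

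There is one slip worth fixing. You write $J^I_t=\mathbb{E}[e^{-\eta X^I_T}\mid\mathcal{F}_t]$, but $J^I_t$ is not the conditional expectation under the fixed control $u\equiv I$; by \eqref{eqn:snell_envelope} it is the essential \emph{infimum} over all $\bar u\in\mathcal{U}(t,I)$. What you actually have (and what you need) is the inequality
\[
J^I_t=\essinf_{\bar u\in\mathcal{U}(t,I)}\mathbb{E}\bigl[e^{-\eta X^{\bar u}_T}\mid\mathcal{F}_t\bigr]\le \mathbb{E}\bigl[e^{-\eta X^{I}_T}\mid\mathcal{F}_t\bigr],
\]
which holds because the constant control $\bar u\equiv I$ belongs to $\mathcal{U}(t,I)$. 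With ``$=$'' replaced by ``$\le$'' in that step, the rest of your argument is correct and matches the paper's proof.
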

\begin{proof}
By Eq. \eqref{eqn:Xbar} for $u=I$ (null reinsurance) we have that
\[
\bar{X}^I_t = R_0 + \int_0^t e^{-Rs}c_s\,ds - \int_0^t\int_0^{+\infty} e^{-Rs}z\,m(ds,dz).
\]
By definition of $V_t$ (see Eq. \eqref{eqn:V_t}), since $u=I\in\mathcal{U}$
\[
\begin{split}
0 \le V_t &\le \mathbb{E}[e^{-\eta e^{RT}(\bar{X}^I_T-\bar{X}^I_t)}\mid \mathcal{F}_t]\\
&\le \mathbb{E}[e^{\eta e^{RT}(C_T-C_t)}\mid\mathcal{F}_t]
\qquad \mathbb{P}\text{-a.s.} \quad \forall t\in[0,T].
\end{split}
\]
Analogously, by definition of $J^I_t$ (see Eq. \eqref{eqn:J^u}) we immediately get
\[
\begin{split}
0 \le J^I_t &= e^{-\eta \bar{X}^I_t e^{RT}}V_t\\
&\le e^{\eta C_t e^{RT}} \mathbb{E}[e^{\eta e^{RT}(C_T-C_t)}\mid\mathcal{F}_t]\\
&=  \mathbb{E}[e^{\eta e^{RT}C_T}\mid\mathcal{F}_t]
\qquad \mathbb{P}\text{-a.s.} \quad \forall t\in[0,T].
\end{split}
\]
It follows that
\[
J^I_t \le \mathbb{E}[e^{\eta e^{RT}C_T}\mid\mathcal{F}_t] \doteq m_t,
\]
where $\{m_t\}_{t\in[0,T]}$ is an $\mathbb{F}$-martingale. By Doob's martingale inequality, we have that
\[
\begin{split}
\mathbb{E}[(\sup_{t\in[0,T]}{J^I_t})^2] &\le \mathbb{E}[(\sup_{t\in[0,T]}{m_t})^2] \\
&\le 4\,\mathbb{E}[m_T^2] \\
&= 4\,\mathbb{E}[e^{2\eta e^{RT}C_T}] <+\infty .
\end{split} 
\]
\end{proof}

Our aim is to prove that the process $\{J^I_t\}_{t\in[0,T]}$ solves a BSDE driven by the compensated jump measure $m(dt, dz)-\pi_{t^-}(\lambda F_Z(dz))\,dt$. In order to derive this BSDE, we need the following additional hypotheses. Strengthening the assumptions is useful for deriving the BSDE at this stage, but in the Verification Theorem (see Theorem \ref{theorem:verification} below) we will come back to the weaker Assumption \ref{ass:verification}.

\begin{assumption}
\label{ass:bellman}
The following conditions are satisfied:
\begin{gather}
\label{eqn:A21}
\mathbb{E}[e^{2\eta p e^{RT}C_T}]<+\infty \qquad \forall p\ge1,\\
\label{eqn:A22}
\mathbb{E}[e^{2\eta p e^{RT}\int_0^T e^{-Rs}q^0_s\,ds}]<+\infty \qquad \forall p\ge1 .
\end{gather}
\end{assumption}

\begin{remark}
Under the classical premium calculation principles \eqref{eqn:evp} and \eqref{eqn:vp}, Eq. \eqref{eqn:A22} is fulfilled if we take the claim sizes distribution $F_Z(t,y,dz)=F_Z(dz)$ such that
\[
\int_0^{+\infty} z^2\,F_Z(dz) <+\infty,
\]
In fact, in this case $q^0_t$ is a bounded process and hence Eq. \eqref{eqn:A22} is clearly satisfied.
\end{remark}

\begin{prop}[Bellman's optimality principle]
\label{prop:Bellman_optimality}
Under Assumption \ref{ass:bellman} the following statements hold good:
\begin{enumerate}
\item $\{J^u_t\}_{t\in[0,T]}$ is an $\mathbb{F}$-sub-martingale for any $u\in\mathcal{U}$;
\item $\{J^{u^*}_t\}_{t\in[0,T]}$ is an $\mathbb{F}$-martingale if and only if $u^*\in\mathcal{U}$ is an optimal control.
\end{enumerate}
\end{prop}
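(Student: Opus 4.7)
The plan is to follow the classical dynamic programming (Bellman) route adapted to the essential-infimum formulation: prove that the family $\{\mathbb{E}[e^{-\eta X^{\bar u}_T}\mid\mathcal{F}_t]\colon \bar u\in\mathcal{U}(t,u)\}$ is downward directed, use this to exchange the essential infimum with a conditional expectation, and then deduce the sub/martingale properties.

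\medskip
\noindent\textbf{Step 1 (pasting / downward directedness).} Given $\bar u_1,\bar u_2\in\mathcal{U}(t,u)$, set
\[
B\doteq\bigl\{\mathbb{E}[e^{-\eta X^{\bar u_1}_T}\mid\mathcal{F}_t]\le \mathbb{E}[e^{-\eta X^{\bar u_2}_T}\mid\mathcal{F}_t]\bigr\}\in\mathcal{F}_t,
\]
and define the pasted control
\[
\bar u_3(s)\doteq u(s)\mathbbm{1}_{[0,t]}(s)+\bigl(\bar u_1(s)\mathbbm{1}_{B}+\bar u_2(s)\mathbbm{1}_{B^c}\bigr)\mathbbm{1}_{(t,T]}(s).
\]
Since $B\in\mathcal{F}_t$ and both $\bar u_i$ are $\mathbb{F}$-predictable with values in $[0,I]$, $\bar u_3$ is $\mathbb{F}$-predictable and $[0,I]$-valued, hence admissible by Proposition \ref{prop:admissibleproc} (which applies under Assumption \ref{ass:verification}, a fortiori under Assumption \ref{ass:bellman}). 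From the definition of $X^{\bar u}_T$ in Eq.~\eqref{eqn:wealth_sol} and the fact that $\bar u_1,\bar u_2$ coincide with $u$ on $[0,t]$, one gets $\mathbb{E}[e^{-\eta X^{\bar u_3}_T}\mid\mathcal{F}_t]=\min\bigl(\mathbb{E}[e^{-\eta X^{\bar u_1}_T}\mid\mathcal{F}_t],\mathbb{E}[e^{-\eta X^{\bar u_2}_T}\mid\mathcal{F}_t]\bigr)$. Standard arguments (e.g. Neveu) then yield a sequence $\{\bar u^{(n)}\}\subset\mathcal{U}(t,u)$ such that $\mathbb{E}[e^{-\eta X^{\bar u^{(n)}}_T}\mid\mathcal{F}_t]\downarrow J^u_t$ $\mathbb{P}$-a.s.

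\medskip
\noindent\textbf{Step 2 (sub-martingale).} Fix $s\le t$ and $u\in\mathcal{U}$. By monotone convergence and the tower property,
\[
\mathbb{E}[J^u_t\mid\mathcal{F}_s]=\lim_n\mathbb{E}\bigl[\mathbb{E}[e^{-\eta X^{\bar u^{(n)}}_T}\mid\mathcal{F}_t]\mid\mathcal{F}_s\bigr]
=\lim_n\mathbb{E}[e^{-\eta X^{\bar u^{(n)}}_T}\mid\mathcal{F}_s]\ge J^u_s,
\]
because each $\bar u^{(n)}\in\mathcal{U}(t,u)\subseteq\mathcal{U}(s,u)$ (they agree with $u$ on $[0,s]$ as well). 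To justify the interchange of limit and $\mathbb{E}[\cdot\mid\mathcal{F}_s]$, I need a uniform dominating bound; this is where Assumption \ref{ass:bellman} enters. Arguing as in the proof of Proposition \ref{prop:admissibleproc}, one shows $\mathbb{E}[e^{-\eta X^{\bar u}_T}\mid\mathcal{F}_t]\le \tfrac12\mathbb{E}[e^{2\eta e^{RT}\int_0^T e^{-Rs}q^0_s\,ds}+e^{2\eta e^{RT}C_T}\mid\mathcal{F}_t]$, and this dominating martingale is in $L^p$ for every $p\ge 1$ by Assumption \ref{ass:bellman}, so the Vitali/uniform-integrability criterion applies. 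Also, by Proposition \ref{prop:VJfinite} combined with Eq.~\eqref{eqn:diff_equality}, each $J^u_t$ is integrable (the factor $e^{\eta(\bar X^I_t-\bar X^u_t)e^{RT}}$ is in $L^p$ for every $p\ge 1$ under Assumption \ref{ass:bellman}), so the sub-martingale inequality makes sense.

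\medskip
\noindent\textbf{Step 3 (characterization of optimality).} Note that $J^u_T=e^{-\eta X^u_T}$ and $J^u_0=\essinf_{\bar u\in\mathcal{U}}\mathbb{E}[e^{-\eta X^{\bar u}_T}]\equiv V^*$, the common deterministic optimal value (since $\mathcal{F}_0$ is trivial up to null sets and $\mathcal{U}(0,u)=\mathcal{U}$ up to the value at $t=0$, which is irrelevant for predictable processes). Taking expectations in part (1) gives $V^*=\mathbb{E}[J^u_0]\le\mathbb{E}[J^u_T]=\mathbb{E}[e^{-\eta X^u_T}]$ for every $u\in\mathcal{U}$. If $u^*$ is optimal, then equality holds throughout, which combined with the sub-martingale property of $J^{u^*}$ and $\mathbb{E}[J^{u^*}_0]=\mathbb{E}[J^{u^*}_T]$ forces $\{J^{u^*}_t\}$ to be a true martingale. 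Conversely, if $\{J^{u^*}_t\}$ is a martingale, then $V^*=J^{u^*}_0=\mathbb{E}[J^{u^*}_T]=\mathbb{E}[e^{-\eta X^{u^*}_T}]$, so $u^*$ attains the infimum and is optimal.

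\medskip
\noindent\textbf{Expected main obstacle.} The delicate point is Step 1, where I must verify both the admissibility of the pasted control $\bar u_3$ and the identity $\mathbb{E}[e^{-\eta X^{\bar u_3}_T}\mid\mathcal{F}_t]=\min(\cdot,\cdot)$; the latter relies crucially on $B\in\mathcal{F}_t$ and on the fact that $X^{\bar u}_T-X^u_t e^{R(T-t)}$ depends on $\bar u$ only through its values on $(t,T]$. Secondarily, the interchange of limit and conditional expectation in Step 2 needs a uniform $L^p$ dominator, which is precisely why the stronger Assumption \ref{ass:bellman} is invoked here rather than the weaker Assumption \ref{ass:verification}.
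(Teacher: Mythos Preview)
Your argument is correct but follows a genuinely different route from the paper. The paper does not prove the statement from scratch: it invokes \cite[Prop.~4.1]{lim2011} and only checks the hypothesis required there, namely that $\mathbb{E}\bigl[\sup_{s\in[t,T]}e^{-\eta p X^u_{t,x}(s)}\bigr]<\infty$ for every $u\in\mathcal{U}$ and $p\ge1$, via the pointwise bound $e^{-\eta p X^u_{t,x}(s)}\le\tfrac12\bigl(e^{2\eta p e^{RT}\int_0^T e^{-Rr}q^0_r\,dr}+e^{2\eta p e^{RT}C_s}\bigr)$ together with Assumption~\ref{ass:bellman}. You instead run the classical pasting/downward-directedness argument, exchange the essential infimum with the conditional expectation, and read off optimality from constancy of the mean.

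Your route is self-contained and makes the role of the integrability assumptions transparent. In fact it is slightly more economical than you indicate: since the approximating sequence in Step~2 is nonnegative and \emph{decreasing}, the interchange of limit and $\mathbb{E}[\,\cdot\mid\mathcal{F}_s]$ needs only integrability of the first term (which is just admissibility of $\bar u^{(1)}$), and $J^u_t$ is integrable simply because $0\le J^u_t\le\mathbb{E}[e^{-\eta X^u_T}\mid\mathcal{F}_t]$; so your argument actually goes through under the weaker Assumption~\ref{ass:verification}. The paper's approach, by contrast, genuinely needs the full $L^p$ bounds in Assumption~\ref{ass:bellman} to feed the cited result from \cite{lim2011}.
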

\begin{proof}
By \cite[Prop. 4.1]{lim2011}, the result is valid if $\forall u\in\mathcal{U}$ and $\forall p\ge1$
\[
\mathbb{E}[\sup_{s\in[t,T]}{e^{-\eta p X^u_{t,x}(s)}}]<+\infty \qquad \forall(t,x)\in[0,T]\times\mathbb{R},
\]
where $\{X^u_{t,x}(s)\}_{s\in[t,T]}$ denotes the solution to Eq. \eqref{eqn:wealth_proc} with initial condition $(t,x)\in[0,T]\times\mathbb{R}$. We observe that
\[
\begin{split}
e^{-\eta p X^u_{t,x}(s)} &\le e^{\eta p e^{Rs}\int_t^s e^{-Rr}q^u_r\,dr}e^{\eta p e^{Rs}C_s}\\
&\le \frac{1}{2} \bigl(e^{2\eta p e^{Rs}\int_t^s e^{-Rr}q^u_r\,dr}+e^{2\eta p e^{Rs}C_s}\bigr)
\qquad \mathbb{P}\text{-a.s.} \quad \forall t\in[0,T],
\end{split}
\]
hence $\forall(t,x)\in[0,T]\times\mathbb{R}$ we get
\[
\mathbb{E}[\sup_{s\in[t,T]}{e^{-\eta p X^u_{t,x}(s)}}] \le \frac{1}{2} \bigl(
\mathbb{E}[e^{2\eta p e^{RT}\int_0^T e^{-Rs}q^0_s\,ds}] 
+ \mathbb{E}[e^{2\eta p e^{RT}C_T}] \bigr)<+\infty.
\]
\end{proof}

\begin{remark}
Under Assumption \ref{ass:bellman} we can apply Bellman's optimality principle (see Proposition \ref{prop:Bellman_optimality}). Since $u=I\in\mathcal{U}$, $\{J^I_t\}_{t\in[0,T]}$ is an $\mathbb{F}$-sub-martingale.
Consequently, by Doob-Meyer decomposition and the martingale representation theorems%
\footnote{E.g. see~\cite[Theorem T8]{bremaud:pointproc}.}, it admits the following expression:
\begin{equation}
\label{eqn:J_doobmeyer}
J^I_t = \int_0^t\int_0^{+\infty}\Gamma(s,z)\bigl(m(ds, dz)-\pi_{s^-}(\lambda F_Z(dz))\,ds)\bigr) + A_t,
\end{equation}
where by \eqref{eqn:4bis} $\Gamma(t,z)$ is a $[0,+\infty)$-indexed $\mathbb{F}$-predictable process such that
\[
\mathbb{E}\biggl[\int_0^T\int_0^{+\infty}\abs{\Gamma(s,z)}^2\,\pi_{s^-}(\lambda F_Z(dz))\,ds\biggr]< +\infty,
\]
and $\{A_t\}_{t\in [0,T]}$ is an increasing $\mathbb{F}$-predictable process such that $\mathbb{E}[\int_0^T A_s^2\,ds]<+\infty$.
\end{remark}

\begin{lemma}[Snell envelope decomposition]
\label{J_differential}
Under Assumption \ref{ass:bellman}, for any $u\in\mathcal{U}$ the Snell envelope $\{J^u_t\}_{t\in [0,T]}$ admits the following representation:
\begin{equation}
\label{eqn:Ju_diff}
dJ^u_t =dM^u_t + e^{\eta (\bar{X}^I_t-\bar{X}^u_t) e^{RT}}\bigl[A_t - f(t,\Gamma(t,z),J^I_t,u_t)\bigr]\,dt,
\end{equation}
where
\begin{align}
M^u_t &\doteq \int_0^t e^{\eta (\bar{X}^I_{s^-}-\bar{X}^u_{s^-}) e^{RT}} \int_0^{+\infty}\Gamma(s,z)e^{-\eta e^{R(T-s)}(z-g(z,u_s))}\bigl(m(ds, dz)-\pi_{s^-}(\lambda F_Z(dz))\,ds)\bigr) \notag\\
\label{eqn:M^u_martingale}
&+\int_0^t J^I_{s^-}e^{\eta (\bar{X}^I_{s^-}-\bar{X}^u_{s^-}) e^{RT}}\int_0^{+\infty} \biggl(e^{-\eta e^{R(T-s)}(z-g(z,u_s))}-1\biggr)\bigl(m(ds, dz)-\pi_{s^-}(\lambda F_Z(dz))\,ds)\bigr)
\end{align}
is an $\mathbb{F}$-martingale and
\begin{align}
f(t,\Gamma(t,z), J^I_t,u_t)
&\doteq - J^I_{t^-}\eta e^{R(T-t)}q^u_t\notag\\
\label{eqn:driver_f}
&-\int_0^{+\infty} \bigl(J^I_{t^-}+\Gamma(t,z)\bigr)\biggl(e^{-\eta e^{R(T-t)}(z-g(z,u_t))}-1\biggr)\pi_{t^-}(\lambda F_Z(dz)).
\end{align}
\end{lemma}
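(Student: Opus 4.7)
The strategy is to apply the Itô product formula to the identity $J^u_t = e^{\eta (\bar X^I_t - \bar X^u_t) e^{RT}} J^I_t$ from Eq. \eqref{eqn:diff_equality}, substitute the Doob-Meyer representation \eqref{eqn:J_doobmeyer} of $J^I_t$, and then regroup so as to read off both the martingale part $M^u_t$ and the drift. I first introduce the exponent $\Phi_t \doteq \eta e^{RT}(\bar X^I_t - \bar X^u_t)$. Using the definition of $\bar X^u_t$ in \eqref{eqn:Xbar} together with $q^I_t\equiv 0$ (Assumption \ref{def:reinsurance_premium}) and $g(z,I)=z$, one obtains
\[
d\Phi_t = \eta e^{R(T-t)} q^u_t\, dt - \int_0^{+\infty} \eta e^{R(T-t)}(z - g(z,u_t))\, m(dt,dz),
\]
and Itô's formula for pure-jump semimartingales yields
\[
d e^{\Phi_t} = e^{\Phi_{t^-}} \eta e^{R(T-t)} q^u_t\, dt + \int_0^{+\infty} e^{\Phi_{t^-}}\bigl( e^{-\eta e^{R(T-t)}(z-g(z,u_t))} - 1\bigr)\, m(dt,dz).
\]

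Next I apply the product rule $d(e^{\Phi_t} J^I_t) = e^{\Phi_{t^-}} dJ^I_t + J^I_{t^-} d e^{\Phi_t} + d[e^{\Phi}, J^I]_t$. Since the jumps of $\Phi$ and of $J^I$ both occur exactly at the points of $m(dt,dz)$, the covariation reduces to
\[
d[e^{\Phi}, J^I]_t = \int_0^{+\infty} e^{\Phi_{t^-}} \Gamma(t,z) \bigl( e^{-\eta e^{R(T-t)}(z-g(z,u_t))} - 1\bigr)\, m(dt,dz).
\]
Collecting all integrals against $m(dt,dz)$, the integrand becomes
\[
e^{\Phi_{t^-}}\Bigl[ \Gamma(t,z)\, e^{-\eta e^{R(T-t)}(z-g(z,u_t))} + J^I_{t^-} \bigl( e^{-\eta e^{R(T-t)}(z-g(z,u_t))} - 1\bigr) \Bigr].
\]
I add and subtract the $\mathbb{F}$-compensator $\pi_{t^-}(\lambda F_Z(dz))\,dt$ in each stochastic integral, splitting it into a compensated (martingale) piece and a drift piece. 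The two compensated pieces reproduce exactly the two summands that define $M^u_t$ in \eqref{eqn:M^u_martingale}.

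It remains to collect the residual $dt$-terms: the compensator piece of the jump integral, the term $-\,e^{\Phi_{t^-}}\int_0^{+\infty}\Gamma(t,z)\,\pi_{t^-}(\lambda F_Z(dz))\,dt$ arising from $dJ^I_t$, the term $J^I_{t^-} e^{\Phi_{t^-}}\eta e^{R(T-t)} q^u_t\, dt$ arising from $de^{\Phi_t}$, and the term $e^{\Phi_{t^-}}\, dA_t$. A direct cancellation and factorization of $e^{\Phi_{t^-}}$ shows that the sum of the first three contributions equals $-\,e^{\Phi_{t^-}} f(t,\Gamma(t,z),J^I_t,u_t)\, dt$ for the driver $f$ defined in \eqref{eqn:driver_f}, so the total drift becomes $e^{\Phi_{t^-}}\bigl[A_t - f(t,\Gamma(t,z),J^I_t,u_t)\bigr]\, dt$, in agreement with \eqref{eqn:Ju_diff}.

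The main technical hurdle is verifying that $M^u_t$ is a genuine $\mathbb{F}$-martingale, not merely a local one. Because $g(z,u)\le z$ by Remark \ref{g_properties}, the factor $e^{-\eta e^{R(T-t)}(z-g(z,u_t))}$ takes values in $[0,1]$, so the two jump integrands are dominated in modulus by $|\Gamma(t,z)|$ and by $J^I_{t^-}$, respectively. The $L^2$-integrability of $\Gamma$ with respect to the compensator (from the Doob-Meyer representation of $J^I$) and the bound $\mathbb{E}[\sup_{t\in[0,T]}(J^I_t)^2]<+\infty$ from Proposition \ref{prop:VJfinite} control these factors; the remaining weight $e^{\Phi_{t^-}}$ is handled by Hölder's inequality together with Assumption \ref{ass:bellman}, which supplies exponential moments of $\bar X^u$ of every order $p\ge 1$. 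Uniqueness of the Doob-Meyer decomposition of the sub-martingale $\{J^u_t\}$ (whose sub-martingale property is guaranteed by Proposition \ref{prop:Bellman_optimality}) then provides a consistency check that the identified $M^u_t$ and drift are correct.
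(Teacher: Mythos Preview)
Your proposal is correct and follows essentially the same route as the paper: product rule applied to $J^u_t=e^{\Phi_t}J^I_t$, Itô's formula for $e^{\Phi_t}$, compensation of the jump integrals, and identification of $M^u$ and $f$. The only difference is in the martingale verification of $M^u$: the paper does not invoke higher-order moments via H\"older, but instead uses the pathwise bound $e^{\Phi_{t}}\le e^{\eta e^{RT}\int_0^T e^{-Rs}q^0_s\,ds}$ (coming from $z\ge g(z,u)$ and $q^u\le q^0$) and then the elementary inequality $ab\le\tfrac12(a^2+b^2)$, so that only the $p=1$ conditions \eqref{eqn:p1_1}--\eqref{eqn:p1_2} together with the $L^2$-bounds on $\Gamma$ and $J^I$ are needed.
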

\begin{proof}
Since $J^u_t=e^{\eta (\bar{X}^I_t-\bar{X}^u_t) e^{RT}} J^I_t$ by Eq. \eqref{eqn:diff_equality}, we focus on the computation of the latter term.
By the product rule for stochastic integrals we get that
\begin{multline}
\label{eqn:product_rule_u}
d(e^{\eta (\bar{X}^I_t-\bar{X}^u_t) e^{RT}} J^I_t) =
e^{\eta (\bar{X}^I_{t^-}-\bar{X}^u_{t^-}) e^{RT}}\,d J^I_t
+ J^I_{t^-}\,d(e^{\eta (\bar{X}^I_t-\bar{X}^u_t) e^{RT}}) \\
+d\biggl(\sum_{s\le t}{\Delta J^I_s\Delta e^{\eta (\bar{X}^I_s-\bar{X}^u_s) e^{RT}}}\biggr).
\end{multline}
Let us evaluate~\eqref{eqn:product_rule_u} item by item. Using the expression \eqref{eqn:J_doobmeyer} we can easily obtain the first term. By Eq. \eqref{eqn:Xbar} we get
\begin{equation}
\label{eqn:discountedXI}
\bar{X}^I_t-\bar{X}^u_t = \int_0^t e^{-Rs}q^u_s\,ds - \int_0^t\int_0^{+\infty} e^{-Rs}(z-g(z,u_s))\,m(ds,dz).
\end{equation}
Hence by It\^o's formula we have that
\[
\begin{split}
d(e^{\eta (\bar{X}^I_t-\bar{X}^u_t) e^{RT}}) &= \eta e^{RT}e^{\eta (\bar{X}^I_t-\bar{X}^u_t) e^{RT}}e^{-Rt}q^u_t\,dt\\
&+d\biggl(\sum_{s\le t}{e^{\eta (\bar{X}^I_{s^-}-\bar{X}^u_{s^-}) e^{RT}}\biggl(e^{\eta e^{RT} \bigl((\bar{X}^I_s-\bar{X}^u_s)- (\bar{X}^I_{s^-}-\bar{X}^u_{s^-})\bigr)}-1\biggr)}\biggr)\\
&= \eta e^{RT}e^{\eta (\bar{X}^I_t-\bar{X}^u_t) e^{RT}}e^{-Rt}q^u_t\,dt\\
&+e^{\eta (\bar{X}^I_{t^-}-\bar{X}^u_{t^-}) e^{RT}}\int_0^{+\infty} \biggl(e^{-\eta e^{R(T-t)}(z-g(z,u_t))}-1\biggr)m(dt,dz).
\end{split}
\]

By the last equation we also find out that
\[
d\biggl(\sum_{s\le t}{\Delta J^I_s\Delta e^{\eta (\bar{X}^I_{s}-\bar{X}^u_{s}) e^{RT}}}\biggr)
=e^{\eta (\bar{X}^I_{t^-}-\bar{X}^u_{t^-}) e^{RT}}\int_0^{+\infty} \Gamma(t,z)\biggl(e^{-\eta e^{R(T-t)}(z-g(z,u_t))}-1\biggr)m(dt,dz).
\]
Let us come back to~\eqref{eqn:product_rule_u}. We have just obtained that
\[
\begin{split}
d(e^{\eta (\bar{X}^I_t-\bar{X}^u_t) e^{RT}} J^I_t) &=
e^{\eta (\bar{X}^I_{t^-}-\bar{X}^u_{t^-}) e^{RT}}\biggl[\int_0^{+\infty}\Gamma(t,z)\bigl(m(dt, dz)-\pi_{t^-}(\lambda F_Z(dz))\,dt)\bigr) + dA_t\biggr]\\
&+ J^I_{t^-}\eta e^{RT}e^{\eta (\bar{X}^I_t-\bar{X}^u_t) e^{RT}}e^{-Rt}q^u_t\,dt\\
&+ J^I_{t^-}e^{\eta (\bar{X}^I_{t^-}-\bar{X}^u_{t^-}) e^{RT}}\int_0^{+\infty} \biggl(e^{-\eta e^{R(T-t)}(z-g(z,u_t))}-1\biggr)m(dt,dz)\\
&+e^{\eta (\bar{X}^I_{t^-}-\bar{X}^u_{t^-}) e^{RT}}\int_0^{+\infty} \Gamma(t,z)\biggl(e^{-\eta e^{R(T-t)}(z-g(z,u_t))}-1\biggr)m(dt,dz).
\end{split}
\]
After some calculations, we rewrite it as
\[
\begin{split}
&d(e^{\eta (\bar{X}^I_t-\bar{X}^u_t) e^{RT}} J^I_t) \\
& \quad =e^{\eta (\bar{X}^I_{t^-}-\bar{X}^u_{t^-}) e^{RT}} \int_0^{+\infty}\Gamma(t,z)e^{-\eta e^{R(T-t)}(z-g(z,u_t))}\bigl(m(dt, dz)-\pi_{t^-}(\lambda F_Z(dz))\,dt)\bigr)\\
& \quad +J^I_{t^-}e^{\eta (\bar{X}^I_{t^-}-\bar{X}^u_{t^-}) e^{RT}}\int_0^{+\infty} \biggl(e^{-\eta e^{R(T-t)}(z-g(z,u_t))}-1\biggr)\bigl(m(dt, dz)-\pi_{t^-}(\lambda F_Z(dz))\,dt)\bigr)\\
& \quad +e^{\eta (\bar{X}^I_{t^-}-\bar{X}^u_{t^-}) e^{RT}}\,dA_t
+ J^I_{t^-}\eta e^{RT}e^{\eta (\bar{X}^I_t-\bar{X}^u_t) e^{RT}}e^{-Rt}q^u_t\,dt\\
& \quad +e^{\eta (\bar{X}^I_{t^-}-\bar{X}^u_{t^-}) e^{RT}}\int_0^{+\infty} \bigl(J^I_{t^-}+\Gamma(t,z)\bigr)\biggl(e^{-\eta e^{R(T-t)}(z-g(z,u_t))}-1\biggr)\pi_{t^-}(\lambda F_Z(dz))\,dt.
\end{split}
\]
By definition of $\{M^u_t\}_{t\in [0,T]}$ and $\{f(t,\Gamma(t,z), J^I_t,u_t)\}_{t\in [0,T]}$ (see Eqs.~\eqref{eqn:M^u_martingale} and \eqref{eqn:driver_f}, respectively),
we obtain the expression \eqref{eqn:Ju_diff}.

In order to complete the proof, we need to show that $\{M^u_t\}_{t\in [0,T]}$ is an $\mathbb{F}$-martingale for any $u\in\mathcal{U}$, that is
\begin{gather*}
\mathbb{E}\biggl[ \int_0^T e^{\eta (\bar{X}^I_{s^-}-\bar{X}^u_{s^-}) e^{RT}} \int_0^{+\infty}\abs{\Gamma(s,z)}e^{-\eta e^{R(T-s)}(z-g(z,u_s))}\pi_{s^-}(\lambda F_Z(dz))\,ds \biggr]<+\infty,\\
\mathbb{E}\biggl[ \int_0^T J^I_{s^-} e^{\eta (\bar{X}^I_{s^-}-\bar{X}^u_{s^-}) e^{RT}}\int_0^{+\infty} \abs*{e^{-\eta e^{R(T-s)}(z-g(z,u_s))}-1} \pi_{s^-}(\lambda F_Z(dz))\,ds \biggr]<+\infty.
\end{gather*}
In the rest of the proof $C>0$ denotes a generic constant.
By Remark \ref{remark:dualprojection} and Eq. \eqref{eqn:discountedXI} we observe that
\[
\begin{split}
&\mathbb{E}\biggl[ \int_0^T e^{\eta (\bar{X}^I_{s^-}-\bar{X}^u_{s^-}) e^{RT}} \int_0^{+\infty}\abs{\Gamma(s,z)}e^{-\eta e^{R(T-s)}(z-g(z,u_s))}\lambda_s F_Z(s,Y_s,dz)\,ds \biggr] \\
&\le \mathbb{E}\biggl[ e^{\eta e^{RT} \int_0^T e^{-Rs}q^0_s\,ds} \int_0^T \int_0^{+\infty}\abs{\Gamma(s,z)} \lambda_s F_Z(s,Y_s,dz)\,ds \biggr] \\
&\le C\, \mathbb{E}\biggl[ e^{2\eta e^{RT} \int_0^T e^{-Rs}q^0_s\,ds}\biggr] + C\, \mathbb{E}\biggl[\int_0^T \int_0^{+\infty}\abs{\Gamma(s,z)}^2 \pi_{s^-}(\lambda F_Z(dz))\,ds \biggr]<+\infty.
\end{split}
\]
Now let us evaluate the second expectation. By Remark \ref{remark:dualprojection}, Eq. \eqref{eqn:discountedXI} and Eq. \eqref{eqn:4bis}
\[
\begin{split}
&\mathbb{E}\biggl[ \int_0^T J^I_{s^-} e^{\eta (\bar{X}^I_{s^-}-\bar{X}^u_{s^-}) e^{RT}}\int_0^{+\infty} \abs*{e^{-\eta e^{R(T-s)}(z-g(z,u_s))}-1} \lambda_s F_Z(s,Y_s,dz)\,ds \biggr] \\
&\le \Lambda \mathbb{E}\biggl[\int_0^T J^I_{s^-} e^{\eta e^{RT} \int_0^T e^{-Rr}q^0_r\,dr}\,ds \biggr] \\
&\le C \biggl(\mathbb{E}\biggl[\int_0^T \abs{J^I_{s^-}}^2 \,ds \biggr]+\mathbb{E}\bigl[ e^{2\eta e^{RT} \int_0^T e^{-Rs}q^0_s\,ds} \bigr]\biggr)<+\infty.
\end{split}
\]
\end{proof}


\begin{definition}
\label{def:BSDEsol}
We introduce the following classes of stochastic processes:
\begin{itemize}
\item $\mathcal{L}^2$ is the space of c\`adl\`ag $\mathbb{F}$-adapted processes $\{\hat{J_t}\}_{t\in[0,T]}$ such that
	\begin{equation}
	\label{eqn:supJhat_finite}
	\mathbb{E}\biggl[\int_0^T \abs{\hat{J_t}}^2\,dt\biggr]<+\infty.
	\end{equation}
\item $\mathcal{\tilde{L}}^2$ is the space of $[0,+\infty)$-indexed $\mathbb{F}$-predictable processes $\{\hat{\Gamma}(t,z), z\in[0,+\infty)\}_{t\in[0,T]}$ such that 
	\begin{equation}
	\label{eqn:gamma_finite}
	\mathbb{E}\biggl[\int_0^T\int_0^{+\infty}\abs{\hat{\Gamma}(t,z)}^2\,\pi_{t^-}(\lambda F_Z(dz))\,dt\biggr]< +\infty.
	\end{equation}
\end{itemize}
\end{definition}

\begin{prop}
\label{prop:opt_control_maximiser}
Let $\{u^*_t\}_{t\in[0,T]}$ be an optimal control for the optimization problem \eqref{eqn:V_t}. Under Assumption \ref{ass:bellman} $(J^I_t,\Gamma(t,z))\in\mathcal{L}^2\times\mathcal{\tilde{L}}^2$ is a solution to the following BSDE:
\begin{equation}
\label{eqn:BSDE}
J^I_t = \xi - \int_t^T\int_0^{+\infty}\Gamma(s,z)\bigl(m(ds,dz)-\pi_{s^-}(\lambda F_Z(dz))\,ds\bigr)) -\int_t^T\esssup_{u\in\mathcal{U}}{f(s,\Gamma(s,z),J^I_s,u_s)}\,ds,
\end{equation}
where $\{f(t,\Gamma(t,z), J^I_t,u_t)\}_{t\in [0,T]}$ is defined in~\eqref{eqn:driver_f} and $\xi=e^{-\eta X^I_T}$.\\
Moreover, $f(t,\Gamma(t,z), J^I_t,u_t)$ attains its maximum in $u^*_t$, that is
\begin{equation}
\label{eqn:sup_f}
f(t,\Gamma(t,z), J^I_t,u^*_t)=\esssup_{u\in\mathcal{U}}{f(t,\Gamma(t,z), J^I_t,u_t)}.
\end{equation}
\end{prop}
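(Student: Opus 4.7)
The plan is to combine Bellman's optimality principle (Proposition \ref{prop:Bellman_optimality}) with the Snell-envelope decomposition of Lemma \ref{J_differential} so as to identify the drift of the Doob--Meyer representation of $J^I$ along an optimal strategy. Recall that Lemma \ref{J_differential} expresses, for every $u\in\mathcal{U}$,
$$dJ^u_t = dM^u_t + e^{\eta(\bar{X}^I_{t^-}-\bar{X}^u_{t^-})e^{RT}}\bigl[dA_t - f(t,\Gamma(t,z),J^I_t,u_t)\,dt\bigr],$$
with $M^u$ an $\mathbb{F}$-martingale and $(A,\Gamma)$ coming from the Doob--Meyer decomposition~\eqref{eqn:J_doobmeyer} of $J^I$.

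First I would exploit the submartingale property of $J^u$ granted by Proposition \ref{prop:Bellman_optimality}(i): since the prefactor $e^{\eta(\bar{X}^I-\bar{X}^u)e^{RT}}$ is strictly positive, the nonnegativity of the drift forces the random-measure inequality $f(t,\Gamma(t,z),J^I_t,u_t)\,dt\le dA_t$ on $[0,T]\times\Omega$, for every $u\in\mathcal{U}$. Next I would appeal to Proposition \ref{prop:Bellman_optimality}(ii): because $u^*$ is optimal, $J^{u^*}$ is a true martingale, and the same positive-prefactor argument yields $dA_t = f(t,\Gamma(t,z),J^I_t,u^*_t)\,dt$. In particular, $A$ turns out to be absolutely continuous with density $f(\cdot,\Gamma,J^I,u^*)$; confronting the two relations gives $f(t,\Gamma(t,z),J^I_t,u^*_t)\ge f(t,\Gamma(t,z),J^I_t,u_t)$, $dt\otimes d\mathbb{P}$-a.e.\ on $[0,T]\times\Omega$, for every $u\in\mathcal{U}$. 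Since, by Proposition \ref{prop:admissibleproc}, $\mathcal{U}$ contains every $\mathbb{F}$-predictable $[0,I]$-valued process, this inequality extends to the essential supremum and delivers \eqref{eqn:sup_f}.

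Once \eqref{eqn:sup_f} is at hand, substituting $dA_t = \esssup_{u\in\mathcal{U}} f(t,\Gamma(t,z),J^I_t,u_t)\,dt$ into the Doob--Meyer decomposition~\eqref{eqn:J_doobmeyer} and using the terminal value $J^I_T = e^{-\eta X^I_T}=:\xi$, integration on $[t,T]$ produces the BSDE~\eqref{eqn:BSDE}. The integrability claim $J^I\in\mathcal{L}^2$ is immediate from Proposition \ref{prop:VJfinite}, since $\mathbb{E}[(\sup_t J^I_t)^2]<+\infty$ on the finite horizon $[0,T]$; the claim $\Gamma\in\tilde{\mathcal{L}}^2$ is the square-integrability bound already recorded right after~\eqref{eqn:J_doobmeyer}, built into the martingale representation theorem for random measures applied to the square-integrable submartingale $J^I$.

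The step I expect to be the main obstacle is the passage from the measure inequality $f(t,\cdots,u_t)\,dt\le dA_t$ to a pointwise ordering of Radon--Nikodym densities; this requires first establishing, via the martingale identification under $u^*$, that $A$ is absolutely continuous with respect to Lebesgue measure on $[0,T]$, and only then can one read off \eqref{eqn:sup_f} from the two drift relations. A secondary subtlety, which properly belongs to the existence of $u^*$ (postponed by the paper to later sections), is the measurable-selection problem of realising the pointwise maximiser of $u\mapsto f(t,\Gamma(t,z),J^I_t,u)$ as an $\mathbb{F}$-predictable process; in view of the continuity hypotheses on $q$ and $g$ in Assumption \ref{def:reinsurance_premium} and Remark \ref{g_properties}, this should follow from a standard maximum-theorem argument, but it is the place where care will be required.
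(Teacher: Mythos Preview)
Your proposal is correct and follows essentially the same route as the paper: apply the submartingale half of Bellman's principle to every $u\in\mathcal{U}$ via the decomposition of Lemma~\ref{J_differential} to get a drift inequality, apply the martingale half to $u^*$ to get a drift equality, and combine to identify $A$ and obtain~\eqref{eqn:sup_f}; then substitute into~\eqref{eqn:J_doobmeyer} with the terminal condition $J^I_T=e^{-\eta X^I_T}$.

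If anything, you are more careful than the paper on the point you flag as the ``main obstacle'': the paper writes the drift in~\eqref{eqn:Ju_diff} as $[A_t - f(\cdots)]\,dt$ and then records $A_t\ge f(t,\Gamma(t,z),J^I_t,u_t)$ directly, implicitly treating $A$ as absolutely continuous from the outset. Your ordering, namely first using the martingale identity at $u^*$ to deduce $dA_t=f(t,\Gamma,J^I,u^*_t)\,dt$ (hence absolute continuity), and only then reading off the pointwise inequality from the submartingale relation, is the cleaner way to make the argument rigorous. Your side remarks on the $\mathcal{L}^2\times\tilde{\mathcal{L}}^2$ membership (via Proposition~\ref{prop:VJfinite} and the bound recorded after~\eqref{eqn:J_doobmeyer}) are also not spelled out in the paper's proof but are needed for the stated conclusion. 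The measurable-selection issue you raise at the end is indeed deferred to Remark~\ref{existence_ustar} and Section~\ref{section:optimal_reinsurance}, so it is not part of the present proof.
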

\begin{proof}
For any admissible control $u\in\mathcal{U}$, by Bellman's optimality principle (Proposition \ref{prop:Bellman_optimality}) 
$\{J^u_t\}_{t\in[0,T]}$ is an $\mathbb{F}$-sub-martingale and thus by Eq. \eqref{eqn:Ju_diff} we readily get $\forall u\in\mathcal{U}$
\begin{equation}
\label{eqn:A_t_ge_f}
A_t \ge f(t,\Gamma(t,z), J^I_t,u_t) \qquad \mathbb{P}\text{-a.s.}\quad\forall t\in[0,T].
\end{equation}
Let $\{u^*_t\}_{t\in[0,T]}$ be an optimal control for the problem \eqref{eqn:V_t}. By Bellman's optimality principle $\{J^{u^*}_t\}_{t\in[0,T]}$ is an $\mathbb{F}$-martingale and by Lemma \ref{J_differential} this is true if only if
\[
A_t=f(t,\Gamma(t,z), J^I_t,u^*_t).
\]
Combining this result with \eqref{eqn:A_t_ge_f} leads to
\[
\esssup_{u\in\mathcal{U}}{f(t,\Gamma(t,z), J^I_t,u_t)}\ge f(t,\Gamma(t,z), J^I_t,u^*_t)
=A_t\ge \esssup_{u\in\mathcal{U}}{f(t,\Gamma(t,z), J^I_t,u_t)},
\]
which implies Eq. \eqref{eqn:sup_f}. Now, using the Doob-Meyer representation \eqref{eqn:J_doobmeyer}, we conclude that $(J^I_t,\Gamma(t,z))$ is a solution to \eqref{eqn:BSDE}, with the terminal condition easily derived by Eq. \eqref{eqn:J^u}.
\end{proof}

\begin{remark}
The process $\{f(t,\Gamma(t,z), J^I_t,u^*_t)\}_{t\in [0,T]}$ (see Eq. \eqref{eqn:sup_f}) is non negative. Indeed, by Eq. \eqref{eqn:driver_f} we immediately get
\[
f(t,\Gamma(t,z), J^I_t,u^*_t) \ge f(t,\Gamma(t,z), J^I_t,I) = 0.
\]
\end{remark}


\begin{remark}
\label{remark:gamma_unique}
The process $\{J^I_t\}_{t \geq 0}$ completely determines 
the predictable random field $\{\Gamma (t,z), z \in [0, + \infty)\}_{t \geq 0}$ outside a null set with respect to the measure $\pi_{t^-}(\lambda F_Z(dz))(\omega) \mathbb{P}(d\omega)\,dt$. 
In fact, if $(J^I_t,\Gamma(t,z))$ and $(J^I_t,\Gamma^1(t,z))$ satisfy the BSDE~\eqref{eqn:BSDE}, on the jump times of the random measure $m(dt,dz)$ we necessarily have that
\[
\Gamma(T_n,Z_n) = \Delta J^I_{T_n} = \Gamma^1(T_n,Z_n) \qquad \forall n\ge1.
\]
Hence, for any $t\in [0,T]$ and $C \in \mathcal{B}([0,+\infty))$
\begin{align}
0 &= \mathbb{E}\biggl[\int_0^t\int_C \abs{\Gamma(s,z)-\Gamma^1(s,z)}\,m(ds,dz)\biggr]\notag\\
\label{eqn:gamma_unique}
&= \mathbb{E}\biggl[\int_0^t\int_C\abs{\Gamma(s,z)-\Gamma^1(s,z)}\,\pi_{s^-}(\lambda F_Z(dz))\,ds\biggr]=0,
\end{align}
and this implies that $\Gamma(t,z)=\Gamma^1(t,z)$ $\pi_{t^-}(\lambda F_Z(dz))(\omega) \mathbb{P}(d\omega)\,dt$-a.e..
\end{remark}


Recalling that $V_t=e^{\eta \bar{X}^I_t e^{RT}}J^I_t$ (see Eq. \eqref{eqn:Vt_JI}), using the Bellman's optimality principle we have connected the value process \eqref{eqn:V_t} to the solution of the BSDE \eqref{eqn:BSDE}. For this purpose, we made extensive use the hypotheses included in Assumption \ref{ass:bellman}. Now a verification argument is needed. To this end, we will assume the weaker conditions given in Assumption \ref{ass:verification}.

\begin{prop}[A general Verification Theorem]
\label{prop:general_verification_th}
Under Assumption \ref{ass:verification}, let us suppose that there exists an $\mathbb{F}$-adapted process $\{D_t\}_{t\in[0,T]}$ such that
\begin{enumerate}
\item $\{D_te^{-\eta \bar{X}^u_t e^{RT}}\}_{t\in[0,T]}$ is an $\mathbb{F}$-sub-martingale for any $u\in\mathcal{U}$ and an $\mathbb{F}$-martingale for some $u^*\in\mathcal{U}$;
\item $D_T=1$.
\end{enumerate}
Then $D_t=V_t$ and $u^*$ is an optimal control.
\end{prop}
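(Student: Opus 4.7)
The plan is to derive the result directly from the (sub)martingale properties by inserting $D_t = 1$ at the terminal time and reading off bounds on $V_t$. First, I would fix $u \in \mathcal{U}$ and apply the sub-martingale inequality to $\{D_s e^{-\eta \bar{X}^u_s e^{RT}}\}_{s\in[0,T]}$ between times $t$ and $T$:
\[
D_t\, e^{-\eta \bar{X}^u_t e^{RT}} \le \mathbb{E}\bigl[D_T\, e^{-\eta \bar{X}^u_T e^{RT}} \,\big|\, \mathcal{F}_t\bigr] = \mathbb{E}\bigl[e^{-\eta \bar{X}^u_T e^{RT}} \,\big|\, \mathcal{F}_t\bigr],
\]
where the second equality uses the terminal condition $D_T = 1$. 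Since $\bar{X}^u_t$ is $\mathcal{F}_t$-measurable, multiplying both sides by $e^{\eta \bar{X}^u_t e^{RT}}$ (which is strictly positive and $\mathcal{F}_t$-measurable, and can be moved inside the conditional expectation) yields
\[
D_t \le \mathbb{E}\bigl[e^{-\eta e^{RT}(\bar{X}^u_T - \bar{X}^u_t)} \,\big|\, \mathcal{F}_t\bigr] \qquad \mathbb{P}\text{-a.s.}
\]

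Next, I would take the essential infimum over admissible controls. Observe that the right-hand side depends only on the restriction of $u$ to $[t,T]$, so it ranges over the same family of random variables whether $u$ varies in $\mathcal{U}$ or in $\mathcal{U}_t$: any $\bar{u} \in \mathcal{U}_t$ may be extended to an element of $\mathcal{U}$ by setting, e.g., $u_s = I$ for $s \in [0,t)$, which is admissible under Assumption \ref{ass:verification} by Proposition \ref{prop:admissibleproc}. Consequently
\[
D_t \le \essinf_{\bar{u}\in\mathcal{U}_t} \mathbb{E}\bigl[e^{-\eta e^{RT}(\bar{X}^{\bar{u}}_T - \bar{X}^{\bar{u}}_t)} \,\big|\, \mathcal{F}_t\bigr] = V_t.
\]

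For the reverse inequality, I would exploit the hypothesis that $\{D_s e^{-\eta \bar{X}^{u^*}_s e^{RT}}\}_{s\in[0,T]}$ is an $\mathbb{F}$-martingale for some $u^* \in \mathcal{U}$. The same computation then gives equality:
\[
D_t = \mathbb{E}\bigl[e^{-\eta e^{RT}(\bar{X}^{u^*}_T - \bar{X}^{u^*}_t)} \,\big|\, \mathcal{F}_t\bigr] \ge V_t,
\]
where the last inequality is immediate because $u^* \in \mathcal{U}$ is a candidate in the essinf defining $V_t$. Combining both inequalities gives $D_t = V_t$ and shows that $u^*$ attains the essinf, i.e. it is optimal.

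The argument is mostly bookkeeping once the sub/martingale hypotheses are in place; the delicate point is ensuring the passage from inequality for fixed $u$ to the essinf, which requires the equivalence of $\mathcal{U}$ and $\mathcal{U}_t$ as far as the relevant functional is concerned, and hence relies on Assumption \ref{ass:verification} via Proposition \ref{prop:admissibleproc}. Integrability needed to invoke the sub-martingale inequality is built into the assumption that $\{D_s e^{-\eta \bar{X}^u_s e^{RT}}\}$ is a sub-martingale, so no further moment estimates are required beyond those already used to define the class $\mathcal{U}$.
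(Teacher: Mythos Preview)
Your proof is correct and follows essentially the same route as the paper's: both derive $D_t\le V_t$ from the sub-martingale inequality and the terminal condition, and $D_t\ge V_t$ from the martingale property of the process associated with $u^*$. Your additional remark on the passage from $\mathcal{U}$ to $\mathcal{U}_t$ (via Proposition~\ref{prop:admissibleproc}) is a point the paper leaves implicit, but the overall argument is the same.
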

\begin{proof}
Using the terminal condition and the sub-martingale property, we have that for any $t\in[0,T]$
\[
\mathbb{E}[e^{-\eta \bar{X}^u_T e^{RT}}\mid\mathcal{F}_t]
\ge D_te^{-\eta \bar{X}^u_t e^{RT}} \qquad\forall u\in\mathcal{U},
\]
hence
\[
D_t \le \mathbb{E}[e^{-\eta e^{RT}(\bar{X}^u_T-\bar{X}^u_t)}\mid\mathcal{F}_t],
\]
which implies $D_t\le V_t$. Moreover, for $u^*\in\mathcal{U}$ we have that
\[
D_t=\mathbb{E}[e^{-\eta e^{RT}(\bar{X}^{u^*}_T-\bar{X}^{u^*}_t)}\mid\mathcal{F}_t]\ge V_t.
\]
The two inequalities imply the thesis.
\end{proof}

\begin{theorem}[Verification Theorem]
\label{theorem:verification}
Suppose that Assumption \ref{ass:verification} is fulfilled. Let $(\hat{J_t},\hat{\Gamma}(t,z))\in\mathcal{L}^2\times\mathcal{\tilde{L}}^2$ be a solution to the BSDE \eqref{eqn:BSDE} and let $u^*=\{u^*_t\}_{t\in[0,T]}$ be an $\mathbb{F}$-predictable process such that
\begin{equation}
\label{eqn:uverification}
\esssup_{u\in\mathcal{U}}{f(t,\hat{\Gamma}(t,z), \hat{J_t},u_t)} = f(t,\hat{\Gamma}(t,z), \hat{J_t},u^*_t)
\qquad \forall t\in[0,T].
\end{equation}
Then $\{D_t\doteq e^{\eta \bar{X}^I_t e^{RT}}\hat{J_t}\}_{t\in[0,T]}$ is the value process of the optimal reinsurance problem, that is $D_t=V_t$ (see Eq. \eqref{eqn:V_t}), and $u^*\in\mathcal{U}$ is an optimal control.
\end{theorem}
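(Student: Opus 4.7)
My plan is to apply the general verification tool of Proposition \ref{prop:general_verification_th} to the candidate process $D_t\doteq e^{\eta \bar{X}^I_t e^{RT}}\hat{J}_t$. The terminal condition is immediate: by the BSDE \eqref{eqn:BSDE}, $\hat{J}_T=e^{-\eta X^I_T}$, and since $\bar{X}^I_T e^{RT}=X^I_T$, the two exponentials cancel to give $D_T=1$. Adaptedness of $D_t$ is also clear, because $\hat{J}$ is $\mathbb{F}$-adapted and $\bar{X}^I$ is a functional of the observable random measure $m(dt,dz)$.

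The core of the proof is to show that for every $u\in\mathcal{U}$ the process
$D_t e^{-\eta \bar{X}^u_t e^{RT}} = e^{\eta(\bar{X}^I_t-\bar{X}^u_t)e^{RT}}\hat{J}_t$
is an $\mathbb{F}$-submartingale, and an $\mathbb{F}$-martingale when $u=u^*$. I would derive its dynamics by repeating verbatim the product-rule computation of Lemma \ref{J_differential}, but replacing the Doob-Meyer finite-variation part $A_t$ by $\esssup_{u\in\mathcal{U}} f(t,\hat{\Gamma}(t,z),\hat{J}_t,u_t)\,dt$, which is the drift of $\hat{J}$ read off from the BSDE \eqref{eqn:BSDE}. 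The calculation is purely algebraic and yields
\[
d\bigl(e^{\eta(\bar{X}^I_t-\bar{X}^u_t)e^{RT}}\hat{J}_t\bigr)
= d\hat{M}^u_t + e^{\eta(\bar{X}^I_t-\bar{X}^u_t)e^{RT}}\bigl[\esssup_{\bar{u}\in\mathcal{U}} f(t,\hat{\Gamma},\hat{J},\bar{u}_t) - f(t,\hat{\Gamma},\hat{J},u_t)\bigr]dt,
\]
with $\hat{M}^u$ of the same structural form as $M^u$ in \eqref{eqn:M^u_martingale}, modulo the substitution $(J^I,\Gamma)\mapsto(\hat{J},\hat{\Gamma})$. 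By hypothesis \eqref{eqn:uverification} the drift is pointwise nonnegative, and vanishes identically for $u=u^*$.

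The main obstacle is to upgrade $\hat{M}^u$ from a local martingale to a true martingale under the weaker Assumption \ref{ass:verification} (observe that the analogous step in Lemma \ref{J_differential} explicitly invoked the stronger Assumption \ref{ass:bellman}). I would proceed by a standard localization: introduce $\tau_n\doteq\inf\bigl\{t\ge 0\colon \int_0^t\int_0^{+\infty}|\hat{\Gamma}(s,z)|^2\,\pi_{s^-}(\lambda F_Z(dz))ds \ge n\bigr\}\wedge T$, which increases to $T$ by virtue of $\hat{\Gamma}\in\tilde{\mathcal{L}}^2$; take conditional expectations of the integrated form stopped at $\tau_n$, where $\hat{M}^u_{\cdot\wedge\tau_n}$ is a genuine martingale; and pass to the limit $n\to\infty$. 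The dominating bound follows from $\bar{X}^I_t-\bar{X}^u_t \le \int_0^T e^{-Rs}q^0_s\,ds$ (a consequence of $q^u\le q^0$ and $z-g(z,u)\ge 0$), which together with \eqref{eqn:p1_2} gives square-integrability of $e^{\eta(\bar{X}^I-\bar{X}^u)e^{RT}}$; combined with $\hat{J}\in\mathcal{L}^2$, the membership $\hat{\Gamma}\in\tilde{\mathcal{L}}^2$, and the terminal integrability of $e^{-\eta X^u_T}$ guaranteed by Proposition \ref{prop:admissibleproc}, this delivers the uniform integrability needed to invoke dominated convergence and transfer the inequality/equality to the unstopped process.

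Having obtained the submartingale property for every $u\in\mathcal{U}$ and the martingale property for $u^*$, all the hypotheses of Proposition \ref{prop:general_verification_th} are met, and the conclusion $D_t=V_t$ together with the optimality of $u^*$ follows at once.
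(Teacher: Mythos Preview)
Your overall architecture matches the paper exactly: apply Proposition \ref{prop:general_verification_th} to $D_t=e^{\eta\bar X^I_te^{RT}}\hat J_t$, check $D_T=1$, and reproduce the product-rule computation of Lemma \ref{J_differential} with the BSDE drift $\esssup_{\bar u}f$ in place of $A_t$ to obtain the decomposition $d(e^{-\eta\bar X^u_te^{RT}}D_t)=d\hat M^u_t+e^{\eta(\bar X^I_t-\bar X^u_t)e^{RT}}[\esssup_{\bar u}f-f(\cdot,u_t)]\,dt$.

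Where you diverge is in upgrading $\hat M^u$ to a true martingale. You introduce a localization because you believe the analogous step in Lemma \ref{J_differential} ``explicitly invoked the stronger Assumption \ref{ass:bellman}''. That is a misreading: Assumption \ref{ass:bellman} enters Lemma \ref{J_differential} only through Proposition \ref{prop:Bellman_optimality}, in order to \emph{obtain} the Doob--Meyer decomposition \eqref{eqn:J_doobmeyer} of $J^I$. The subsequent verification that $M^u$ is a martingale uses nothing beyond Assumption \ref{ass:verification} together with the integrability $\Gamma\in\tilde{\mathcal L}^2$ and $\mathbb E[\int_0^T|J^I_s|^2ds]<\infty$. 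In the present theorem those integrability properties are given outright by the hypothesis $(\hat J,\hat\Gamma)\in\mathcal L^2\times\tilde{\mathcal L}^2$, so the paper simply repeats the two $L^1$ bounds from Lemma \ref{J_differential}: the bound $e^{\eta(\bar X^I-\bar X^u)e^{RT}}\le e^{\eta e^{RT}\int_0^Te^{-Rs}q^0_s\,ds}$, then $ab\le\frac12(a^2+b^2)$ with \eqref{eqn:p1_2} and $\hat\Gamma\in\tilde{\mathcal L}^2$ for the first integral, and with \eqref{eqn:p1_2} and $\hat J\in\mathcal L^2$ for the second. No localization is needed.

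Your localization, besides being unnecessary, is also incomplete as written: the stopping time $\tau_n$ controls only the $\hat\Gamma$-integral in $\hat M^u$, not the $\hat J$-integral; you would need a second localizing sequence (or the direct bound above) for that term. Moreover, the dominated-convergence passage you sketch is delicate because $\hat J\in\mathcal L^2$ is an $L^2(dt\otimes d\mathbb P)$ condition and does not immediately give a $\sup_t$ dominating variable. The paper's direct route sidesteps both issues.
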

\begin{proof}
In view of the general Verification Theorem introduced in Proposition \ref{prop:general_verification_th}, let us consider the stochastic process $\{D_te^{-\eta \bar{X}^u_t e^{RT}}\}_{t\in [0,T]}$. Since
\[
e^{-\eta \bar{X}^u_t e^{RT}}D_t = e^{\eta (\bar{X}^I_t-\bar{X}^u_t) e^{RT}} \hat{J_t},
\]
by definition of $D_t$, using the BSDE \eqref{eqn:BSDE} and imitating the proof of Lemma \ref{J_differential}, we have that
\[
d(e^{-\eta \bar{X}^u_t e^{RT}}D_t)=d\hat{M}^u_t+e^{\eta (\bar{X}^I_t-\bar{X}^u_t) e^{RT}}
\bigl[\esssup_{w\in\mathcal{U}}{f(t,\hat{\Gamma}(t,z), \hat{J_t},w_t)}- f(t,\hat{\Gamma}(t,z), \hat{J_t},u_t)\bigr]\,dt,
\]
where $\hat{M}^u_t$ is defined in Eq. \eqref{eqn:M^u_martingale} and $f$ is given in Eq. \eqref{eqn:driver_f}  by replacing $(J^I_t,\Gamma(t,z))$ with $(\hat{J_t},\hat{\Gamma}(t,z))$. 
In order to prove that $\{\hat{M}^u_t\}_{t\in[0,T]}$ is an $\mathbb{F}$-martingale $\forall u\in\mathcal{U}$, we replicate the calculations of the proof of Lemma \ref{J_differential}. By Assumption \ref{ass:verification} we obtain that
\[
\begin{split}
&\mathbb{E}\biggl[ \int_0^T e^{\eta (\bar{X}^I_{s^-}-\bar{X}^u_{s^-}) e^{RT}} \int_0^{+\infty}\abs{\hat{\Gamma}(s,z)}e^{-\eta e^{R(T-s)}(z-g(z,u_s))}\lambda_s F_Z(s,Y_s,dz)\,ds \biggr] \\
&\le C\, \mathbb{E}\biggl[ e^{2\eta e^{RT} \int_0^T e^{-Rs}q^0_s\,ds}\biggr] + C\, \mathbb{E}\biggl[\int_0^T \int_0^{+\infty}\abs{\hat{\Gamma}(s,z)}^2 \pi_{s^-}(\lambda F_Z(dz))\,ds \biggr]<+\infty,
\end{split}
\]
where $C>0$ is a constant. Moreover, we have that
\[
\begin{split}
&\mathbb{E}\biggl[ \int_0^T \hat{J}_{s^-}e^{\eta (\bar{X}^I_{s^-}-\bar{X}^u_{s^-}) e^{RT}}\int_0^{+\infty} \abs*{e^{-\eta e^{R(T-s)}(z-g(z,u_s))}-1} \lambda_s F_Z(s,Y_s,dz)\,ds \biggr] \\
&\le \tilde{C} \, \mathbb{E}\biggl[ \int_0^T \abs{\hat{J_s}}^2\,ds \biggr] 
+ \tilde{C} \,\mathbb{E}\bigl[ e^{2\eta e^{RT} \int_0^T e^{-Rs}q^0_s\,ds} \bigr] <+\infty,
\end{split}
\]
where $\tilde{C}>0$ is a constant and the two terms are finite because of Assumption \ref{ass:verification} and condition \eqref{eqn:supJhat_finite}.

Now, it is clear that for any $u\in\mathcal{U}$
\[
\esssup_{w\in\mathcal{U}_t}{f(t,\hat{\Gamma}(t,z), \hat{J_t},w_t)} \ge f(t,\hat{\Gamma}(t,z), \hat{J_t},u_t),
\]
hence $\{e^{-\eta \bar{X}^u_t e^{RT}}D_t\}_{t\in [0,T]}$ turns out to be an $\mathbb{F}$-sub-martingale.\\
Now let us consider the $\mathbb{F}$-predictable process $\{u^*_t\}_{t\in[0,T]}$ satisfying Eq. \eqref{eqn:uverification}. In this case the previous inequality reads as an equality by definition of $u^*$, hence $\{e^{-\eta \bar{X}^{u^*}_t e^{RT}}D_t\}_{t\in [0,T]}$ is an $\mathbb{F}$-martingale. Finally,
\[
D_T = e^{\eta \bar{X}^I_T e^{RT}}\hat{J_T} = 1.
\]
As announced, our statement follows by Proposition \ref{prop:general_verification_th}.
\end{proof}

\begin{remark}
\label{existence_ustar}
Let us notice that $f$ given in Eq. \eqref{eqn:driver_f} is continuous in $u\in[0,I]$ and under Assumption \ref{ass:verification} every $\mathbb{F}$-predictable process is admissible by Proposition \ref{prop:admissibleproc}. As a consequence, an optimal control exists as long as the BSDE \eqref{eqn:BSDE} admits a solution $(\hat{J_t},\hat{\Gamma}(t,z))\in\mathcal{L}^2\times\mathcal{\tilde{L}}^2$. Precisely, there exists a measurable function $u^*(t,\omega,\gamma(\cdot),j)$, with $t\in[0,T],\omega\in\Omega,\gamma\colon[0,+\infty)\to\mathbb{R},j\in[0,+\infty)$, such that
\begin{equation}
\label{eqn:u_maximizer}
f(t,\omega,\gamma(\cdot),j,u^*(t,\omega,\gamma,j)) = \max_{u\in[0,I]}{f(t,\omega,\gamma(\cdot),j,u)}
\end{equation}
and
\[
u^*_t = u^*(t,\hat{\Gamma}(t,z),\hat{J}_{t^-})
\]
is an optimal control. This topic will be developed further in Section \ref{section:optimal_reinsurance}.
\end{remark}


\subsection{Existence and uniqueness of solutions to BSDE \eqref{eqn:BSDE}}

In this section we deal with the solution to the BSDE \eqref{eqn:BSDE}, that provides our value process \eqref{eqn:V_t} in view of Theorem \ref{theorem:verification}. Precisely, we discuss its existence and uniqueness.

\begin{lemma}
\label{BSDE_terminal}
Suppose that Eq.~\eqref{eqn:p1_1} is fulfilled. The final condition $\xi=e^{-\eta X^I_T}$ of the BSDE \eqref{eqn:BSDE} is square-integrable.
\end{lemma}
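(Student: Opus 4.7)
The plan is to use the explicit representation of $X^I_T$ and bound $e^{-2\eta X^I_T}$ by a constant times $e^{2\eta e^{RT} C_T}$, then invoke the integrability hypothesis~\eqref{eqn:p1_1}.

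First, I would specialize the solution formula~\eqref{eqn:wealth_sol} to $u = I$. By Remark \ref{g_properties} we have $g(z, I) = z$, and Assumption \ref{def:reinsurance_premium} gives $q^I_t = q(t, \omega, I) = 0$. Therefore
\[
X^I_T = R_0 e^{RT} + \int_0^T e^{R(T-r)} c_r \, dr - \int_0^T \int_0^{+\infty} e^{R(T-r)} z \, m(dr, dz),
\]
so that
\[
-\eta X^I_T = -\eta R_0 e^{RT} - \eta \int_0^T e^{R(T-r)} c_r \, dr + \eta \int_0^T \int_0^{+\infty} e^{R(T-r)} z \, m(dr, dz).
\]

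Next, since $c_t \geq 0$ and $e^{R(T-r)} \leq e^{RT}$ for $r \in [0,T]$, the second term is non-positive, and the double integral is bounded above by $e^{RT} \int_0^T \int_0^{+\infty} z \, m(dr, dz) = e^{RT} C_T$, recalling the definition~\eqref{eqn:C_t_def} of the cumulative claims. Hence
\[
e^{-\eta X^I_T} \leq e^{-\eta R_0 e^{RT}} \, e^{\eta e^{RT} C_T} \qquad \mathbb{P}\text{-a.s.},
\]
and squaring both sides yields $e^{-2\eta X^I_T} \leq e^{-2\eta R_0 e^{RT}} \, e^{2\eta e^{RT} C_T}$.

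Finally, taking expectations and applying Eq.~\eqref{eqn:p1_1} of Assumption \ref{ass:verification} gives
\[
\mathbb{E}[\xi^2] = \mathbb{E}[e^{-2\eta X^I_T}] \leq e^{-2\eta R_0 e^{RT}} \, \mathbb{E}[e^{2\eta e^{RT} C_T}] < +\infty,
\]
which establishes the square-integrability. There is no real obstacle here: the proof is a direct consequence of the non-negativity of the gross premium $c_t$, the vanishing reinsurance premium $q^I_t = 0$, and the standing bound~\eqref{eqn:p1_1}.
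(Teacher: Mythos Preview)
Your proof is correct and follows essentially the same route as the paper: specialize the wealth formula to $u=I$ using $q^I_t=0$ and $g(z,I)=z$, bound $e^{-\eta X^I_T}$ by a constant times $e^{\eta e^{RT}C_T}$ via the non-negativity of $c_t$ and $e^{R(T-r)}\le e^{RT}$, then square and invoke~\eqref{eqn:p1_1}. The only cosmetic difference is that the paper absorbs the factor $e^{-\eta R_0 e^{RT}}\le 1$ into the inequality, whereas you carry it explicitly.
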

\begin{proof}
Recalling that $q^I_t=0$ $\forall t\in[0,T]$ and $g(z,I)=z$ $\forall z\in[0,+\infty)$, by Eq. \eqref{eqn:wealth_sol} we have that
\begin{align*}
e^{-\eta X^I_T}
&=e^{-\eta R_0e^{RT}} e^{-\eta\int_0^T e^{R(T-r)}c_r\,dr}
e^{\eta\int_0^T\int_0^{+\infty} e^{R(T-r)}z\,m(dr,dz)}\\
&\le e^{\eta e^{RT}C_T} \qquad \mathbb{P}\text{-a.s.}.
\end{align*}
The statement immediately follows by Eq.~\eqref{eqn:p1_1}.
\end{proof}

Now we handle the problem of existence and uniqueness of a solution to \eqref{eqn:BSDE}.

\begin{definition}
\label{def:Theta}
For any $t\in[0,T]$ and $\omega\in\Omega$ we denote by $\Theta(t,\omega)$ the space of all the functions $\gamma\colon[0,+\infty)\to\mathbb{R}$ such that
\[
\int_0^{+\infty}\abs{\gamma(z)}\,\pi_{t^-}(\lambda F_Z(dz)) <+\infty.
\]
\end{definition}

In the sequel we use this short notation:
\[
A \doteq \Set{(t,\omega,\gamma(\cdot),j,u)\in[0,T]\times\Omega\times\Theta(t,\omega)\times[0,+\infty)\times[0,I]}.
\]
Correspondingly, we take
\[
\bar{A} \doteq \Set{(t,\omega,\gamma(\cdot),j)\in[0,T]\times\Omega\times\Theta(t,\omega)\times[0,+\infty)}.
\]

\begin{theorem}
\label{theorem:BSDE_existence}
Suppose that the following hypotheses are fulfilled:
\begin{itemize}
\item the condition~\eqref{eqn:p1_1} is fulfilled;
\item the function $q(t,\omega,u)$ given in Assumption \ref{def:reinsurance_premium} is bounded;
\end{itemize}
There exists a unique solution $(\hat{J_t},\hat{\Gamma}(t,z))\in\mathcal{L}^2\times\mathcal{\tilde{L}}^2$ which solves the BSDE~\eqref{eqn:BSDE}.
\end{theorem}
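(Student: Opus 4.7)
The plan is to recast the BSDE \eqref{eqn:BSDE} as a standard Lipschitz BSDE driven by the compensated marked point process $m(dt,dz)-\pi_{t^-}(\lambda F_Z(dz))\,dt$, and then invoke well-established existence and uniqueness results for such equations (see e.g.\ the framework of Confortola--Fuhrman--Jacod type). Concretely, I would set
\[
\hat f(t,\omega,\gamma(\cdot),j)\doteq \sup_{u\in[0,I]} f(t,\omega,\gamma(\cdot),j,u),
\]
with $f$ as in \eqref{eqn:driver_f}. Since for every fixed $(t,\omega,\gamma,j)$ the map $u\mapsto f$ is continuous on $[0,I]$ (by continuity of $q(t,\omega,\cdot)$ and of $g(z,\cdot)$ together with dominated convergence on the integral term, using that $|e^{-\eta e^{R(T-t)}(z-g(z,u))}-1|\le 1$ is dominated by the integrable function $z\mapsto 1$ times $\pi_{t^-}(\lambda F_Z(\cdot))$), and since $\mathcal U$ contains all constants in $[0,I]$, the essential supremum in~\eqref{eqn:BSDE} coincides with $\hat f(t,\omega,\Gamma(t,\cdot),J^I_t)$; a measurable selection for the pointwise maximizer follows from Berge's / Aumann--Sainte-Beuve theorems and will furnish the optimal control via Remark \ref{existence_ustar}. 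By Lemma \ref{BSDE_terminal}, the terminal datum $\xi=e^{-\eta X^I_T}$ is square-integrable under \eqref{eqn:p1_1}, giving the correct integrability for the target space.

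The key step is verifying a Lipschitz estimate for $\hat f$ in the norms that match $\mathcal L^2\times\tilde{\mathcal L}^2$. Using $|\sup_u F-\sup_u G|\le \sup_u |F-G|$ and writing
\[
f(t,\gamma,j,u)-f(t,\gamma',j',u)= -(j-j')\Bigl[\eta e^{R(T-t)}q^u_t + \int_0^{+\infty}\!\bigl(e^{-\eta e^{R(T-t)}(z-g(z,u))}-1\bigr)\pi_{t^-}(\lambda F_Z(dz))\Bigr]
\]
\[
-\int_0^{+\infty}\!(\gamma(z)-\gamma'(z))\bigl(e^{-\eta e^{R(T-t)}(z-g(z,u))}-1\bigr)\pi_{t^-}(\lambda F_Z(dz)),
\]
one exploits three bounds: $q^u_t\le Q$ uniformly by the hypothesis of boundedness of $q$; $z-g(z,u)\ge 0$, hence $|e^{-\eta e^{R(T-t)}(z-g(z,u))}-1|\le 1$; and $\pi_{t^-}(\lambda F_Z(\mathbb{R}_+))\le\Lambda$ by \eqref{eqn:intensity_bounded}. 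Together with Cauchy--Schwarz on the weighted measure $\pi_{t^-}(\lambda F_Z(dz))$, this yields
\[
|\hat f(t,\gamma,j)-\hat f(t,\gamma',j')|\le (\eta e^{RT}Q+\Lambda)|j-j'|+\sqrt{\Lambda}\Bigl(\int_0^{+\infty}\!|\gamma(z)-\gamma'(z)|^2\pi_{t^-}(\lambda F_Z(dz))\Bigr)^{1/2},
\]
which is precisely the Lipschitz condition required by the general BSDE theory with jumps.

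With a square-integrable terminal condition and a generator which is uniformly Lipschitz in $(j,\gamma)$ w.r.t.\ the correct norms, existence and uniqueness of $(\hat J,\hat\Gamma)\in\mathcal L^2\times\tilde{\mathcal L}^2$ follow from the standard Picard-iteration/contraction argument: one defines the map $\Phi(J,\Gamma)=(J',\Gamma')$ where $(J',\Gamma')$ is obtained by taking the c\`adl\`ag $\mathbb{F}$-martingale representation of $\mathbb{E}\bigl[\xi-\int_\cdot^T\hat f(s,\Gamma,J)\,ds\mid\mathcal F_\cdot\bigr]$ (using the martingale representation theorem for the compensated random measure, which applies since $\mathbb{F}$ is generated by $m(dt,dz)$), and shows that, on an equivalent norm with exponential weight $e^{\beta t}$ and $\beta$ large enough, $\Phi$ is a contraction. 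Uniqueness can alternatively be read off by applying It\^o's formula to $|J-J^1|^2$ for two solutions, integrating the compensator term out, using the Lipschitz estimate together with the elementary inequality $2ab\le \epsilon a^2+\epsilon^{-1}b^2$, and concluding via Gronwall's lemma; the identification of the two $\Gamma$ components then proceeds as in Remark \ref{remark:gamma_unique}.

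The main obstacle I anticipate is the rigorous handling of the essential supremum, particularly when $I=+\infty$ (excess-of-loss case), in order to guarantee that $\hat f$ is both well-defined and jointly measurable. This is settled by noting that as $u\to I=+\infty$ both $q^u_t\to 0$ (by Assumption \ref{def:reinsurance_premium}) and $z-g(z,u)\to 0$ pointwise, so $f(t,\gamma,j,u)\to 0$; since $f(t,\gamma,j,0)\ge 0$ as well, the supremum may be restricted to a compact interval whose length depends measurably on $(\gamma,j)$, restoring the applicability of Berge's maximum theorem. Beyond that, the filter $\pi$ enters the BSDE only through the finite measure $\pi_{t^-}(\lambda F_Z(dz))$, so no infinite-dimensional difficulty propagates into the existence proof itself.
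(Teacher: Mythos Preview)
Your proposal is correct and follows essentially the same route as the paper: establish the uniform Lipschitz bound on $\hat f=\sup_{u}f$ in $(j,\gamma)$ using boundedness of $q$, the inequality $|e^{-\eta e^{R(T-t)}(z-g(z,u))}-1|\le 1$, the bound $\pi_{t^-}(\lambda)\le\Lambda$, and Cauchy--Schwarz, then combine this with $\hat f(t,0,0)=0$ and $\xi\in L^2$ from Lemma~\ref{BSDE_terminal} to invoke standard Lipschitz BSDE theory for marked point processes. The only cosmetic difference is that the paper cites \cite{Confortola2013} (their Hypothesis~3.1 and Theorem~3.4) directly, whereas you unpack the Picard-iteration/contraction argument that underlies that reference; your Lipschitz constant $\sqrt{\Lambda}$ is in fact the sharp one from Cauchy--Schwarz, while the paper writes $\Lambda$.
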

\begin{proof}
In order to apply the results of \cite{Confortola2013}, let us notice that the classes introduced in Definition \ref{def:BSDEsol} and Definition \ref{def:Theta} are equivalent to those of the cited paper, except for the absence of a parameter $\beta>0$; in fact, in our framework there is no need of this, because the compensator of the counting process $\{N_t\}_{t\in[0,T]}$ is $\{\pi_{t^-}(\lambda)\}_{t\in[0,T]}$ and it is bounded by $\Lambda>0$ (see Section \ref{section:formulation}).\\
Now let $f$ be an $\mathbb{F}$-predictable process defined on $A$ by
\begin{equation}
\label{eqn:function_f}
f(t,\omega,\gamma(\cdot),j,u)\doteq - j\eta e^{R(T-t)}q^u_t-\int_0^{+\infty} \bigl(j+\gamma(z)\bigr)\biggl(e^{-\eta e^{R(T-t)}(z-g(z,u))}-1\biggr)\pi_{t^-}(\lambda F_Z(dz)).
\end{equation}
Since $q^u_t$ is bounded by hypothesis, using the condition~\eqref{eqn:intensity_bounded} and taking $\gamma,\gamma'\in\Theta(t,\omega)$ and $j,j'\in[0,+\infty)$, we have that $f$ satisfies a Lipschitz condition uniformly in $t,\omega,u$:
\[
\begin{split}
&\abs{f(t,\omega,\gamma'(\cdot),j',u)-f(t,\omega,\gamma(\cdot),j,u)}\\
&=\abs{j\,\eta e^{R(T-t)}q^u_t
+\int_0^{+\infty} (j+\gamma(z)) \bigl(e^{-\eta e^{R(T-t)}(z-g(z,u))}-1\bigr)\pi_{t^-}(\lambda F_Z(dz))\\
&-j'\,\eta e^{R(T-t)}q^u_t
-\int_0^{+\infty} (j'+\gamma'(z)) \bigl(e^{-\eta e^{R(T-t)}(z-g(z,u))}-1\bigr)\pi_{t^-}(\lambda F_Z(dz))}\\
&\le L\,\abs{j-j'}+\abs*{\int_0^{+\infty}
\bigl(\gamma(z)-\gamma'(z)\bigr) \bigl(e^{-\eta e^{R(T-t)}(z-g(z,u))}-1\bigr)\pi_{t^-}(\lambda F_Z(dz))}\\
&\le L\,\abs{j-j'} + \int_0^{+\infty}\abs*{\gamma(z)-\gamma'(z)} \pi_{t^-}(\lambda F_Z(dz))\\
&\le L\,\abs{j-j'} + \Lambda \biggl(\int_0^{+\infty}\abs*{\gamma(z)-\gamma'(z)}^2 \pi_{t^-}(\lambda F_Z(dz))\biggr)^{\frac{1}{2}}
\qquad \forall t\in[0,T],\omega\in\Omega,u\in[0,I] ,
\end{split}
\]
for a suitable constant $L>0$. It can be proved that $\sup_{u\in[0,I]}{f(t,\omega,\gamma(\cdot),j,u)}$ preserves this property, in fact
\begin{align*}
&\abs*{ \sup_{u\in[0,I]}{f(t,\omega,\gamma(\cdot),j,u)} - \sup_{u\in[0,I]}{f(t,\omega,\gamma'(\cdot),j',u)} }\\
&\le \sup_{u\in[0,I]}{ \abs*{ f(t,\omega,\gamma(\cdot),j,u) 
 - f(t,\omega,\gamma'(\cdot),j',u) } }\\
&\le L\,\abs{j-j'} + \Lambda \biggl(\int_0^{+\infty}\abs*{\gamma(z)-\gamma'(z)}^2 \pi_{t^-}(\lambda F_Z(dz))\biggr)^{\frac{1}{2}}
\qquad \forall t\in[0,T],\omega\in\Omega.
\end{align*}
Further, let us observe that $f(t,\omega,0,0,u)=0$ $\forall(t,\omega,u)\in[0,T]\times\Omega\times[0,I]$ and the BSDE terminal condition is square-integrable by Lemma~\ref{BSDE_terminal}. We can deduce that Hypothesis 3.1 of \cite{Confortola2013} is fulfilled. Hypothesis 4.5 is satisfied as well, because of Remark \ref{existence_ustar}. Finally, our statement is a consequence of \cite[Theorem 3.4]{Confortola2013}.
\end{proof}

Let us summarize the results of this section in the following theorem.

\begin{theorem}
\label{theorem:summary}
Suppose that Assumption \ref{ass:verification} is fulfilled and the reinsurance premium is bounded. Then $(J^I_t,\Gamma(t,z))\in\mathcal{L}^2\times\mathcal{\tilde{L}}^2$ is the unique solution of the BSDE~\eqref{eqn:BSDE}. Moreover, let $u^*(t,\omega,\gamma(\cdot),j)$ be the maximizer of Eq. \eqref{eqn:u_maximizer}, then $u^*_t = u^*(t,\Gamma(t,z),J^I_{t^-})$ is an optimal control and the value process in Eq. \eqref{eqn:V_t} admits the representation $\{V_t\doteq e^{\eta \bar{X}^I_t e^{RT}}J^I_t\}_{t\in[0,T]}$.
\end{theorem}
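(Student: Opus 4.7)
The statement collects together the main outputs of this section (existence/uniqueness for the BSDE, optimality of a candidate control, and representation of the value process), so my plan is to chain together the preceding results in the proper order.

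First, I would invoke Theorem \ref{theorem:BSDE_existence}: under Assumption \ref{ass:verification} (which in particular ensures \eqref{eqn:p1_1}) together with boundedness of the reinsurance premium $q(t,\omega,u)$, that theorem provides a unique pair $(\hat{J}_t,\hat{\Gamma}(t,z))\in\mathcal{L}^2\times\tilde{\mathcal{L}}^2$ solving the BSDE \eqref{eqn:BSDE}. This is exactly the existence/uniqueness assertion, modulo identifying $(\hat{J}_t,\hat{\Gamma}(t,z))$ with $(J^I_t,\Gamma(t,z))$, which I address below.

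Next, I would produce the candidate optimal control. Since the driver $f(t,\omega,\gamma(\cdot),j,u)$ defined in \eqref{eqn:function_f} is continuous in $u$ on the compact set $[0,I]$, a measurable selection argument (exactly the one indicated in Remark \ref{existence_ustar}) yields a measurable map $u^*(t,\omega,\gamma(\cdot),j)$ attaining the maximum in \eqref{eqn:u_maximizer}. Setting $u^*_t\doteq u^*(t,\hat{\Gamma}(t,\cdot),\hat{J}_{t^-})$ defines an $\mathbb{F}$-predictable process which, by Proposition \ref{prop:admissibleproc}, is automatically admissible because Assumption \ref{ass:verification} holds. This $u^*$ satisfies the pointwise identity \eqref{eqn:uverification} by construction.

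Then I would apply the Verification Theorem \ref{theorem:verification} to the pair $(\hat{J}_t,\hat{\Gamma}(t,z))$ and the control $u^*$: this yields that $u^*\in\mathcal{U}$ is optimal and that the value process satisfies
\[
V_t = e^{\eta \bar{X}^I_t e^{RT}}\hat{J}_t \qquad \forall t\in[0,T].
\]
Comparing with the identity $V_t = e^{\eta \bar{X}^I_t e^{RT}} J^I_t$ from \eqref{eqn:Vt_JI}, we conclude $\hat{J}_t = J^I_t$ for every $t\in[0,T]$. The random field part is then pinned down via Remark \ref{remark:gamma_unique}: since $(J^I_t,\hat{\Gamma}(t,z))$ and $(J^I_t,\Gamma(t,z))$ both satisfy the BSDE \eqref{eqn:BSDE}, jumps of $J^I$ at $(T_n,Z_n)$ coincide, so $\hat{\Gamma}=\Gamma$ outside a $\pi_{t^-}(\lambda F_Z(dz))(\omega)\mathbb{P}(d\omega)\,dt$-null set, giving the claimed uniqueness of $(J^I_t,\Gamma(t,z))$ in $\mathcal{L}^2\times\tilde{\mathcal{L}}^2$ together with the representation of $V_t$ and the optimality of $u^*_t=u^*(t,\Gamma(t,\cdot),J^I_{t^-})$.

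The only delicate step is the measurable selection underlying Remark \ref{existence_ustar}; apart from this, the argument is a clean concatenation of Theorems \ref{theorem:verification} and \ref{theorem:BSDE_existence}, with Remark \ref{remark:gamma_unique} used to transfer uniqueness from the abstract solution $(\hat{J},\hat{\Gamma})$ back to the primitive objects $(J^I,\Gamma)$.
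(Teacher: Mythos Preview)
Your proposal is correct and follows essentially the same route as the paper: invoke Theorem \ref{theorem:BSDE_existence} for existence/uniqueness, use Remark \ref{existence_ustar} for the measurable maximizer, and apply the Verification Theorem \ref{theorem:verification} for optimality and the value process representation. The only cosmetic difference is in the identification step: the paper cites Proposition \ref{prop:opt_control_maximiser} directly to conclude that $(J^I_t,\Gamma(t,z))$ solves the BSDE (and hence coincides with the unique solution), whereas you first deduce $\hat{J}_t=J^I_t$ from \eqref{eqn:Vt_JI} and then match the random fields via Remark \ref{remark:gamma_unique}. Note that your last step---asserting that $(J^I_t,\Gamma(t,z))$ also solves the BSDE so that Remark \ref{remark:gamma_unique} applies---is precisely the content of Proposition \ref{prop:opt_control_maximiser}, so in the end you are invoking the same ingredients as the paper, just in a slightly different order.
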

\begin{proof}
The BSDE~\eqref{eqn:BSDE} admits a unique solution by Theorem \ref{theorem:BSDE_existence} and the existence of an optimal control is guaranteed by Remark \ref{existence_ustar}. This in turn implies that $(J^I_t,\Gamma(t,z))\in\mathcal{L}^2\times\mathcal{\tilde{L}}^2$ is the unique solution of Eq. \eqref{eqn:BSDE} by  Proposition \ref{prop:opt_control_maximiser}. Finally, the expression of the value process is obtained by Theorem \ref{theorem:verification}.
\end{proof} 



\section{The optimal reinsurance strategy}
\label{section:optimal_reinsurance}

Eq. \eqref{eqn:u_maximizer} suggests a natural way to find an optimal strategy. This is the main topic of this section.

\begin{prop}
\label{prop:optimal_u}
Assume $g(z,u)$ differentiable in $u\in[0,I]$. Let $f$ be defined by Eq. \eqref{eqn:function_f} and suppose that it is strictly concave in $u$. Let the function $u^*(t,\omega,\gamma,j)$ be defined as follows:
\begin{equation}
\label{eqn:optimal_u}
u^*(t,\omega,\gamma(\cdot),j)=
\begin{cases}
	0 & \text{$(t,\omega,\gamma(\cdot),j)\in A_0$}
	\\
	\hat{u}(t,\omega,\gamma(\cdot),j) & \text{$(t,\omega,\gamma(\cdot),j)\in (A_0\cup A_I)^C$}
	\\
	I & \text{$(t,\omega,\gamma(\cdot),j)\in A_I$,}
\end{cases}
\end{equation}
where
\begin{align*}
A_0 &\doteq \Set{(t,\omega,\gamma(\cdot),j)\in\bar{A}\mid
- j\frac{\partial{q^0_t}}{\partial{u}} \le
\int_0^{+\infty} \bigl(j+\gamma(z)\bigr) e^{-\eta e^{R(T-t)}z}\frac{\partial{g(z,0)}}{\partial{u}}\pi_{t^-}(\lambda F_Z(dz))},\\
A_I &\doteq \Set{(t,\omega,\gamma(\cdot),j)\in\bar{A}\mid
- j\frac{\partial{q^I_t}}{\partial{u}} \ge
\int_0^{+\infty} \bigl(j+\gamma(z)\bigr) \frac{\partial{g(z,I)}}{\partial{u}}\pi_{t^-}(\lambda F_Z(dz))},
\end{align*}
and $0<\hat{u}(t,\omega,\gamma(\cdot),j)<I$ is the solution to
\begin{equation}
\label{eqn:null_deriv}
- j\frac{\partial{q^u_t}}{\partial{u}}
= \int_0^{+\infty} \bigl(j+\gamma(z)\bigr) e^{-\eta e^{R(T-t)}(z-g(z,u))}\frac{\partial{g(z,u)}}{\partial{u}}\pi_{t^-}(\lambda F_Z(dz)),
\end{equation}
for any $(t,\omega,\gamma(\cdot),j)\in (A_0\cup A_I)^C$. Then $u^*(t,\omega,\gamma(\cdot),j)$ is the unique maximizer of $f$, that is Eq. \eqref{eqn:u_maximizer} is valid.
\end{prop}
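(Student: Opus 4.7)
The strategy is to reduce the claim to a standard first-order analysis on the function $u\mapsto f(t,\omega,\gamma(\cdot),j,u)$ for fixed $(t,\omega,\gamma(\cdot),j)$. Since $g(z,u)$ is assumed differentiable in $u$ and $q(t,\omega,u)$ is $C^1$ in $u$ by Assumption \ref{def:reinsurance_premium}, I would first differentiate the expression \eqref{eqn:function_f} with respect to $u$ under the integral sign. Using Assumption \ref{def:reinsurance_premium} on $q^u_t$ and Assumption \ref{ass:randommeasure}--\eqref{eqn:z2_finite} together with the boundedness in $u$ of $g(z,u)\le z$, the differentiation under the integral is justified by dominated convergence, yielding
\[
\frac{\partial f}{\partial u}(t,\omega,\gamma(\cdot),j,u)
= -j\,\eta e^{R(T-t)}\frac{\partial q^u_t}{\partial u}
- \eta e^{R(T-t)}\int_0^{+\infty}\bigl(j+\gamma(z)\bigr)e^{-\eta e^{R(T-t)}(z-g(z,u))}\frac{\partial g(z,u)}{\partial u}\,\pi_{t^-}(\lambda F_Z(dz)).
\]

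Next I would exploit strict concavity of $f$ in $u$: this gives uniqueness of the maximizer on the compact interval $[0,I]$ and implies that $\partial f/\partial u$ is strictly decreasing in $u$. The characterization then follows by a standard three-case argument. On $A_0$, plugging $u=0$ (using $g(z,0)=0$) into the derivative shows $\partial f/\partial u|_{u=0}\le 0$; strict monotonicity of the derivative forces $\partial f/\partial u<0$ on $(0,I]$, so $f$ is strictly decreasing and the maximizer is $u^*=0$. Symmetrically, on $A_I$, plugging $u=I$ (using $g(z,I)=z$ so that the exponential term becomes $1$) gives $\partial f/\partial u|_{u=I}\ge 0$, hence $f$ is non-decreasing on $[0,I]$ and $u^*=I$. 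On $(A_0\cup A_I)^{C}$, we have $\partial f/\partial u|_{u=0}>0$ and $\partial f/\partial u|_{u=I}<0$, so by continuity in $u$ and the intermediate value theorem, there exists a unique $\hat u\in(0,I)$ with $\partial f/\partial u=0$; this is exactly equation \eqref{eqn:null_deriv} after dividing through by $\eta e^{R(T-t)}$, and by strict concavity $\hat u$ is the unique global maximizer.

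The main conceptual obstacle is not the optimization itself — which is routine once concavity is granted — but verifying that the definitions of $A_0$ and $A_I$ match exactly the boundary first-order conditions, which requires checking carefully that $g(z,0)=0$ and $g(z,I)=z$ (Remark \ref{g_properties}) cancel the factor $e^{-\eta e^{R(T-t)}(z-g(z,u))}$ at the endpoints and produce the forms stated in the definitions of $A_0$ and $A_I$. A minor subsidiary issue is that the derivatives $\partial q^0_t/\partial u$ and $\partial q^I_t/\partial u$ must be interpreted as one-sided derivatives, but this is precisely the convention adopted in Assumption \ref{def:reinsurance_premium}. Finally, measurability of $u^{*}(t,\omega,\gamma(\cdot),j)$ — needed to make sense of the composition $u^*_t=u^*(t,\Gamma(t,z),J^I_{t^-})$ later on — follows from a measurable selection argument applied to the implicit equation \eqref{eqn:null_deriv}, using strict monotonicity of the left-hand side minus the right-hand side in $u$ (a consequence of strict concavity of $f$).
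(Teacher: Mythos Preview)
Your proposal is correct and follows essentially the same route as the paper's proof: compute $\partial f/\partial u$, use strict concavity to get uniqueness and monotonicity of the derivative, then do the three-case boundary analysis identifying $A_0$, $A_I$, and the interior first-order condition \eqref{eqn:null_deriv}. The paper is slightly terser (it does not spell out the dominated-convergence justification or the measurable-selection remark), but the argument is the same.
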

\begin{proof}
Since $f$ is continuous on the compact set $[0,I]$, it admits a maximum. Moreover, it is concave and the uniqueness of the maximizer is guaranteed. Now let us evaluate the first derivative of $f$:
\begin{multline}
\label{eqn:f_deriv1}
\frac{\partial{f(t,\omega,\gamma(\cdot),j,u)}}{\partial{u}}
= - j\eta e^{R(T-t)}\frac{\partial{q^u_t}}{\partial{u}} \\
-\int_0^{+\infty} \bigl(j+\gamma(z)\bigr)\eta e^{R(T-t)}e^{-\eta e^{R(T-t)}(z-g(z,u))}\frac{\partial{g(z,u)}}{\partial{u}}\pi_{t^-}(\lambda F_Z(dz)).
\end{multline}
Since
\begin{align*}
A_0 &= \Set{(t,\omega,\gamma(\cdot),j)\in\bar{A}\mid
\frac{\partial{f(t,\omega,\gamma(\cdot),j,0)}}{\partial{u}}\le0},\\
A_I &= \Set{(t,\omega,\gamma(\cdot),j)\in\bar{A}\mid
\frac{\partial{f(t,\omega,\gamma(\cdot),j,I)}}{\partial{u}}\ge0},
\end{align*}
by definition (see Eq. \eqref{eqn:f_deriv1}), using the concavity of $f$ we have that $\frac{\partial{f}}{\partial{u}}$ is decreasing in $u\in[0,I]$, hence $A_0\cap A_1=\emptyset$. Now there are only three possible cases. If $(t,\omega,\gamma(\cdot),j)\in A_0$, $f$ is decreasing in $u\in[0,I]$ and the maximizer is $u=0$. Similarly, if $(t,\omega,\gamma(\cdot),j)\in A_I$, $f$ is increasing in $u\in[0,I]$ and the maximizer is $u=I$. Finally, if $(t,\omega,\gamma(\cdot),j)\in (A_0\cup A_I)^C$, the maximizer coincides with the unique stationary point $\hat{u}(t,\omega,\gamma(\cdot),j)\in(0,I)$, that is the solution to Eq. \eqref{eqn:null_deriv}.
\end{proof}


\begin{corollary}\label{cor_op}
Assume $g(z,u)$ differentiable in $u\in[0,I]$. Let $f$ be defined by Eq. \eqref{eqn:function_f} and suppose that it is strictly concave in $u$. 
Suppose that Assumption \ref{ass:verification} is fulfilled and let $(\hat{J_t},\hat{\Gamma}(t,z))\in\mathcal{L}^2\times\mathcal{\tilde{L}}^2$ be a solution to the BSDE \eqref{eqn:BSDE}. Let us define the control $\{u^*_t\doteq u^*(t,\omega,\hat{\Gamma}(t,z),\hat{J}_{t^-})\}_{t\in[0,T]}$, with the function $u^*(t,\omega,\gamma,j)$ given in Eq. \eqref{eqn:optimal_u}, that is
\begin{equation}
\label{eqn:optimal_u_process}
u^*(t,\omega,\hat{\Gamma}(t,z),\hat{J}_{t^-})=
\begin{cases}
	0 & \text{$(t,\omega)\in \tilde{A}_0$}
	\\
	\hat{u}(t,\omega,\hat{\Gamma}(t,z),\hat{J}_{t^-}) & \text{$(t,\omega)\in (\tilde{A}_0\cup \tilde{A}_I)^C$}
	\\
	I & \text{$(t,\omega)\in \tilde{A}_I$,}
\end{cases}
\end{equation}
where
\begin{align*}
\tilde{A}_0 &\doteq \Set{(t,\omega)\in[0,T]\times\Omega\mid
- \hat{J}_{t^-}\frac{\partial{q^0_t}}{\partial{u}} \le
\int_0^{+\infty} \bigl(\hat{J}_{t^-}+\hat{\Gamma}(t,z)\bigr) e^{-\eta e^{R(T-t)}z}\frac{\partial{g(z,0)}}{\partial{u}}\pi_{t^-}(\lambda F_Z(dz))},\\
\tilde{A}_I &\doteq \Set{(t,\omega)\in[0,T]\times\Omega\mid
- \hat{J}_{t^-}\frac{\partial{q^I_t}}{\partial{u}} \ge
\int_0^{+\infty} \bigl(\hat{J}_{t^-}+\hat{\Gamma}(t,z)\bigr) \frac{\partial{g(z,I)}}{\partial{u}}\pi_{t^-}(\lambda F_Z(dz))},
\end{align*}
and $0<\hat{u}(t,\omega,\Gamma(t,z),J_{t^-})<I$ is the solution to
\begin{equation}
\label{eqn:null_deriv_process}
- \hat{J}_{t^-}\frac{\partial{q^u_t}}{\partial{u}}
= \int_0^{+\infty} \bigl(\hat{J}_{t^-}+\hat{\Gamma}(t,z)\bigr) e^{-\eta e^{R(T-t)}(z-g(z,u))}\frac{\partial{g(z,u)}}{\partial{u}}\pi_{t^-}(\lambda F_Z(dz)).
\end{equation}
Then $\{u^*_t\}_{t\in[0,T]}$ is an optimal control.
\end{corollary}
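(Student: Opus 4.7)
The plan is to combine Proposition \ref{prop:optimal_u} with the Verification Theorem (Theorem \ref{theorem:verification}): the former identifies the pointwise maximizer of the driver $f$ in the variable $u$, and the latter says that if the BSDE solution is plugged into such a maximizer, the resulting process is an optimal control.

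First I would invoke Proposition \ref{prop:optimal_u} directly. Under the strict concavity and differentiability hypotheses, that proposition yields, for every $(t,\omega,\gamma(\cdot),j) \in \bar A$, a unique maximizer $u^*(t,\omega,\gamma(\cdot),j) \in [0,I]$ of $u \mapsto f(t,\omega,\gamma(\cdot),j,u)$, described explicitly by the case split in Eq. \eqref{eqn:optimal_u}. In other words,
\[
f(t,\omega,\gamma(\cdot),j,u^*(t,\omega,\gamma(\cdot),j))
=\max_{u\in[0,I]} f(t,\omega,\gamma(\cdot),j,u).
\]
Next I would substitute the BSDE solution $(\hat J_t, \hat\Gamma(t,z))$ into this selector to define $u^*_t \doteq u^*(t,\omega,\hat\Gamma(t,\cdot),\hat J_{t^-})$, which is exactly the candidate in Eq. \eqref{eqn:optimal_u_process}. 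By construction $u^*_t$ takes values in $[0,I]$ and satisfies
\[
f(t,\hat\Gamma(t,z),\hat J_{t^-},u^*_t)=\esssup_{u\in\mathcal U} f(t,\hat\Gamma(t,z),\hat J_{t^-},u_t), \qquad t\in[0,T],
\]
which is precisely the requirement \eqref{eqn:uverification} appearing in the Verification Theorem.

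Then I would check the two remaining ingredients needed to apply Theorem \ref{theorem:verification}. The first is that $u^*_t$ is $\mathbb{F}$-predictable: since $\hat J_{t^-}$ and $\hat\Gamma(t,\cdot)$ are $\mathbb{F}$-predictable and the selection is obtained through the closed-form case distinction $\tilde A_0$, $\tilde A_I$, $(\tilde A_0\cup\tilde A_I)^c$, and through the implicit equation \eqref{eqn:null_deriv_process}, predictability follows from joint measurability of $u^*$ and from the implicit function theorem applied to the strictly concave driver (the unique interior root depends continuously, and hence measurably, on $(\gamma,j)$). The second is admissibility: since Assumption \ref{ass:verification} is in force, Proposition \ref{prop:admissibleproc} guarantees that every $[0,I]$-valued $\mathbb{F}$-predictable process lies in $\mathcal U$, so in particular $u^*\in\mathcal U$.

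Finally, Theorem \ref{theorem:verification} applied to $(\hat J_t,\hat\Gamma(t,z))$ and this $u^*$ immediately concludes that $u^*$ is an optimal control, which is the statement to be proved. The main obstacle in this argument is the measurability/predictability of the selector $u^*(t,\omega,\gamma(\cdot),j)$; on $\tilde A_0\cup \tilde A_I$ it is trivial, but on the complement it is characterized only implicitly by the first-order condition \eqref{eqn:null_deriv_process}, and one must justify that the unique root of that equation is a measurable function of $(t,\omega,\gamma,j)$. This is a standard consequence of the strict monotonicity of $\partial_u f$ in $u$ (the function $f$ being strictly concave) together with the joint measurability of $\partial_u f$ in its arguments, so that a measurable selection theorem (or the continuity argument of Filippov's implicit function lemma) applies. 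Once that point is settled, the rest of the proof is a direct application of the previously established Verification Theorem.
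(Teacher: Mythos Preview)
Your proof is correct and follows essentially the same route as the paper: invoke Proposition~\ref{prop:optimal_u} to identify the pointwise maximizer, use Proposition~\ref{prop:admissibleproc} (under Assumption~\ref{ass:verification}) for admissibility, and conclude via the Verification Theorem~\ref{theorem:verification}. The paper's own proof is just two lines and does not spell out the predictability/measurable-selection issue you discuss; your extra justification on that point is more careful than the original.
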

\begin{proof}
By Proposition \ref{prop:admissibleproc} $u^*\in\mathcal{U}$. Since Eq. \eqref{eqn:u_maximizer} holds by Proposition \ref{prop:optimal_u}, then $u^*$ is an optimal control.
\end{proof}

Here we provide sufficient conditions for the concavity of $f$, which is the main hypothesis of Proposition \ref{prop:optimal_u}.

\begin{prop}\label{Con}
Suppose that the reinsurance premium $q^u_t$ and the self-insurance function $g(z,u)$ are linear or convex in $u\in[0,I]$. Then the function $f$ given in Eq. \eqref{eqn:function_f} is strictly concave in $u$.
\end{prop}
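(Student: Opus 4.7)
The plan is to decompose $f$ as a sum of two summands, verify concavity in $u$ of each via standard composition rules for convex functions, and then upgrade to strict concavity by exploiting the non-triviality of the self-insurance function together with the support properties of $F_Z$. Setting $\alpha \doteq \eta e^{R(T-t)} > 0$, I rewrite
\[
f(t,\omega,\gamma(\cdot),j,u) = -j\,\alpha\, q^u_t \;-\; \int_0^{+\infty}\bigl(j+\gamma(z)\bigr)\bigl(e^{-\alpha(z-g(z,u))}-1\bigr)\,\pi_{t^-}(\lambda F_Z(dz)).
\]
The first summand is concave in $u$ because $q^u_t$ is linear or convex by hypothesis and the multiplier $-j\alpha$ is nonpositive (recall $j=J^I_{t^-}\geq 0$). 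For the integrand of the second summand, fix $z$: the map $u\mapsto \alpha g(z,u) - \alpha z$ is convex or linear by the hypothesis on $g$, and composing with the convex nondecreasing function $\exp$ gives convexity of $u\mapsto e^{-\alpha(z-g(z,u))}$; subtracting $1$ preserves convexity, and multiplying by $-(j+\gamma(z))$ yields a concave function of $u$ provided $j+\gamma(z)\geq 0$.

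This last sign condition is the natural consistency constraint satisfied along the BSDE solutions of interest: from the Doob--Meyer decomposition \eqref{eqn:J_doobmeyer}, the jumps $\Gamma(t,z)$ of the nonnegative process $J^I$ satisfy $J^I_{t^-}+\Gamma(t,z) = J^I_t\geq 0$ on the support of $\pi_{t^-}(\lambda F_Z(dz))$. Once concavity of the integrand is secured, integration against the positive measure $\pi_{t^-}(\lambda F_Z(dz))$ preserves it, so $f$ is concave in $u$.

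To promote weak to strict concavity, I would invoke the boundary conditions $g(z,0)=0 < z = g(z,I)$ from Remark~\ref{g_properties}: for every $z > 0$, $u\mapsto g(z,u)$ is non-constant, and therefore $u\mapsto e^{-\alpha(z-g(z,u))}$ is strictly convex on $[0,I]$ (strict convexity of $\exp$ on the nontrivial range of a convex function). Since claim sizes are $\mathbb{P}$-a.s. strictly positive, $\pi_{t^-}(\lambda F_Z(dz))$ is concentrated on $(0,+\infty)$, so on the set where $j+\gamma(z)>0$ the integrand is strictly concave in $u$, and the integral is strictly concave. The main obstacle is precisely this strictness step: the argument requires that $j+\gamma(\cdot)$ not vanish $\pi_{t^-}(\lambda F_Z(dz))$-a.e.\ (the generic situation, since $J^I>0$ and the predictable field $\Gamma$ cannot wholly annihilate $J^I_{t^-}$), or failing that, strict convexity of $q^u_t$ combined with $j>0$, which already makes the first summand strictly concave regardless of $\gamma$. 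In the non-degenerate regime arising from the optimal control problem, one of these two sufficient conditions is in force and the conclusion follows.
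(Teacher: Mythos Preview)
Your approach is the same as the paper's --- direct inspection of the formula \eqref{eqn:function_f} --- but the paper's own proof consists of the single sentence ``It follows directly by Eq.~\eqref{eqn:function_f},'' whereas you actually spell out the composition-rule argument. In that sense you have supplied the missing details rather than taken a different route.

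You are also right to flag the sign condition $j+\gamma(z)\ge 0$: without it the integrand need not be concave in $u$, and the proposition as stated on the full domain $\bar A$ (where $\gamma\in\Theta(t,\omega)$ is arbitrary) is literally false --- take $j=0$ and a $\gamma$ that is strictly negative on a set of positive $\pi_{t^-}(\lambda F_Z)$-measure. Your observation that along the BSDE trajectory $J^I_{t^-}+\Gamma(t,z)\ge 0$ (since the jumps of the nonnegative process $J^I$ satisfy $J^I_{T_n}=J^I_{T_n^-}+\Gamma(T_n,Z_n)\ge 0$, and by Remark~\ref{remark:gamma_unique} this determines $\Gamma$ up to a $\pi_{t^-}(\lambda F_Z(dz))\mathbb{P}(d\omega)dt$-null set) is exactly the right fix and is implicitly what the paper relies on, since the proposition is only ever applied at $(j,\gamma)=(\hat J_{t^-},\hat\Gamma(t,\cdot))$.

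One small caution on your strict-concavity step: ``$g(z,\cdot)$ non-constant and convex'' does not by itself force $u\mapsto e^{\alpha g(z,u)}$ to be strictly convex on all of $[0,I]$, since a convex function can be constant on a subinterval (e.g.\ $g(z,u)=2z\max(u-\tfrac12,0)$). You need in addition either that $g(z,\cdot)$ is affine (proportional reinsurance) or strictly increasing (which Remark~\ref{g_properties} almost gives), or else to lean on strict convexity of $q^u_t$ with $j>0$. The paper does not address this point either; for the concrete contracts and premium principles used later (proportional, excess-of-loss, expected value, variance) the strictness goes through for elementary reasons.
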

\begin{proof}
It follows directly by Eq. \eqref{eqn:function_f}.
\end{proof}


The following remark stress that the two hypotheses of the previous proposition are not merely technical conditions.

\begin{remark}
Both the classical premium calculation principles \eqref{eqn:evp} and \eqref{eqn:vp} and the proportional as well as the excess-of-loss reinsurance agreements satisfy the  hypotheses of Proposition \ref{Con}. In the next section we provide the explicit form of the optimal strategy in some special cases.
\end{remark}

\begin{remark}\label{diff}
When, $\forall (t,y) \in [0,T] \times \mathbb{R}$, the  distribution $F_Z(t,y,dz)$ admits a density function $f_Z(t,y,z)$, the differentiability of $g$ in Proposition \ref{prop:optimal_u} can be weakened by the hypothesis of $g$ differentiable in $u\in[0,I]$ for almost every $z\in[0,+\infty)$.
\end{remark}



\section{Some properties of the optimal reinsurance strategy}
\label{section:comparison}

In this Section we investigate some properties of the optimal reinsurance strategy. In Subsection \ref{subsec:5.1} we prove that if the premia satisfy the Markovian property in the filter process, then the same property applies to the optimal strategy. This means that the optimal strategy depends on the estimate of  the  environmental stochastic factor distribution given the available information. Next, in Subsection \ref{subsec:5.2} we perform a sensitivity analysis and in Subsection \ref{subsec:5.3} we give a comparison result with the full information case for some relevant examples.  In particular, we extend the comparison made in \cite{liangbayraktar:optreins} for the Markov modulated risk model under the proportional reinsurance to the case of $Y$ having infinitely many states and to the excess of loss reinsurance contract.\\

\subsection{Markovianity in the filter process}
\label{subsec:5.1}
Assuming premia at time $t$ depending on the filter process $\pi_{t^-}$, as in the classical premium calculation principles, see Example \ref{example:premia}. Let  $ {\cal P}(\mathbb{R})$ be the space of probability measures on $\mathbb{R}$ endowed with the weak topology. 
Let us observe  that $\{\pi_t\}_{t \in [0,T]}$ is an $\mathbb{F}$-Markov process taking values in ${\cal P}(\mathbb{R})$ (see Eq. \eqref{eqn:KSeq_appendix}).
Then the value process $\{V_t\}_{t \in [0,T]}$, given in \eqref{eqn:V_t}, is such that $V_t = v(t, \pi_{t^-})$, with $v(t, \pi)$ measurable function on the space $[0,T] \times {\cal P}(\mathbb{R})$.
  
Let us observe that by Eq. \eqref{eqn:J^u} we have that
\[
\begin{split}
J^I_{T_n} - J^I_{T_n^-} &=  \Gamma(T_n,Z_n) \\
& = e^{-\eta \bar{X}^I_{T_n} e^{R T}} V_{T_n} - e^{-\eta \bar{X}^I_{T^-_n} e^{R T}} V_{T^-_n}  = 
e^{-\eta \bar{X}^I_{T^-_n}e^{R T}} ( V_{T_n} e^{\eta Z_n e^{R (T- T_n)}}  -  V_{T^-_n} ) .
\end{split}
\]

Denote by $W(t,\pi, z) : [0,T] \times {\cal P}(\mathbb{R}) \times [0, + \infty)  \to {\cal P}(\mathbb{R})$ a the measurable function such that 
\eqref{jump} is fulfilled,  that is $\pi_{T_n}(f) = W(T_n,\pi_{T_n^-}, Z_n)(f)$, $\forall f \in \mathcal{D}^Y$.
Then we have that $V_{T_n} = v(T_n,  W(T_n, \pi_{T_n^-}, Z_n))$,
so by Remark \ref{remark:gamma_unique} we can write
 $$\Gamma(t,z) = e^{-\eta \bar{X}^I_{t^-} e^{RT}} \big ( v(t,  W(t,\pi_{t^-}, z)) e^{\eta z e^{R (T-t)}}  - v(t, \pi_{t^-}) \big) \quad  \pi_{t^-}(\lambda F_Z(dz))(\omega) \mathbb{P}(d\omega)\,dt\text{-a.e.}.$$
and, as a consequence,
\begin{equation}
\label{nuova}
\big (J^I_{t^-} + \Gamma(t,z) \big ) e^{-\eta z e^{R (T- t)}}  = v(t, W(t,\pi_{t^-}, z)) e^{-\eta \bar{X}^I_{t^-} e^{R T}}   \quad \pi_{t^-}(\lambda F_Z(dz))(\omega) \mathbb{P}(d\omega)\,dt\text{-a.e.}.
\end{equation}

We are now able to prove that the reinsurance optimal strategy is a filter-feedback control, this means that at time $t$ only depends on the estimate of  the distribution of the environmental stochastic factor  immediately before time $t$.

\begin{prop}\label{M1} 
Assume that the premia $c_t$ and $q^u_t$, $\forall u \in [0,I]$, at time $t$ depend on the filter process $\pi_{t^-}$. 
Then  the optimal reinsurance strategy  is Markovian  in the filter process, that is $u^*_t = u^*(t, \pi_{t^-})$, with
$u^*(t, \pi)$ being a measurable function of $(t, \pi) \in [0,T] \times  {\cal P}(\mathbb{R})$.
\end{prop}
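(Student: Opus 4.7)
The plan is to take the characterization of the optimal control from Corollary \ref{cor_op} (and Remark \ref{existence_ustar}), namely $u^*_t = u^*(t,\omega,\Gamma(t,\cdot),J^I_{t^-})$ as a maximizer of the map $u\mapsto f(t,\omega,\Gamma(t,\cdot),J^I_{t^-},u)$ defined in Eq. \eqref{eqn:function_f}, and show that under the standing assumption on the premia this maximization problem can be rewritten in terms of $(t,\pi_{t^-},u)$ alone.

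First I would exploit the factorization $e^{-\eta e^{R(T-t)}(z-g(z,u))}=e^{-\eta e^{R(T-t)}z}\,e^{\eta e^{R(T-t)}g(z,u)}$ and split the integral in \eqref{eqn:function_f} into a $u$-dependent piece and the additive, $u$-independent term $\int_0^{+\infty}(J^I_{t^-}+\Gamma(t,z))\,\pi_{t^-}(\lambda F_Z(dz))$, which drops out of the $\mathrm{argmax}$. Next I would substitute the identity \eqref{nuova},
\[
\bigl(J^I_{t^-}+\Gamma(t,z)\bigr)e^{-\eta e^{R(T-t)}z}=v\bigl(t,W(t,\pi_{t^-},z)\bigr)\,e^{-\eta\bar X^I_{t^-}e^{RT}},
\]
together with $J^I_{t^-}=e^{-\eta\bar X^I_{t^-}e^{RT}}v(t,\pi_{t^-})$, which comes from $V_{t^-}=v(t,\pi_{t^-})$ and Eq. \eqref{eqn:J^u}. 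Under the working assumption, $q^u_t=q^u(t,\pi_{t^-})$, so the $u$-dependent portion of $f$ becomes
\[
e^{-\eta\bar X^I_{t^-}e^{RT}}\Big[-v(t,\pi_{t^-})\,\eta e^{R(T-t)}q^u(t,\pi_{t^-})-\int_0^{+\infty}v\bigl(t,W(t,\pi_{t^-},z)\bigr)\,e^{\eta e^{R(T-t)}g(z,u)}\,\pi_{t^-}(\lambda F_Z(dz))\Big].
\]
Since the prefactor $e^{-\eta\bar X^I_{t^-}e^{RT}}$ is strictly positive and does not depend on $u$, it may be pulled out of the maximization. Hence the argmax depends on $\omega$ only through $\pi_{t^-}$, which yields $u^*_t=u^*(t,\pi_{t^-})$ for some map $u^*$ on $[0,T]\times\mathcal{P}(\mathbb{R})$.

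Finally, I would address measurability of $u^*(t,\pi)$ as a function on $[0,T]\times\mathcal{P}(\mathbb{R})$. The reduced objective above is continuous in $u\in[0,I]$ (by continuity of $q$ and $g$) and jointly measurable in $(t,\pi,u)$ (noting that $W$ and $v$ are measurable and that $\pi\mapsto\pi(\lambda F_Z(dz))$ is measurable by Eq. \eqref{eqn:dual_projection}). A standard measurable selection argument (Berge's theorem, or simply the uniqueness of the maximizer whenever concavity holds as in Proposition \ref{prop:optimal_u}) then provides the desired Borel measurable selector $u^*(t,\pi)$, concluding the proof.

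The only delicate point is the almost-everywhere nature of \eqref{nuova}: the identity holds $\pi_{t^-}(\lambda F_Z(dz))(\omega)\mathbb{P}(d\omega)\,dt$-a.e., so I would be careful to note that the substitution determines the optimizer only up to a null set for the measure $\pi_{t^-}(\lambda F_Z(dz))(\omega)\mathbb{P}(d\omega)\,dt$, which is exactly the set of indistinguishability for $\mathbb{F}$-predictable controls entering the dynamics of $X^u$, hence does not affect admissibility or optimality.
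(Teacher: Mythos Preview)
Your proof is correct and follows essentially the same route as the paper: both use the identity \eqref{nuova} together with $J^I_{t^-}=e^{-\eta\bar X^I_{t^-}e^{RT}}v(t,\pi_{t^-})$ to show that the function $f(t,\Gamma(t,\cdot),J^I_{t^-},u)$ depends on $\omega$ only through the positive factor $e^{-\eta\bar X^I_{t^-}e^{RT}}$ and the pair $(t,\pi_{t^-})$, and then invoke a measurable selection argument. The only cosmetic difference is that the paper rewrites the \emph{entire} driver as $f=e^{-\eta\bar X^I_{t^-}e^{RT}}h(t,\pi_{t^-},u)$ (applying \eqref{nuova} also to the ``$-1$'' term), whereas you discard that term as $u$-independent before substituting; your extra remark on the $\pi_{t^-}(\lambda F_Z(dz))\mathbb{P}(d\omega)dt$-a.e.\ nature of \eqref{nuova} is a welcome clarification.
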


\begin{proof}
Recall that the optimal reinsurance strategy is the maximizer of $f(t,\Gamma(t,z), J^I_{t^-},u)$ (given in Eq. \eqref{eqn:driver_f}) over the class of admissible controls. 
By 
\begin{equation}
\label{eqn:Jvsec5}
J^I_{t^-} =   e^{-\eta \bar{X}^I_{t^-} e^{R T} } v(t, \pi_{t^-}),
\end{equation}
and Eq. \eqref{nuova}, one gets that 
\begin{equation}\label{m1}
f(t,\Gamma(t,z), J^I_{t^-},u) = e^{-\eta \bar{X}^I_{t^-} e^{R T} } h(t, \pi_{t^-}, u) \qquad \forall u \in [0,I],
\end{equation}
where
\begin{equation}\label{m2}
\begin{split}
h(t, \pi_{t^-}, u) &\doteq -  v(t, \pi_{t^-}) \eta e^{R(T-t)} q^u_t  \\
&\ + \int_0^{+\infty} v(t,  W(t,\pi_{t^-}, z)) (e^{\eta e^{R(T-t)}z} 
- e^{\eta e^{R(T-t)} g(z,u)} )\pi_{t^-}(\lambda F_Z(dz)) .
\end{split}
\end{equation}

Hence our result follows by measurability selection theorems.
\end{proof}

\begin{remark}
Notice that, assuming premia $c_t$ and $q^u_t$, $\forall u \in [0,I]$, at time $t$ depending of the filter process $\pi_{t^-}$, 
the pair $\{(\bar{X}^I_{t}, \pi_{t})\}_{t \in [0,T]}$ is an $\mathbb{F}$-Markov process and  
$J^I_t = \tilde v (t, \bar{X}^I_{t}, \pi_{t^-})$ with $\tilde v(t,x, \pi) \doteq e^{-\eta x e^{RT} } v(t, \pi)$. 
In the Markov modulated risk model, that is when $Y$ is  a continuous time Markov chain taking values in 
${\cal S} = \{1, \dots, M\}$,  the pair $\{(\bar{X}^I_{t}, \pi_{t})\}_{t \in [0,T]}$ is an  $(M+1)$-dimensional  $\mathbb{F}$-Markov process 
and $\tilde v (t, x, \pi)$ can be characterized in terms of the associated HJB-equation, if it is regular enough.
Concerning that point, see \cite{liangbayraktar:optreins}, where the problem is discussed in a Markov modulated risk model and the authors make use of a generalized HJB equation, introducing a weaker notion of differentiability.  
In the general case this approach is not suitable since the filter is an infinite-dimensional process and this motivates the characterization in terms of BSDEs, as proposed in this paper. \end{remark}

\subsection{Effect of the safety loading}
\label{subsec:5.2}

In this subsection we determine the effect of the reinsurance safety loading $\theta>0$ on the optimal reinsurance strategy in the case of proportional  contract,
that is $g(z,u) = zu$, $u \in [0,1]$ (see Example \ref{example:contracts}) and under the expected value principle (see Example \ref{example:premia}). We will show that the greater is the value of $\theta$, which implies a greater reinsurance premium, the greater will be the optimal retention level. This is consistent with the classical law of demand in economics and with existing results. Let
\begin{equation}\label{B}
B(t, \pi) \doteq  \frac{ \int_0^{+\infty} \frac{v(t,  W(t,\pi, z))}{v(t, \pi)}  z \pi(\lambda F_Z(dz)) } 
{ \int_0^{+\infty} z \pi(\lambda F_Z(dz)) } -1 ,
\end{equation}

\begin{equation}\label{D}
D(t, \pi) \doteq \frac{\int_0^{+\infty} \frac{v(t,  W(t,\pi, z))}{v(t, \pi)} e^{\eta e^{R(T-t) z} }z \pi (\lambda F_Z(dz))} 
{\int_0^{+\infty} z \pi(\lambda F_Z(dz))} -1 .
\end{equation}

\begin{prop}\label{p1}
In the proportional reinsurance, under the expected value principle the optimal reinsurance strategy increases with respect the reinsurance safety loading $\theta$.
Furthermore, it is given by

\begin{equation}\label{safety}
u^*(t,\pi_{t^-})=
\begin{cases}
	0 & \text{$0 < \theta \leq B(t, \pi_{t^-}) $}
	\\
	\hat{u}(t,\pi_{t^-}) & \text{$B(t, \pi_{t^-}) < \theta \leq D(t, \pi_{t^-}) $}
	\\
	1 & \text{$\theta\geq D(t, \pi_{t^-}) $,}
\end{cases}
\end{equation}
where $B(t, \pi)$ and $D(t, \pi)$ are defined in \eqref{B} and \eqref{D}, respectively, and $0<\hat{u}(t,\pi_{t^-})<I$ is the unique solution to  

\begin{equation}\label{eqq}
(1 + \theta) \int_0^{+ \infty} z  \pi_{t^-}(\lambda F_Z(dz))  = \int_0^{+\infty}  \frac{v(t,  W(t,\pi_{t^-}, z))}{v(t, \pi_{t^-})} e^{\eta e^{R(T-t)}zu} z \pi_{t^-}(\lambda F_Z(dz)) \doteq G(u). 
\end{equation}
\end{prop}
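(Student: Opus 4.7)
The plan is to specialize Corollary \ref{cor_op} to the proportional case $g(z,u)=zu$ with the expected value reinsurance premium, and then read off the three regimes for $u^*$, monotonicity in $\theta$ following from a monotonicity argument on the stationary equation.

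First I would check that the hypotheses of Corollary \ref{cor_op} are in force: the self-insurance function $g(z,u)=zu$ is differentiable (in fact linear) in $u$, and the reinsurance premium $q^u_t = (1+\theta)(1-u)\int_0^{+\infty} z\,\pi_{t^-}(\lambda F_Z(dz))$ is linear in $u$, so by Proposition \ref{Con} the driver $f$ in \eqref{eqn:function_f} is strictly concave in $u$. Next I compute the partial derivatives that enter the description of $\tilde A_0$, $\tilde A_I$ and \eqref{eqn:null_deriv_process}, namely
\[
\frac{\partial q^u_t}{\partial u} = -(1+\theta)\int_0^{+\infty} z\,\pi_{t^-}(\lambda F_Z(dz)), \qquad \frac{\partial g(z,u)}{\partial u}=z.
\]

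Then I would use the Markovian representation \eqref{eqn:Jvsec5}, $J^I_{t^-}=e^{-\eta \bar X^I_{t^-}e^{RT}}v(t,\pi_{t^-})$, together with identity \eqref{nuova}, that is
\[
\bigl(J^I_{t^-}+\Gamma(t,z)\bigr)e^{-\eta e^{R(T-t)}z} = e^{-\eta \bar X^I_{t^-}e^{RT}}\,v(t,W(t,\pi_{t^-},z)),
\]
to rewrite the three conditions appearing in \eqref{eqn:optimal_u_process}. Plugging these in, the common factor $e^{-\eta \bar X^I_{t^-}e^{RT}}\int z\,\pi_{t^-}(\lambda F_Z(dz))$ can be cancelled on both sides; the inequality defining $\tilde A_0$ becomes $\theta\le B(t,\pi_{t^-})$ by definition \eqref{B}, the inequality defining $\tilde A_I$ becomes $\theta\ge D(t,\pi_{t^-})$ by definition \eqref{D} (the extra factor $e^{\eta e^{R(T-t)}z}$ arising from \eqref{nuova} is exactly what distinguishes $D$ from $B$), and the interior equation \eqref{eqn:null_deriv_process} collapses to \eqref{eqq} after grouping $e^{-\eta e^{R(T-t)}(z-zu)}e^{-\eta e^{R(T-t)}z}=e^{\eta e^{R(T-t)}zu}\cdot e^{-\eta e^{R(T-t)}z}$ cancels with the exponential in $(J^I_{t^-}+\Gamma)$. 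This yields the three-regime formula \eqref{safety}.

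Finally, for monotonicity in $\theta$, I would observe that the function $G(u)$ in \eqref{eqq} is strictly increasing in $u\in[0,1]$, since
\[
G'(u) = \eta e^{R(T-t)}\int_0^{+\infty}\frac{v(t,W(t,\pi_{t^-},z))}{v(t,\pi_{t^-})}\,z^2 e^{\eta e^{R(T-t)}zu}\,\pi_{t^-}(\lambda F_Z(dz))>0,
\]
and that $G(0)=(1+B(t,\pi_{t^-}))\int z\,\pi_{t^-}(\lambda F_Z(dz))$ and $G(1)=(1+D(t,\pi_{t^-}))\int z\,\pi_{t^-}(\lambda F_Z(dz))$. Hence, on the intermediate regime $B<\theta\le D$, the implicit function $\theta\mapsto\hat u(t,\pi_{t^-})$ defined by $G(\hat u)=(1+\theta)\int z\,\pi_{t^-}(\lambda F_Z(dz))$ is strictly increasing and connects continuously to $u^*=0$ at $\theta=B(t,\pi_{t^-})$ and to $u^*=1$ at $\theta=D(t,\pi_{t^-})$; combined with the constant regimes outside, $\theta\mapsto u^*(t,\pi_{t^-})$ is non-decreasing. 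The main (mildly) delicate step is the algebraic bookkeeping that turns the abstract conditions of Corollary \ref{cor_op} into the clean quantities $B$ and $D$; once \eqref{nuova} is used to eliminate $\Gamma$ in favour of $v(t,W(t,\pi_{t^-},\cdot))/v(t,\pi_{t^-})$, the rest is straightforward.
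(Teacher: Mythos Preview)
Your proposal is correct and follows essentially the same route as the paper: specialize Corollary \ref{cor_op} to $g(z,u)=zu$ under the expected value principle, invoke Proposition \ref{Con} for strict concavity, and use the identities \eqref{eqn:Jvsec5} and \eqref{nuova} to rewrite the conditions defining $\tilde A_0$, $\tilde A_I$ and the stationary equation \eqref{eqn:null_deriv_process} in terms of $v(t,\pi_{t^-})$ and $v(t,W(t,\pi_{t^-},z))$, yielding \eqref{safety} and \eqref{eqq}; monotonicity in $\theta$ then follows because $G$ is strictly increasing in $u$. Your argument is in fact somewhat more detailed than the paper's (you explicitly check $G(0)$ and $G(1)$ and continuity across the regimes), but the substance is the same.
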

\begin{proof}
Under the expected value principle, see Eq. \eqref{eqn:evp}, we have that
 \begin{equation}
 \begin{split}
 h(t, \pi, u) &= -  v(t, \pi) \eta e^{R(T-t)} (1 + \theta) \int_0^{+ \infty} z (1  - u) \pi(\lambda F_Z(dz))\\
&\ + \int_0^{+\infty} v(t,  W(t,\pi, z)) (e^{\eta e^{R(T-t)}z} 
- e^{\eta e^{R(T-t)} zu} )\pi(\lambda F_Z(dz))
\end{split}
\end{equation}
is strictly  concave in $u\in[0,I]$ and, taking into account Eq. \eqref{m1}, $f$ is so. Using Eq. \eqref{nuova} and Eq. \eqref{eqn:Jvsec5} we notice that Eq. \eqref{eqn:null_deriv_process} can be rewritten as \eqref{eqq}. The right hand term in this equation is an increasing function on $u \in[0,1]$, therefore $\hat{u}(t,\pi)$ increases with respect to $\theta$ (see Figure \ref{img:prop52}).

Finally, using Corollary \ref{cor_op} we get the explicit form of the optimal strategy and the result readily follows. 
\end{proof}

\begin{figure}[ht]
\centering
\includegraphics[scale=0.4]{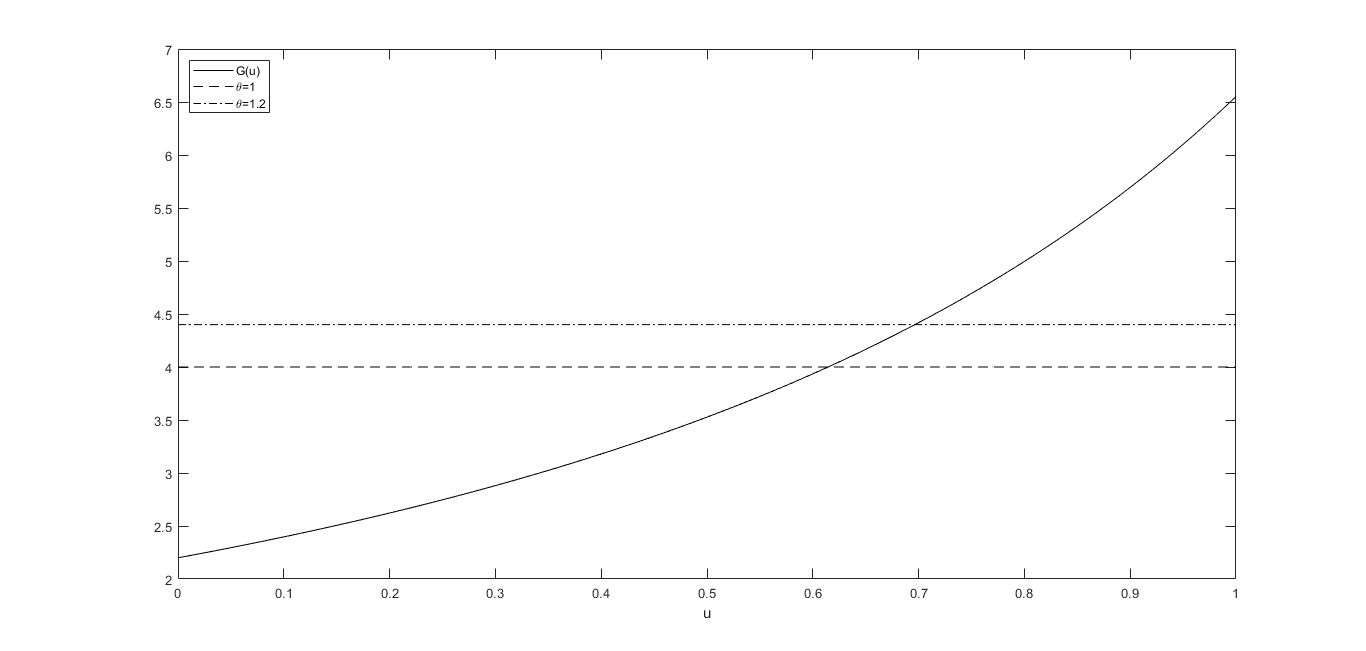}
\caption{The effect of $\theta$ on the reinsurance strategy.}
\label{img:prop52}
\end{figure}

\subsection{Comparison with the case of complete information}
\label{subsec:5.3}

In this subsection we compare the optimal strategy under partial information to the one with full information.  
In some special cases (see Propositions \ref{P1} and \ref{P2} below) we can prove that the optimal retention level under partial information is smaller than the one in the full information case. This means that the insurer who takes into account a partial information framework tends to buy an additional protection with respect to the (theoretical) case of complete information.

We consider the case of unknown time-homogeneous jump intensity (i.e. $\lambda(t,y) = \lambda(y)$) and known claims size distribution (i.e. $F_Z(t,y,dz) = F_Z(dz)$).
Moreover, we suppose that the stochastic factor $Y$ takes value in a discrete set ${\cal S} = \{1, 2, \dots \}$. Let us recall (see Remarks \ref{remark:filterapp} and \ref{remark:MMRM}) that in this case the filter is described by the sequence
$\pi_{t}(i) = P(Y_t=i | {\cal F}_t), i \in {\cal S}$ and $W(t,\pi_{t^-}, z) = W(\pi_{t^-}) =   \{W_i(\pi_{t^-}),  i \in {\cal S}\}$ by Eq. \eqref{eqn:filterWlast}, where
\begin{equation}
W_i(\pi_{t^-}) = { \lambda(i) \pi_{t^-}(i)   \over  \sum_{j \in  {\cal S}} \lambda(j) \pi_{t^-}(j)}, \quad i \in {\cal S}.
\end{equation}

Without loss of generality we assume 
$\lambda(1) \leq \lambda(2) \leq \dots$ and following the same lines as in \cite[Lemma 4.1]{liangbayraktar:optreins} (where the case with a finite set ${\cal S}$ is discussed), see also \cite[Theorem 5.6]{bauerle2007}, we can prove that%
\footnote{The result essentially follows from the stochastic dominance of Poisson processes with increasing intensities and by this inequality: $\sum_{j\in\cal{S}}\pi_j\alpha_j\beta_j\le\sum_{j\in\cal{S}}\pi_j\alpha_j\sum_{j\in\cal{S}}\pi_j\beta_j$, where $\{\alpha_j\}_{j\in\cal{S}}$ is an increasing sequence and $\{\beta_j\}_{j\in\cal{S}}$ is increasing, while the nonnegative sequence $\{\pi_j\}_{j\in\cal{S}}$ is such that $\sum_{j\in\cal{S}}\pi_j=1$.}
\begin{equation}
\label{dis}
v(t,  W(\pi) ) \geq v(t, \pi), \quad  \forall t \in  [0,T], \pi \in \{ \{\pi_i\}_{i \in {\cal S}}; \quad \sum_{i \in {\cal S}} \pi_i  =1, \pi_i \in  [0,1]\}
\end{equation}
with $W(\pi) = \{W_i(\pi),  i \in {\cal S}\}$.


\begin{prop}\label{P1}
Let the assumptions of this subsection be satisfied. Under the proportional reinsurance and the premium calculation principles in Example \ref{example:premia},
the optimal reinsurance strategy under partial information is always less or equal to the one under full information.
\end{prop}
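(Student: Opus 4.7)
\emph{Plan.} My plan is to compare, pointwise in $(t,\omega)$, the first-order conditions characterising the interior optimal retention levels under the two information schemes, and to exploit the monotonicity estimate~\eqref{dis} as the single non-trivial input.

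\emph{Step 1 (partial-information FOC).} First I would use the hypothesis $F_Z(t,y,dz)=F_Z(dz)$: by Eq.~\eqref{eqn:filter1} the filter-updating map $W(t,\pi_{t^-},z)$ does not depend on the mark $z$, so I may write it as $W(\pi_{t^-})$. Inserting this fact into Eq.~\eqref{nuova} gives
\[
\bigl(\hat J_{t^-}+\hat\Gamma(t,z)\bigr)e^{-\eta z e^{R(T-t)}}=v(t,W(\pi_{t^-}))\,e^{-\eta\bar X^I_{t^-}e^{RT}},\qquad\pi_{t^-}(\lambda F_Z(dz))\mathbb P(d\omega)\,dt\text{-a.e.}
\]
Plugging this identity, together with $\hat J_{t^-}=e^{-\eta\bar X^I_{t^-}e^{RT}}v(t,\pi_{t^-})$, into the interior equation~\eqref{eqn:null_deriv_process} with $g(z,u)=zu$, using $\pi_{t^-}(\lambda F_Z(dz))=\pi_{t^-}(\lambda)F_Z(dz)$, and dividing by the common positive factor $e^{-\eta\bar X^I_{t^-}e^{RT}}\pi_{t^-}(\lambda)\,v(t,\pi_{t^-})$, I would obtain the partial-information FOC in the clean form
\[
-\frac{1}{\pi_{t^-}(\lambda)}\,\frac{\partial q^u_t}{\partial u}\Big|_{u=u^*_p}=K(t,\pi_{t^-})\int_0^{+\infty}z\,e^{\eta e^{R(T-t)}z\,u^*_p}\,F_Z(dz),
\]
where $K(t,\pi_{t^-}):=v(t,W(\pi_{t^-}))/v(t,\pi_{t^-})$. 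Evaluating the left-hand side under the expected value (resp. variance) principle gives $(1+\theta)\int z\,F_Z(dz)$ (resp. $\int z\,F_Z(dz)+2\theta(1-u^*_p)\int z^2\,F_Z(dz)$).

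\emph{Step 2 (full-information FOC).} Next I would derive the analogous condition under full information. Since the complete-information problem is Markovian in $(t,S^u_t,Y_t)$, the exponential-utility ansatz $\bar{\mathcal V}(t,x,y)=e^{-\eta x e^{R(T-t)}}\phi(t,y)$ reduces the HJB equation to minimising
\[
\eta e^{R(T-t)}q^u_t(y)\,\phi(t,y)+\lambda(y)\,\phi(t,y)\int_0^{+\infty}\bigl(e^{\eta e^{R(T-t)}g(z,u)}-1\bigr)F_Z(dz)
\]
in $u\in[0,1]$. Differentiating in $u$ and cancelling the positive factor $\phi(t,y)\,\eta e^{R(T-t)}$, the interior FOC is identical in form to the one of Step~1 but with $K$ replaced by $1$; in particular the factor $\lambda(y)$ cancels from both sides, and $\bar u^*$ turns out to be $y$-independent.

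\emph{Step 3 (comparison).} Writing $\Phi(u):=\int_0^{+\infty}z\,e^{\eta e^{R(T-t)}zu}\,F_Z(dz)$ (continuous, strictly positive, strictly increasing on $[0,1]$) and invoking $K(t,\pi_{t^-})\ge1$ from~\eqref{dis}, I would conclude as follows. Under the expected value principle, $\bar u^*$ satisfies $(1+\theta)\int z\,F_Z(dz)=\Phi(\bar u^*)$ while $u^*_p$ satisfies $(1+\theta)\int z\,F_Z(dz)=K\,\Phi(u^*_p)$; equating yields $\Phi(u^*_p)=\Phi(\bar u^*)/K\le\Phi(\bar u^*)$, and strict monotonicity of $\Phi$ delivers $u^*_p\le\bar u^*$. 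Under the variance principle, I would introduce $\psi(u):=\int z\,F_Z(dz)+2\theta(1-u)\int z^2\,F_Z(dz)-\Phi(u)$, which is strictly decreasing with $\psi(\bar u^*)=0$, and observe that $\psi(u^*_p)=(K-1)\Phi(u^*_p)\ge 0 = \psi(\bar u^*)$, hence again $u^*_p\le\bar u^*$. The boundary cases $u^*_p\in\{0,1\}$ follow trivially: $u^*_p=0$ is automatic, while $u^*_p=1$ forces the sign condition characterising $\tilde A_I$ in Corollary~\ref{cor_op}, which by $K\ge1$ entails the analogous condition under full information and thus $\bar u^*=1$.

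\emph{Main obstacle.} The substantive input is the monotonicity inequality~\eqref{dis}, itself a stochastic-dominance statement for Poisson processes with increasing intensities. With that estimate in hand, the remainder is bookkeeping: one must check that the common factor $\pi_{t^-}(\lambda)$ cancels cleanly between the two sides of the partial-information FOC (which is where the hypothesis $F_Z$ state-independent is critically used, via the $z$-independence of $W(\pi_{t^-})$) and that the resulting comparison is uniform in $\omega$, so that $u^*_p(t,\pi_{t^-})\le\bar u^*$ holds pathwise $\mathbb P$-a.s.
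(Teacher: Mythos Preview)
Your proposal is correct and follows essentially the same route as the paper: both reduce the partial-information first-order condition via Eq.~\eqref{nuova} to obtain the factor $K=v(t,W(\pi_{t^-}))/v(t,\pi_{t^-})\ge1$ from~\eqref{dis}, and then compare with the full-information condition (which is the same equation with $K=1$) using the strict monotonicity of $u\mapsto\int z\,e^{\eta e^{R(T-t)}zu}F_Z(dz)$. The only cosmetic differences are that the paper cites \cite{BC:IME2019} for the full-information optimiser rather than deriving it via HJB, and handles the boundary cases through the representation $u^*_t=0\wedge\hat u_t\vee1$ instead of your explicit $\tilde A_I$ argument.
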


\begin{proof}
We analyze two premium calculation principles and, correspondingly, we divide the proof in two parts.

\textit{Expected value principle}

Under the expected value principle (see Eq. \eqref{eqn:evp}) and a proportional reinsurance (i.e. $g(z,u)=uz, u\in[0,1]$ by Example \ref{example:contracts}), using Proposition \ref{prop:optimal_u} and Corollary \ref{cor_op} we easily obtain that the optimal reinsurance strategy is given by 
\begin{equation}
\label{eqn:optimal_u_nuova}
u^*_t =
\begin{cases}
	0 & \text{if $1+ \theta \leq \frac{v(t,  W(\pi_{t^-}))}{v(t, \pi_{t^-})} $}
	\\
	1 & \text{if $(1+ \theta) \mathbb{E}[Z] \geq  \int_0^{+\infty} \frac{v(t,  W(\pi_{t^-}))}{v(t, \pi_{t^-})} z e^{\eta e^{R(T-t)}z}  F_Z(dz)$ }
	\\
	\hat{u}(t,  \pi_{t^-})& \text{ otherwise,}
\end{cases}
\end{equation}
where $\hat{u}$ is the unique solution to 
 \begin{equation}
 \label{null_deriv_nuova1}   
(1 + \theta) \mathbb{E}[Z]  
= \frac{v(t,  W(\pi_{t^-}))}{v(t, \pi_{t^-})} \int_0^{+\infty}  ze^{\eta e^{R(T-t)}uz}  F_Z(dz) \doteq h_1(t, \pi_{t^-},u).
\end{equation}
 
In the full information case, from \cite[Lemma 4.2]{BC:IME2019}, full reinsurance is never optimal, the optimal reinsurance strategy is a deterministic function of time and it is given by
\begin{equation}
\label{eqn:optimal_u_f}
u^{*,f}(t )=
\begin{cases}
	1 & \text{if $(1+ \theta) \mathbb{E}[Z] \geq  \int_0^{+\infty} z e^{\eta e^{R(T-t)}z}  F_Z(dz) $}
	\\
	\hat{u}^f(t) & \text{otherwise,}
\end{cases}
\end{equation}
where $\hat{u}^f$ is the unique solution to  
\begin{equation}
 \label{null_deriv_full}   
(1 + \theta) \mathbb{E}[Z]
= \int_0^{+\infty}  z e^{\eta e^{R(T-t)}uz}  F_Z(dz) \doteq h_0(t,u)
\end{equation}

Let us consider the equations \eqref{null_deriv_nuova1} and \eqref{null_deriv_full} defined for all $u\in\mathbb{R}$, then equations \eqref{eqn:optimal_u_nuova} and  \eqref{eqn:optimal_u_f} can be written as
\[
u^*_t =0\land\hat{u}_t\lor1 \qquad\text{and}\qquad u^{*,f}(t )=\hat{u}^f(t)\lor1 ,
\]
respectively. Since
\[
h_1(t, \pi_{t^-},u) = \frac{v(t,  W(\pi_{t^-}))}{v(t, \pi_{t^-})}h_0(t,u)\ge h_0(t,u) \qquad \forall u\in[0,1],
\]
and both sides are increasing in $u$, in order to have
\[
h_1(t, \pi_{t^-},\hat{u}_t) = 1+\theta = h_0(t,\hat{u}^f(t)),
\]
we must have that $\hat{u}_t\le \hat{u}^f(t)$ (see Figure \ref{img:prop53}), which implies our statement.


\begin{figure}
\centering
\includegraphics[scale=0.4]{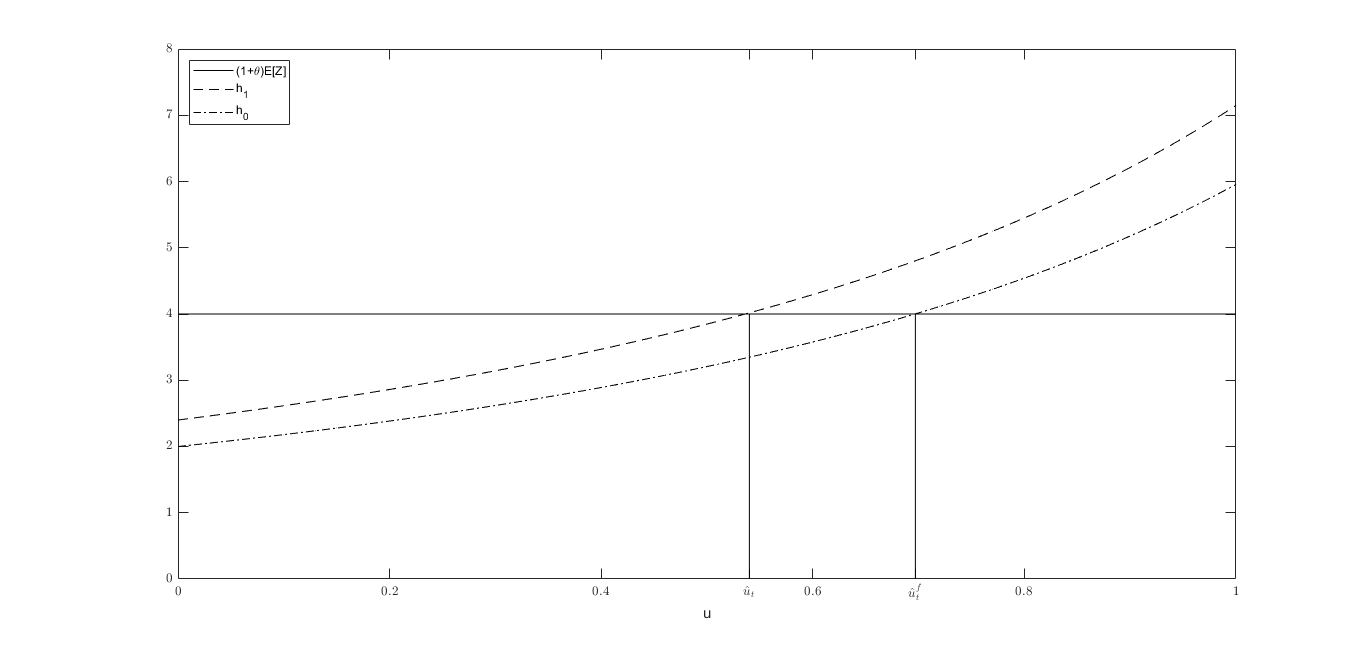}
\caption{Functions $h_1$ and $h_0$ giving $\hat{u}_t$ and $\hat{u}^f(t)$ under the expected value principle.}
\label{img:prop53}
\end{figure}

\textit{Variance premium principle}

Now let us denote 
\[
H(u) \doteq  \mathbb{E}[Z]  +  2 \theta (1-u) \mathbb{E}[Z^2] .
\]
By Corollary \ref{cor_op} we  obtain that under the variance premium principle the optimal reinsurance strategy is given by 
\begin{equation}
\label{eqn:optimal_u_nuova2}
u^*_t =
\begin{cases}
	0 & \text{if $ 2 \theta \mathbb{E}[Z^2]  \leq  (\frac{v(t,  W(\pi_{t^-}))}{v(t, \pi_{t^-})} -1) \mathbb{E}[Z] $}
	\\
	\hat{u}(t,  \pi_{t^-})& \text{ otherwise,}
\end{cases}
\end{equation}
where $\hat{u}$ is the unique solution to 
 \begin{equation}
 \label{null_deriv_nuova2}   
H(u)  = h_1(t, \pi_{t^-},u).
\end{equation}
In the full information case (see \cite[Lemma 4.3]{BC:IME2019}) the optimal strategy is a deterministic function  of time and is given by $u^{*,f}(t )= \hat{u}^f(t)$, where  $\hat{u}^f$ is the unique solution to 

$$H(u) = h_0(t,u).$$
Hence the inequality $u^*_t \leq u^{*,f}(t)$ immediately follows by the same arguments as in the expected value principle case (see Figure \ref{img:prop53vp}).
\begin{figure}
\centering
\includegraphics[scale=0.4]{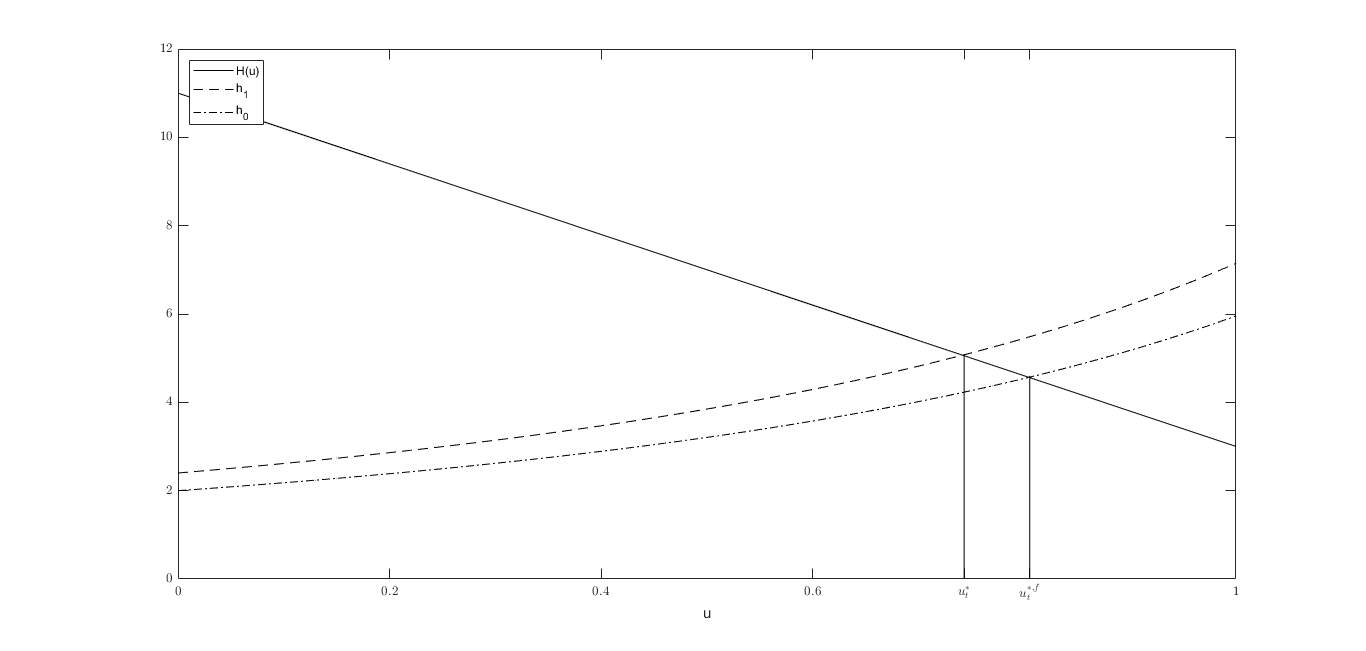}
\caption{Functions $h_1$ and $h_0$ giving $u^*_t$ and $u^{*,f}(t)$ under the variance premium principle.}
\label{img:prop53vp}
\end{figure}
\end{proof}

\begin{prop}\label{P2}
Let the assumptions of this subsection be satisfied and $F_Z(dz)$ admit density.  Under the excess of loss reinsurance and the expected value principle 
the optimal reinsurance strategy is always less or equal to the one under full information.
\end{prop}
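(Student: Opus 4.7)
The plan is to mirror the proof of Proposition~\ref{P1}: derive the interior stationary equation for $\hat u_t$ from Corollary~\ref{cor_op}, identify its full-information counterpart, and apply the key inequality \eqref{dis}. First I would specialize Corollary~\ref{cor_op} to $g(z,u)=z\wedge u$. The density assumption on $F_Z$ makes $g$ differentiable in $u$ for $F_Z$-a.e.\ $z$, with $\partial_u g(z,u)=\mathbbm{1}_{\{z>u\}}$ (see Remark~\ref{diff}); under the expected value principle $\partial_u q^u_t = -(1+\theta)\pi_{t^-}(\lambda)\bar F_Z(u)$. Because $F_Z$ is independent of $y$, the filter-update map $W(t,\pi,z) = W(\pi)$ in \eqref{jump} does not depend on $z$ (see Remark~\ref{remark:MMRM} and Eq.~\eqref{eqn:filterWlast}), so combining \eqref{nuova} with \eqref{eqn:Jvsec5} yields the $z$-free identity
\[
\bigl(\hat J_{t^-}+\hat\Gamma(t,z)\bigr)\,e^{-\eta e^{R(T-t)}(z-u)} \;=\; \hat J_{t^-}\,\frac{v(t,W(\pi_{t^-}))}{v(t,\pi_{t^-})}\,e^{\eta u e^{R(T-t)}}.
\]
Plugging this into \eqref{eqn:null_deriv_process} and dividing through by the common positive factor $\hat J_{t^-}\pi_{t^-}(\lambda)\bar F_Z(u)$, the interior equation collapses to
\[
1+\theta \;=\; \frac{v(t,W(\pi_{t^-}))}{v(t,\pi_{t^-})}\,e^{\eta u e^{R(T-t)}},\qquad
\hat u_t \;=\; \frac{1}{\eta e^{R(T-t)}}\,\ln\!\frac{(1+\theta)\,v(t,\pi_{t^-})}{v(t,W(\pi_{t^-}))}.
\]

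Second, I would derive the full-information analogue by repeating the BSDE reduction in the filtration $\mathbb{G}$ (or by quoting \cite{BC:IME2019}): the $\mathbb{G}$-compensator of $m(dt,dz)$ is $\lambda(Y_{t^-})F_Z(dz)\,dt$, and by the no-common-jumps assumption $Y$ does not move at claim times, so the factor that plays the role of $v(t,W(\pi))/v(t,\pi)$ in partial information collapses to $1$. The full-information stationary equation therefore reduces to $1+\theta = e^{\eta u e^{R(T-t)}}$, giving the deterministic retention $\hat u^f(t) = \frac{1}{\eta e^{R(T-t)}}\ln(1+\theta)$.

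The conclusion follows from \eqref{dis}: since $\lambda(1)\le\lambda(2)\le\cdots$, one has $v(t,W(\pi_{t^-}))\ge v(t,\pi_{t^-})$, hence $(1+\theta)v(t,\pi_{t^-})/v(t,W(\pi_{t^-}))\le 1+\theta$ and, by monotonicity of $\ln$, $\hat u_t\le\hat u^f(t)$. The boundary regimes of Corollary~\ref{cor_op} are easy: $\tilde A_I$ is empty because $I=+\infty$ and, for $\theta>0$, the stationary equation always has a finite root; the set $\tilde A_0$ reduces to $\{(1+\theta)v(t,\pi_{t^-})\le v(t,W(\pi_{t^-}))\}$, on which $u^*_t=0\le \hat u^f(t)$ trivially. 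The principal obstacle is giving a clean independent derivation of the full-information retention level, which requires either re-running the BSDE argument under $\mathbb{G}$ or carefully invoking the exponential-utility separation in \cite{BC:IME2019}; a secondary check is to verify that the non-smooth $g(z,u)=z\wedge u$ fits into Corollary~\ref{cor_op}, handled by the density hypothesis through Remark~\ref{diff}.
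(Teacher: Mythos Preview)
Your proposal is correct and follows essentially the same route as the paper's own proof: specialize the first-order condition \eqref{eqn:null_deriv_process} to $g(z,u)=z\wedge u$ under the expected value principle, use the identity \eqref{nuova} together with the $z$-independence of $W$ to collapse the stationary equation to $1+\theta = \tfrac{v(t,W(\pi_{t^-}))}{v(t,\pi_{t^-})}\,e^{\eta u e^{R(T-t)}}$, obtain the closed-form retention, compare with the deterministic full-information level $\hat u^f(t)=\tfrac{1}{\eta}e^{-R(T-t)}\log(1+\theta)$, and conclude via \eqref{dis}. Your treatment is in fact more careful than the paper's on two points it leaves implicit: the justification of differentiability through Remark~\ref{diff} and the boundary regimes $\tilde A_0,\tilde A_I$.
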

\begin{proof}
Consider the excess-of-loss reinsurance (see Example \ref{example:contracts}) and the expected value principle (see Eq. \eqref{eqn:evp}). In the full information case, see \cite[Proposition 8]{BCrisks}, the optimal reinsurance strategy satisfies
\begin{equation*}
(1 + \theta)(1-  F_Z(u)) = e^{\eta e^{R(T-t)}g(z,u))}  (1-  F_Z(u)),
\end{equation*}
so that it is given by
\begin{equation*}
\label{eqn:excessfull_u*}
u^{*,f}(t)= \frac{1} {\eta} e^{-R(T-t)} \log(1 + \theta).
\end{equation*}
In the partial information framework, taking into account Remark \ref{diff}, Eq. \eqref{eqn:null_deriv_process} reads as
 \begin{equation*}
(1 + \theta) (1-  F_Z(u))
= \frac{v(t,  W(\pi_{t^-}))}{v(t, \pi_{t^-})}  e^{\eta e^{R(T-t)}u} (1-  F_Z(u)),
\end{equation*}
and 
we find out that
$$u^*_t= \frac{1} {\eta} e^{-R(T-t)} \log\Big ( (1 + \theta) \frac{v(t, \pi_{t^-})}{v(t,  W(\pi_{t^-})} \Big ).$$

It is easy to see that $u^*_t \leq u^{*,f}(t)$ by the inequality \eqref{dis}. 
\end{proof}

\section{Conclusions}

This paper extends the existing results on optimal reinsurance in many directions. We introduce a general risk model where both the claims arrival intensity and the claim size distribution are affected by an environmental stochastic factor $Y$, which is modeled as a general Markov process (in \cite{liangbayraktar:optreins} it was a finite state Markov chain and in \cite{BCrisks} a real-valued diffusion process). This model formulation allows the insurer to take into account risk fluctuations. However, it is well known that the insurer has only a partial information at disposal. Namely, she only observes the claims arrival times and the corresponding amount. Hence $Y$ is supposed to be unobservable and  as a consequence claims arrival intensity and the claim size distribution has to be inferred from the observations. Considering general premium and reinsurance contract, we solve the optimization problem characterizing the value process and the optimal strategy in terms of a solution to a BSDE. Our results show that the insurer would react to risk fluctuations by modifying the reinsurance policy. 
Some examples for classical reinsurance agreements are illustrated. By analyzing the effect of safety loading on the optimal strategy we determine the price that the insurer deems reasonable for the reinsurer assuming part of her risks.  Finally, we show that insurer with partial information is more conservative with respect the insurer with complete information.


\section*{Acknowledgements}
The authors are partially supported by the GNAMPA Research Project 2019 (\textit{Problemi di controllo ottimo stocastico con osservazione parziale in dimensione infinita}) of INdAM (Istituto Nazionale  di Alta Matematica). We are also grateful to anonymous referees for their helpful comments.

\section*{Declaration of interest}
None.


\appendix

\section{Filtering with marked point processes observations}

Here, we recall the main results on filtering with marked point processes observations. 
Under Assumption \ref{Y}, the filter can be characterized as the unique strong solution of the so called Kushner-Stratonovich equation. We refer to \cite{ceci:2006} and \cite{cc:AAP2012} for a detailed proof. 

\begin{theorem}[KS-equation]
Under Assumption \ref{Y}, the filter $\pi$ is the unique strong solution to the Kushner-Stratonovich equation,  
for any bounded function $f \in \mathcal{D}^Y$
\begin{equation}
\label{eqn:KSeq_appendix}
d \pi_{t}(f) = \pi_{t} (\mathcal{L}^Y f) dt + \int_0^{+ \infty} w^\pi_{t}(f,z)(m(dt,dz) - \pi_{t^-}(\lambda F_Z(dz)) dt), \quad \pi_0(f) = f(0,y_0),
\end{equation}
where
\begin{equation}
\label{eqn:w^pi}
w^\pi_{t}(f,z) \doteq \frac{d \pi_{t^-}(\lambda F_Z f)} {d \pi_{t^-}(\lambda F_Z ) }(z) - \pi_{t^-}(f) +
 \frac{d\pi_{t^-}({\bar{\cal L}} f)}{d\pi_{t^-}(\lambda F_Z ) }(z).
\end{equation}
 
Here $\mathcal{\bar{L}}$ is an operator which takes into account possible common jump times between $Y$ and $m(dt,dz)$, while ${d \pi_{t^-}(\lambda F_Z f) \over d \pi_{t^-}(\lambda F_Z ) }(z)$ and ${d\pi_{t^-}(\mathcal{\bar{L}}f) \over d\pi_{t^-}(\lambda F_Z ) }(z)$ denote the Radon-Nikodym derivatives of the measures $\pi_{t^-}(\lambda F_Z (dz)f)$ and $\pi_{t^-}(\mathcal{\bar{L}}f (dz))$ with respect to $ \pi_{t^-}(\lambda F_Z(dz))$, respectively.
\end{theorem}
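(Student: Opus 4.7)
My plan is to split the work in two stages: first derive the equation via an innovation/projection argument, then settle uniqueness via a filtered martingale problem. For the derivation I would start from the martingale characterization of $Y$ provided by Assumption \ref{Y}: for every $f\in\mathcal{D}^Y$, the process $M^Y_t := f(t,Y_t)-f(0,y_0)-\int_0^t\mathcal{L}^Y f(s,Y_s)\,ds$ is an $\mathbb{F}^Y$-martingale, hence also a $\mathbb{G}$-martingale. The first step is then to take $\mathbb{F}$-optional projections: since the absolutely continuous part $\int_0^t\mathcal{L}^Y f(s,Y_s)\,ds$ projects to $\int_0^t\pi_s(\mathcal{L}^Y f)\,ds$ (the boundedness of $\mathcal{L}^Y f$ ensured by Assumption \ref{Y} makes the projection well-defined), the process
\[
\mu_t(f)\;\doteq\;\pi_t(f)-f(0,y_0)-\int_0^t\pi_s(\mathcal{L}^Y f)\,ds
\]
is an $\mathbb{F}$-martingale.

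For the second step I would invoke the predictable representation theorem for marked point processes (see \cite{bremaud:pointproc}): because $\mathbb{F}$ is generated by $m(dt,dz)$, every $\mathbb{F}$-martingale is a stochastic integral with respect to the compensated measure $m(dt,dz)-\pi_{t^-}(\lambda F_Z(dz))\,dt$, which we already identified as the $\mathbb{F}$-dual predictable projection of $m$. This gives some predictable integrand $w_t(f,z)$ with
\[
\mu_t(f)\;=\;\int_0^t\!\!\int_0^{+\infty} w_s(f,z)\,\bigl(m(ds,dz)-\pi_{s^-}(\lambda F_Z(dz))\,ds\bigr),
\]
so that substituting into the definition of $\mu_t(f)$ yields exactly the form \eqref{eqn:KSeq_appendix} once $w$ is identified with $w^\pi$.

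The identification step is done by matching jumps. At each jump time $T_n$, $\Delta\pi_{T_n}(f)=w_{T_n}(f,Z_n)$, and Proposition \ref{new} gives
\[
\pi_{T_n}(f)\;=\;\frac{d\pi_{T_n^-}(\lambda F_Z f)}{d\pi_{T_n^-}(\lambda F_Z)}(Z_n),
\]
so $w_{T_n}(f,Z_n)=\frac{d\pi_{T_n^-}(\lambda F_Z f)}{d\pi_{T_n^-}(\lambda F_Z)}(Z_n)-\pi_{T_n^-}(f)$. Since the measure $\pi_{t^-}(\lambda F_Z(dz))(\omega)\,\mathbb{P}(d\omega)\,dt$ charges exactly the jumps of $m$, this identifies $w$ uniquely in the required almost-everywhere sense and matches the first two terms of \eqref{eqn:w^pi}. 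The last term, involving $\bar{\mathcal{L}}$, would arise if one removed the simplifying assumption of no common jumps between $Y$ and $m$: in that case $M^Y$ acquires a predictable compensation when projected onto $\mathbb{F}$ along the jump times of $m$, and repeating the projection argument for this extra piece produces $d\pi_{t^-}(\bar{\mathcal{L}} f)/d\pi_{t^-}(\lambda F_Z)$ as an additional summand in $w^\pi$.

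For uniqueness I would use a filtered martingale problem argument in the spirit of Kurtz--Ocone, as carried out in \cite{ceci:2006} and \cite{cc:AAP2012}. Given any measure-valued c\`adl\`ag $\mathbb{F}$-adapted process $\tilde\pi$ solving \eqref{eqn:KSeq_appendix} with $\tilde\pi_0=\delta_{y_0}$, one constructs on an enlarged probability space a process $\tilde Y$ such that $\tilde\pi_t$ is the $\mathbb{F}$-conditional law of $\tilde Y_t$ and $\tilde Y$ solves the martingale problem for $\mathcal{L}^Y$ starting from $y_0$. The well-posedness of the $\mathcal{L}^Y$-martingale problem assumed in Assumption \ref{Y}, combined with the density of $\mathcal{D}^Y$ in $C_b$, then forces uniqueness in law of $\tilde Y$ and hence $\tilde\pi=\pi$. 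The main obstacle is precisely this uniqueness step: the construction of $\tilde Y$ from $\tilde\pi$ and the verification that it solves the $\mathcal{L}^Y$-martingale problem is delicate and relies on showing that the jump part of $\tilde\pi$ dictated by \eqref{eqn:w^pi} is compatible with conditional jump laws of a candidate Markov process, which is where the algebra hypothesis on $\mathcal{D}^Y$ and the density of $\mathcal{D}^Y$ in $C_b$ become essential.
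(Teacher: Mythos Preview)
The paper does not prove this theorem in-text; it simply refers to \cite{ceci:2006} and \cite{cc:AAP2012}. Your outline follows the same innovation approach used there---optional projection of the $\mathbb{G}$-semimartingale decomposition of $f(t,Y_t)$, martingale representation with respect to the compensated random measure $m(dt,dz)-\pi_{t^-}(\lambda F_Z(dz))\,dt$, then a filtered-martingale-problem argument for uniqueness---so at the structural level you are aligned with what the paper invokes.

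There is, however, a circularity in your identification step. You obtain $\Delta\pi_{T_n}(f)$ by appealing to Proposition~\ref{new}, but in this paper Proposition~\ref{new} is \emph{derived from} the KS-equation: the Appendix states explicitly that ``Proposition~\ref{new} is a direct consequence of Proposition~\ref{ro} and of the strong uniqueness of solution to the Kushner--Stratonovich equation~\eqref{eqn:KSeq_appendix}.'' You therefore cannot use it to establish \eqref{eqn:KSeq_appendix}. In the cited references the integrand $w^\pi$ is identified directly, without any pre-established formula for $\pi_{T_n}(f)$: one writes the $\mathbb{G}$-semimartingale decomposition of $f(t,Y_t)$, computes the $\mathbb{F}$-dual predictable projection of the product of $f(t,Y_{t^-})$ with integrals against $m(dt,dz)$, and reads off the Radon--Nikodym densities $d\pi_{t^-}(\lambda F_Z f)/d\pi_{t^-}(\lambda F_Z)$ (and $d\pi_{t^-}(\bar{\mathcal L}f)/d\pi_{t^-}(\lambda F_Z)$ when common jumps are allowed). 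Only \emph{after} \eqref{eqn:KSeq_appendix} is in place does the jump formula of Proposition~\ref{new} follow, by reading the equation at $t=T_n$. Your uniqueness paragraph, on the other hand, is in line with the argument in \cite{ceci:2006} and \cite{cc:AAP2012}.
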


The filtering equation has a natural recursive structure. In fact, between two consecutive jump times, $t \in (T_{n-1}, T_n)$, the equation reads as:
\begin{equation}
\label{eq_filter}
d \pi_{t}(f) = [ \pi_{t} (\mathcal{L}^0 f) +  \pi_{t}(f) \pi_{t}(\lambda)  - \pi_{t}(\lambda f) ] dt,
\end{equation}
where $\mathcal{L}^0 f \doteq \mathcal{L}^Y f - \mathcal{\bar{L}}f $ and coincides with $\mathcal{L}^Y$ if there are not common jump times between state and observations.

At a jump time $T_n$:
$$ \pi_{T_n}(f) = {d \pi_{T_n^-}(\lambda F_Z f) \over d \pi_{T_n^-}(\lambda F_Z ) }(Z_n) +
 {d\pi_{T_n^-}(\mathcal{\bar{L}}f) \over d\pi_{T_n^-}(\lambda F_Z ) }(Z_n).$$ 
 
Hence $\pi_{T_n}(f)$ is completely determined by the observed data $(T_n, Z_n)$ and by the knowledge of $\pi_{t}$ in the interval $t \in [T_{n-1}, T_n)$.  

Let us observe that between two consecutive jump times the filter solves a non-linear deterministic equation (see Eq. \eqref{eq_filter}). We are able to provide a computable solution by means of a linearized method (see \cite[Lemma 3.1]{cg:2006}). For simplicity, we assume no common jump times between $Y$ and $m(dt, dz)$ in the sequel.

\begin{prop}\label{ro}

Let $\rho^n$ a process with values in the set of positive finite measures on $\mathbb{R}$ solution to the linear equation 
$$d \rho^n_t(f) = \rho^n_t (\mathcal{L}^Y f - \lambda f) dt , \quad  \rho^n_{T_{n-1}}(f) = \pi_{T_{n-1}}(f), \quad  t \in (T_{n-1}, T_n).$$
Then the process 
$$ {\rho^n_t(f) \over  \rho^n_t(1)}, \qquad t \in (T_{n-1}, T_n),$$
solves  Eq. \eqref{eq_filter}.
Moreover the following representation holds
$$\rho^n_t(f) = E_{n-1}[ f(t, Y_t) e^{- \int_s^t \lambda(r, Y_r) dr} ] |_{s= T_{n-1}}, $$
where $E_{n-1}$ denotes the conditional  expectation given  the distribution $Y_{T_{n-1}} $ equal to $\pi_{T_{n-1}}$.
%
\end{prop}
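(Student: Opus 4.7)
\textbf{Proof plan for Proposition \ref{ro}.}

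My plan is to establish the two claims separately: first the normalization $\rho^n_t(f)/\rho^n_t(1)$ solves the between-jumps filtering equation \eqref{eq_filter}, then the Feynman--Kac type representation.

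\emph{Step 1 (normalization solves \eqref{eq_filter}).} Observe first that testing the linear equation against $f \equiv 1$ gives $\mathcal{L}^Y 1 = 0$ and hence
\[
d\rho^n_t(1) = -\rho^n_t(\lambda)\,dt, \qquad \rho^n_{T_{n-1}}(1) = \pi_{T_{n-1}}(1) = 1,
\]
so $\rho^n_t(1) > 0$ on $(T_{n-1},T_n)$ (with an explicit exponential bound using $\lambda \le \Lambda$). Since between jumps the evolution is absolutely continuous and deterministic, the classical quotient rule applied to $\tilde\pi_t(f) \doteq \rho^n_t(f)/\rho^n_t(1)$ yields
\[
d\tilde\pi_t(f) = \frac{\rho^n_t(\mathcal{L}^Y f - \lambda f)}{\rho^n_t(1)}\,dt + \frac{\rho^n_t(f)}{\rho^n_t(1)^2}\,\rho^n_t(\lambda)\,dt = \bigl[\tilde\pi_t(\mathcal{L}^Y f) - \tilde\pi_t(\lambda f) + \tilde\pi_t(f)\,\tilde\pi_t(\lambda)\bigr]\,dt,
\]
which coincides with \eqref{eq_filter} (recalling that under the standing no-common-jumps simplification $\mathcal{L}^0 = \mathcal{L}^Y$). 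The initial condition $\tilde\pi_{T_{n-1}} = \pi_{T_{n-1}}$ is inherited from the initial condition of $\rho^n$.

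\emph{Step 2 (Feynman--Kac representation).} Let $\tilde\rho^n_t(f) \doteq E_{n-1}\bigl[f(t, Y_t) e^{-\int_{T_{n-1}}^t \lambda(r, Y_r)\,dr}\bigr]$. The plan is to show $\tilde\rho^n$ solves the same linear evolution equation as $\rho^n$ and then invoke uniqueness. For $f \in \mathcal{D}^Y$ apply the product rule to the semimartingale $f(t, Y_t) \exp\!\bigl(-\int_{T_{n-1}}^t \lambda(r,Y_r)\,dr\bigr)$: the finite-variation part from the exponential contributes $-\lambda(t, Y_t) f(t,Y_t)$, while the martingale characterization of $\mathcal{L}^Y$ gives $df(t,Y_t) = \mathcal{L}^Y f(t, Y_t)\,dt + dM^Y_t$. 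Taking conditional expectations $E_{n-1}[\cdot]$ (the $\mathbb{F}^Y$-martingale part vanishes by the boundedness of $f$ and the exponential factor) yields
\[
\frac{d}{dt} \tilde\rho^n_t(f) = E_{n-1}\bigl[e^{-\int_{T_{n-1}}^t \lambda(r, Y_r)\,dr}\,(\mathcal{L}^Y f - \lambda f)(t, Y_t)\bigr] = \tilde\rho^n_t(\mathcal{L}^Y f - \lambda f),
\]
with $\tilde\rho^n_{T_{n-1}}(f) = \pi_{T_{n-1}}(f)$.

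\emph{Step 3 (uniqueness).} The difference $\Delta_t(f) \doteq \rho^n_t(f) - \tilde\rho^n_t(f)$ satisfies the same linear equation driven by $\mathcal{L}^Y - \lambda$ with zero initial condition. Well-posedness of the martingale problem for $\mathcal{L}^Y$ on $\mathcal{D}^Y$ (Assumption \ref{Y}), together with $\mathcal{D}^Y$ being an algebra dense in $C_b([0,T]\times\mathbb{R})$, gives uniqueness of the forward evolution equation in the space of finite positive measures: this is the step where the dense-algebra and well-posedness assumptions are essential. Conclude $\rho^n \equiv \tilde\rho^n$, hence the representation.

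\emph{Main obstacle.} Step 1 and the Itô computation in Step 2 are essentially routine under the boundedness of $\lambda$ and $f \in \mathcal{D}^Y \subset C_b$. The delicate point is Step 3: asserting uniqueness for the measure-valued linear equation on $\mathcal{D}^Y$ and extending the identification to a sufficient test class so that the two measures coincide. This is where one has to invoke the martingale problem well-posedness for $\mathcal{L}^Y$ (which controls the $\mathcal{L}^Y f$ term) together with density of $\mathcal{D}^Y$ in $C_b$ to rule out any other solution, and is the place where one typically reduces to the existing filtering literature (e.g.\ \cite{ceci:2006,cc:AAP2012}) rather than redoing the argument from scratch.
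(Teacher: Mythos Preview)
Your proposal is correct, and in fact it supplies more detail than the paper does: the paper states Proposition~\ref{ro} without proof, simply referring to \cite[Lemma~3.1]{cg:2006} for the linearization method. Your Steps~1--3 are precisely the standard argument behind that lemma---the quotient-rule verification that normalization solves the nonlinear equation, the Feynman--Kac identification via It\^o's formula applied to $f(t,Y_t)e^{-\int \lambda}$, and the uniqueness of the measure-valued linear evolution coming from well-posedness of the martingale problem---and you have correctly flagged Step~3 as the only place where real work (density of $\mathcal{D}^Y$ and well-posedness of the martingale problem) is needed rather than routine computation.
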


Finally, Proposition \ref{new} is a direct consequence of Proposition \ref{ro} and of the strong uniqueness of solution to the Kushner-Stratonovich equation \eqref{eqn:KSeq_appendix}.\\
 
In the last part of the section we discuss same special cases.  

\begin{example}[Known jump size distribution and unknown intensity]
\label{example:A1}
Let $F_Z(t, y,dz)=F_Z(dz)$, then the filtering equation \eqref{eq_filter} reduces to
$$d \pi_{t}(f) = \pi_{t} (\mathcal{L}^Y f) dt + {\pi_{t^-}(\lambda f)  - \pi_{t^-}(f) \pi_{t^-}(\lambda) \over \pi_{t^-}(\lambda)} (dN_t - \pi_{t^-}(\lambda) dt),$$
where $N_t = m((0,t] \times [0,+\infty)) = \sum_{n\ge1} \mathbbm{1}_{\{T_n\le t\}}$ is the claims arrival process.
Between two consecutive jump times, $t \in (T_{n-1}, T_n)$:
$$d \pi_{t}(f) = [ \pi_{t} (\mathcal{L}^Y f)  - \pi_{t}(\lambda f)  + \pi_{t}(f) \pi_{t}(\lambda) ] dt,$$
while at a jump time $T_n$:
$$ \pi_{T_n}(f) = W(T_n,\pi_{T_n^-}) \doteq {\pi_{T_n^-}(\lambda f) \over  \pi_{T_n^-}(\lambda)},$$
which coincides with Eq. \eqref{eqn:filter1} in Remark \ref{remark:filterapp}.
\end{example}

\begin{example}[Markov Modulated Risk Model with infinitely many states]
\label{example:A2}
Now we consider the case where $Y$ is a continuous time Markov chain taking values in a discrete  set ${\cal S} = \{1,2, \dots\}$ and $\{a_{ij}\}_{i\in {\cal S}, j \in {\cal S}}$ its generator matrix. 
Here, $a_{ij}>0$, $i \neq j$, gives the intensity of a transition from state $i$ to state $j$, and it is such that $\sum_{j\geq 1, j \neq i}a_{ij} = - a_{ii}$.
Defining the functions $f_i(y):=\mathbbm{1}_{y= i}$, $i \in {\cal S}$, the filter is completely described via the knowledge of 
$\pi_{t}(i):= \pi_{t}(f_i) = P(Y_t = i \mid\mathcal{F}_t)$, $i \in {\cal S}$, because for every function $f$ we have that 
$$\pi_{t}(f) = \sum_{i \in  {\cal S}} f(t,i) \pi_{t}(i).$$
The process $(\pi_{t}(i))_{i \in {\cal S}}$ is characterized via the following system of equations
\begin{equation}\label{system}
d \pi_{t}(i) =  \sum_{j \in  {\cal S}} a_{ji} \pi_{t}(j)dt + \int_0^{+ \infty} w^\pi_{t}(i,z)(m(dt,dz) - \sum_{j \in  {\cal S}} \lambda(t,j)  F_Z(t, j, dz)\pi_{t^-}(j) dt), \quad i \in {\cal S},
\end{equation}
where
$$
w^\pi_{t}(i,z) = { d(\lambda(t,i)  F_Z(t, i, dz)\pi_{t^-}(i) ) \over 
 d( \sum_{j \in  {\cal S}} \lambda(t,j)  F_Z(t, j, dz) \pi_{t^-}(j)) }(z) - \pi_{t^-}(i),
$$
and we deduce Eq. \eqref{eqn:filter2} in Remark \ref{remark:filterapp}.\\
When $F_Z(t, i, dz)$ admits  density $f_Z(t,i,z)$, $ i \in {\cal S}$, it simplifies to
$$
w^\pi_{t}(i,z) = { \lambda(t,i)  f_Z(t, i, z)\pi_{t^-}(i)  \over 
 \sum_{j \in  {\cal S}} \lambda(t,j)  f_Z(t, j, z) \pi_{t^-}(j) } - \pi_{t^-}(i).
$$
In particular when ${\cal S}$ is a finite set, the system \eqref{system} is finite. 
This case has been considered in \cite{liangbayraktar:optreins}, with the simplification of $ \lambda(t,i)$ and $f_Z(t, j, z)$ not dependent on time.
\end{example}

\begin{example}[Markov Modulated Risk Model with known jump size distribution and unknown intensity]
\label{example:A3}
In the special case where $F_Z(t, y,dz)=F_Z(dz)$, the system \eqref{system} reduces to
\begin{equation}\label{system1}
d \pi_{t}(i) =  \sum_{j \in  {\cal S}} a_{ji} \pi_{t}(j)dt + \Big [    { \lambda(t,i)  \pi_{t^-}(i) \over 
\sum_{j \in  {\cal S}} \lambda(t,j)  \pi_{t^-}(j) }    -  \pi_{t^-}(i) \Big ] (dN_t - \sum_{j \in  {\cal S}} \lambda(t,j) \pi_{t^-}(j) dt), \quad i \in {\cal S}.
\end{equation}
Between two consecutive jump times, $t \in (T_{n-1}, T_n)$:
$$d \pi_{t}(f) = [  \sum_{j \in  {\cal S}} a_{ji} \pi_{t}(j) - \lambda(t,i)  \pi_{t}(i)   + \pi_{t}(i) \sum_{j \in  {\cal S}} \lambda(t,j) \pi_{t}(j) ] dt,$$
at a jump time $T_n$:
$$ \pi_{T_n}(i) = W_i(T_n, \pi_{T_n^-}) \doteq {  \lambda(T_n,i) \pi_{T_n^-}(i)   \over  \sum_{j \in  {\cal S}} \lambda(T_n,j) \pi_{T_n^-}(j) }.$$
This latter formula provides Eq. \eqref{eqn:filter3} in Remark \ref{remark:filterapp}.\\
In particular when ${\cal S}$ is a finite set, the system \eqref{system1}  is finite. 
\end{example}

\bibliographystyle{apalike}
\bibliography{biblio}

\begin{thebibliography}{}

\bibitem[Brachetta and Ceci, 2019a]{BCrisks}
Brachetta, M. and Ceci, C. (2019a).
\newblock Optimal excess-of-loss reinsurance for stochastic factor risk models.
\newblock {\em Risks}, 7(2).

\bibitem[Brachetta and Ceci, 2019b]{BC:IME2019}
Brachetta, M. and Ceci, C. (2019b).
\newblock Optimal proportional reinsurance and investment for stochastic factor
  models.
\newblock {\em Insurance: Mathematics and Economics}, 87:15 -- 33.

\bibitem[Br\'emaud, 1981]{bremaud:pointproc}
Br\'emaud, P. (1981).
\newblock {\em Point Processes and Queues. Martingale dynamics}.
\newblock Springer-Verlag.

\bibitem[Bäuerle and Rieder, 2007]{bauerle2007}
Bäuerle, N. and Rieder, U. (2007).
\newblock Portfolio optimization with jumps and unobservable intensity process.
\newblock {\em Mathematical Finance}, 17(2):205--224.

\bibitem[Ceci, 2004]{ceci:2011}
Ceci, C. (2004).
\newblock Optimal investment problems with marked point stock dynamics.
\newblock In {\em Progress in Probability}, volume~63, pages 385--412.
  BirkhŁauser Verlag Basel/Switzerland.

\bibitem[Ceci, 2006]{ceci:2006}
Ceci, C. (2006).
\newblock Risk minimizing hedging for a partially observed high frequency data
  model.
\newblock {\em Stochastics}, 78(1):13--31.

\bibitem[Ceci, 2012]{ceci:2012IJTAF}
Ceci, C. (2012).
\newblock Utility maximization with intermediate consumption under restricted
  information for jump market models.
\newblock {\em International Journal of Theoretical and Applied Finance},
  15(06):1250040.

\bibitem[Ceci and Colaneri, 2012]{cc:AAP2012}
Ceci, C. and Colaneri, K. (2012).
\newblock Nonlinear filtering for jump diffusion observations.
\newblock {\em Advances in Applied Probability}, 44(3):678--701.

\bibitem[Ceci and Colaneri, 2014]{cc:AMO2014}
Ceci, C. and Colaneri, K. (2014).
\newblock The {Z}akai equation of nonlinear filtering for jump-diffusion
  observations: existence and uniqueness.
\newblock {\em Applied Mathematics and Optimization}, 69(1):47--82.

\bibitem[Ceci and Gerardi, 2006]{cg:2006}
Ceci, C. and Gerardi, A. (2006).
\newblock A model for high frequency data under partial information: a
  filtering approach.
\newblock {\em International Journal of Theoretical and Applied Finance},
  9(4):555--576.

\bibitem[Ceci and Gerardi, 2011]{ceci:2011DEF}
Ceci, C. and Gerardi, A. (2011).
\newblock Utility indifference valuation for jump risky assets.
\newblock {\em Decisions in Economics and Finance}, 34(2):85--120.

\bibitem[Confortola and Fuhrman, 2013]{Confortola2013}
Confortola, F. and Fuhrman, M. (2013).
\newblock Backward stochastic differential equations and optimal control of
  marked point processes.
\newblock {\em SIAM J. Control and Optimization}, 51:3592--3623.

\bibitem[Delong, 2013]{delong2013}
Delong, L. (2013).
\newblock {\em Backward stochastic differential equations with jumps and their
  actuarial and financial applications. BSDEs with jumps}.

\bibitem[El~Karoui et~al., 1997]{karoui:1997}
El~Karoui, N., Peng, S., and Quenez, M.~C. (1997).
\newblock Backward stochastic differential equations in finance.
\newblock {\em Mathematical Finance}, 7(1):1--71.

\bibitem[Ethier and Kurtz, 1986]{ek:1986}
Ethier, S. and Kurtz, T. (1986).
\newblock {\em Markov Processes: Characterization and Convergence}.
\newblock John Wiley \& Sons.

\bibitem[Grandell, 1991]{grandell:risk}
Grandell, J. (1991).
\newblock {\em Aspects of risk theory}.
\newblock Springer-Verlag.

\bibitem[Irgens and Paulsen, 2004]{irgens_paulsen:optcontrol}
Irgens, C. and Paulsen, J. (2004).
\newblock Optimal control of risk exposure, reinsurance and investments for
  insurance portfolios.
\newblock {\em Insurance: Mathematics and Economics}, 35:21--51.

\bibitem[Liang and Bayraktar, 2014]{liangbayraktar:optreins}
Liang, Z. and Bayraktar, E. (2014).
\newblock Optimal reinsurance and investment with unobservable claim size and
  intensity.
\newblock {\em Insurance: Mathematics and Economics}, 55:156--166.

\bibitem[Lim and Quenez, 2011]{lim2011}
Lim, T. and Quenez, M.-C. (2011).
\newblock Exponential utility maximization in an incomplete market with
  defaults.
\newblock {\em Electron. J. Probab.}, 16:1434--1464.

\bibitem[Liu and Ma, 2009]{liuma:optreins}
Liu, B. and Ma, J. (2009).
\newblock Optimal reinsurance/investment problems for general insurance models.
\newblock {\em The Annals of Applied Probability}, 19:1495–1528.

\bibitem[Rolski et~al., 1999]{rolski:insurancefin}
Rolski, T., Schmidli, H., V., S., and Teugels, J. (1999).
\newblock {\em Stochastic processes for insurance and finance}.
\newblock Wiley.

\bibitem[Schmidli, 2008]{schmidli:control}
Schmidli, H. (2008).
\newblock {\em Stochastic Control in Insurance}.
\newblock Springer-Verlag.

\bibitem[Schmidli, 2018]{schmidli:2018risk}
Schmidli, H. (2018).
\newblock {\em Risk Theory}.
\newblock Springer Actuarial. Springer International Publishing.

\bibitem[Young, 2006]{young:premium_princ}
Young, V.~R. (2006).
\newblock Premium principles.
\newblock {\em Encyclopedia of Actuarial Science}, (3).

\end{thebibliography}

\end{document}